\documentclass[superscriptaddress,nofootinbib,notitlepage,aps,pra,10pt]{revtex4-2}
\usepackage{graphicx}
\usepackage{amsthm}
\usepackage{thmtools}
\usepackage{mathtools}
\usepackage{amsmath}
\usepackage{amssymb}
\usepackage{bm}
\usepackage[table,x11names]{xcolor}
\usepackage[normalem]{ulem}
\usepackage[colorlinks]{hyperref}
\hypersetup{
	pdfstartview={FitH},
	pdfnewwindow=true,
	colorlinks=true,
	linkcolor=blue,
	citecolor=blue,
	filecolor=blue,
	urlcolor=blue}
\usepackage{physics}
\usepackage[capitalise]{cleveref}
\usepackage{float}

\newcommand{\thm}[1]{\hyperref[thm:#1]{Theorem~\ref*{thm:#1}}}
\newcommand{\defn}[1]{\hyperref[defn:#1]{Definition~\ref*{defn:#1}}}
\newcommand{\lem}[1]{\hyperref[lem:#1]{Lemma~\ref*{lem:#1}}}
\newcommand{\fig}[1]{\hyperref[fig:#1]{Figure~\ref*{fig:#1}}}
\newcommand{\tab}[1]{\hyperref[tab:#1]{Table~\ref*{tab:#1}}}
\renewcommand{\sec}[1]{\hyperref[sec:#1]{Section~\ref*{sec:#1}}}
\newcommand{\app}[1]{\hyperref[app:#1]{Appendix~\ref*{app:#1}}}
\newcommand{\nn}{\nonumber \\}
\newcommand{\append}[1]{\hyperref[append:#1]{Appendix~\ref*{append:#1}}}
\newcommand{\con}{\chi}
\newcommand{\kernel}{\Sigma}
\newcommand{\processor}{P}
\definecolor{green2}{rgb}{0.75, 1, 0.75}
\definecolor{green1}{rgb}{0.2, 1, 0.2}
\definecolor{blue2}{rgb}{0.8, 1, 1}
\definecolor{blue1}{rgb}{0.5, 0.7, 1}

\newtheorem{theorem}{Theorem}
\newtheorem{definition}[theorem]{Definition}
\newtheorem{lemma}[theorem]{Lemma}
\newtheorem{corollary}[theorem]{Corollary}

\allowdisplaybreaks

\newcommand{\MQ}{\affiliation{
School of Mathematical and Physical Sciences,
Macquarie University, Sydney, NSW 2109, AU} }

\newcommand{\UTS}{\affiliation{
Centre for Quantum Software and Information,
University of Technology Sydney, Sydney, NSW 2007, AU}}

\newcommand{\UTSC}{\affiliation{
Centre for Quantum Computation and Communication Technology,
University of Technology Sydney, Sydney, NSW 2007, AU}}

\newcommand{\tfNSW}{\affiliation{
Quantum for New South Wales, Sydney, NSW 2000, AU}}

\newcommand{\UM}{\affiliation{
Joint Center for Quantum Information and Computer Science (QuICS), University of Maryland, College Park, Maryland 20742, USA}}

\newcommand{\FAU}{\affiliation{
Department Physik, Friedrich-Alexander-Universit\"at Erlangen-N\"urnberg, Staudtstraße 7, 91058 Erlangen, Germany}}

\begin{document}

\title{Selection and improvement of product formulae for best performance of quantum simulation}
\date{\today}
\author{Mauro E.~S.~Morales}\UTSC\UTS\UM
\author{Pedro C.~S.~Costa}\MQ\tfNSW
\author{Giacomo Pantaleoni}\MQ
\author{Daniel K.~Burgarth}\MQ \FAU
\author{Yuval R.~Sanders}\UTS \MQ 
\author{Dominic W.~Berry}\MQ

\begin{abstract}
Quantum algorithms for simulation of Hamiltonian evolution are often based on product formulae.
The fractal methods give a systematic way to find arbitrarily high-order product formulae, but result in a large number of exponentials.
On the other hand, product formulae with fewer exponentials can be found by numerical solution of simultaneous nonlinear equations.
It is also possible to reduce the cost of long-time simulations by processing, where a kernel is repeated and a processor need only be applied at the beginning and end of the simulation.
In this work, we found thousands of new product formulae, and numerically tested these formulae, together with many formulae from prior literature.
We provide methods to fairly compare product formulae of different lengths and different orders.
For the case of 8th order, we have found new product formulae with exceptional performance, about two orders of magnitude better accuracy than prior work, both in the processed and non-processed cases.
The processed product formula provides the best performance due to being shorter than the non-processed product formula.
It outperforms all other tested product formulae over a range of many orders of magnitude in system parameters $T$ (time) and $\epsilon$ (allowable error).
That includes reasonable combinations of parameters to be used in quantum algorithms, where the size of the simulation is large enough to be classically intractable, but not so large it takes an impractically long time on a quantum computer.
\end{abstract}

\maketitle
\tableofcontents

\section{Introduction}
The Lie-Trotter product formula is commonly used in quantum algorithms for Hamiltonian simulation, where one can approximate the Hamiltonian evolution of $H=A+B$ in terms of exponentials of $A$ and $B$ when these operators do not commute.
For short time, a standard approximation is the second-order symmetric formula $S_2(t)=e^{-iAt/2}e^{-iBt}e^{-iAt/2}$, which satisfies $e^{-iHt}=S_2(t)+\mathcal{O}(t^3)$.
More generally, the error in an order $k$ formula is $\mathcal{O}(t^{k+1})$.
Longer times are simulated by breaking the evolution into many repetitions of a short time.
The number of repetitions needed is reduced with the order, motivating the search for higher-order product formulae.
A systematic method to produce arbitrarily high order formulae is the fractal method \cite{Creutz1989,Suzuki1990,Yoshida1990,Suzuki5}, which has found applications in Hamiltonian simulation \cite{Berry2007}.
The first explicit use of product formulae for quantum simulation was given in \cite{Lloyd1996universal}, applying it for quantum evolution under local Hamiltonians. Later work considered the broader class of sparse Hamiltonians \cite{Aharonov2003adiabatic} and higher orders \cite{Berry2007}, as well as methods beyond product formulae \cite{Berry2012blackbox,Berry2014exponential,Berry2015nearlyoptimal,Low2019spectralnorm}.

Recent work has shown that despite its simplicity, the Lie-Trotter product formula can compete with other Hamiltonian simulation algorithms due to the low error that it achieves in practice \cite{Childs2022trotter}.
Methods based on linear combinations of unitaries \cite{Childs2012LCU,Berry2015} or quantum signal processing \cite{Low2017} give complexity logarithmic in the inverse error, but the error is not required to be extremely small, meaning those methods do not provide as large an advantage as might be expected.
Product formula error bounds can be further improved by considering particular physical systems \cite{Babbush2015chemical,Childs2018speedup,Su2021nearlytight} or leveraging randomisation \cite{Zhang2012randomized,Campbell2019random,Childs2019fasterquantum}.
Moreover, Trotter formulae are expected to be relevant for both noisy intermediate-scale quantum (NISQ) computation and fault-tolerant computation.
It is then of great importance to seek efficient implementations of product formulae as it can have a great impact on the efficiency of Hamiltonian simulation algorithms in practice.

The downside of the fractal methods to generate higher-order formulae is that the number of exponential operators to implement it grows very rapidly.
Fractal product formulae are usually assumed in quantum computing, but they can be greatly improved upon.
An alternative method by Yoshida \cite{Yoshida1990} can be used to obtain product formulae with a smaller number of exponentials.
Similar to fractal formulae, they are given as a product of $S_2$ for different time intervals, but in contrast to the fractal approach there is not an explicit analytic form for the higher-order formulae.
Instead one needs to derive and solve a complicated set of simultaneous nonlinear polynomial equations.

In classical computing, product formulae are equivalent to symplectic integrators for numerical solution of differential equations.
Since the initial work of Yoshida, there have been many results on better integrators of many different orders \cite{Calvo93,Suzuki93,McLachlan95,Kahan1997integrators,Tsitouras1999,McLachlan2002,Hairer2006,Sofroniou2005integrators,Blanes2008summary,Blanes2006processing,Blanes2013highorder}.
See Ref.~\cite{Blanes_Casas_Murua_2024} for a comprehensive review of product formulae / integrators in the literature.
In particular, Refs.~\cite{McLachlan95} and \cite{Kahan1997integrators} gives what appear to be the optimal 8th order product formulae with 15 and 17 stages (the number of $S_2$ in the product formula).
Reference~\cite{Sofroniou2005integrators} gives improved 8th order product formulae with 19 and 21 stages.
For 10th order, Ref.~\cite{Sofroniou2005integrators} provides solutions with 31, 33, and 35 stages.
These solutions from 2005 in Ref.~\cite{Sofroniou2005integrators} appear to be the best in prior work for 8th and 10th order (see Ref.~\cite{Blanes_Casas_Murua_2024}).

Another approach is that of processed product formulae \cite{Lopez96, butcher1996number, mclachlan1996more, wisdom1996symplectic, blanes1999symplectic, blanes2001high,  Blanes2006processing, blanes2024families}.
Instead of the product formula being symmetric, it is of the form $P\Sigma P^{-1}$ for \emph{kernel} $\Sigma$ and \emph{processor} $P$.
The $P$ and $P^{-1}$ cancel when using the product of many of these for evolution over a long time, so the cost is dominated by the number of stages in the kernel.
Since the kernel has fewer restrictions on it, it can have fewer stages than a normal symmetric product formula.
According to Ref.~\cite{Blanes_Casas_Murua_2024}, the best 8th order product formula with processing is from 2006 \cite{Blanes2006processing}.

The objective of this work is to find improved product formulae and compare their performance in a consistent manner.
We show that the performance of product formulae is better quantified by the error in the eigenvalues, rather than the spectral-norm error as usually considered in prior work.
This is because it is the eigenvalue error that dominates the error for evolution at longer times.
We also derive a method to consistently compare the performance of product formulae with different orders and numbers of exponentials.

We use these methods to compare between the performance of our product formulae, as well as to compare to prior product formulae in the literature.
By our numerical search, we found hundreds of thousands of solutions at 8th order, including many that outperform those previously reported in the literature.
Our best 8th order product formula without processing improves over the best prior work by a factor of about 100, and our best 8th order \emph{processed} product formula improves over the best prior processed product formulae by a factor of nearly 300.
Our best product formula without processing has slightly lower error, but our product formula \emph{with} processing provides better performance due to being shorter.
For 10th order, one solution from Ref.~\cite{Sofroniou2005integrators} provides extremely high accuracy, a factor of nearly 300 times better than any others in prior work.
We provide a detailed numerical comparison of product formulae in \cref{sec:comparison-formulae}.

When comparing product formulae of different orders, it is better to use higher-order formulae for larger values of $T/\epsilon$, where $T$ is the total evolution time and $\epsilon$ is the required precision.
For smaller $T/\epsilon$ it is better to use lower-order formulae, but as it is increased there are threshold values beyond which it is optimal to use higher-order formulae.
We derive methods for determining these thresholds, and show that our 8th order product formula is best for $T/\epsilon$ from around $10^7$ to $10^{16}$, which includes the range of typical values for quantum chemistry applications.
This means that the best 8th order product formula we have found in this work will be best suited to real applications.

The upper threshold means $T/\epsilon\gtrsim 10^{16}$ would be needed for 10th order product formulae to outperform our 8th order product formula, which is unrealistically large.
Although the 10th order solution from Ref.~\cite{Sofroniou2005integrators} provides high accuracy, it is not competitive for quantum simulation due to its greater length.
It is more than twice as long as our processed 8th order formula, so for a similar number of exponentials the 8th order product formula can use a time step half as long, resulting in much better performance for realistic values of the parameters.
The lower threshold means $T/\epsilon\lesssim 10^{7}$ is needed for 6th order to outperform 8th order.
Accurate estimation of this threshold requires testing of the product formulae for larger time steps.
It is possible to reduce this threshold by adjusting the 8th order product formula to provide better performance for larger time steps.

The organisation of this work is as follows. In Section \ref{sec:background} we give a pedagogical explanation of the product formulae and the method of solution.
First, we define product formulae and introduce the fractal methods for generating higher-order product formulae in Section \ref{sec:background/product_formulae}, then in Section \ref{sec:yoshidas-method} we give a summary of Yoshida's method,
then we explain processors for product formulae in \cref{sec:other-high},
more general types of product formulae in \cref{sec:general},
and the Taylor series method in \cref{sec:Taylor}. In Sections \ref{sec:optimisation-8} and \ref{sec:optimisation-10} we present the results regarding new product formulae.
In Section \ref{sec:comparison-method} we explain the method for comparing product formulae, then in Section \ref{sec:comparison-formulae} we give the comparison of the product formulae based on numerical testing.
We finish in Section \ref{sec:discuss} by discussing what the results mean for choosing appropriate product formulae for quantum simulation, as well as searching for better-performing product formulae.

\section{Background}
\label{sec:background}

In this section, we review the background and methods for product formulae.
Readers who are familiar with the methods may wish to skip to Section \ref{sec:search} where we present new results.
See also Ref.~\cite{Blanes_Casas_Murua_2024} for a more complete recent review of this topic.
We begin by defining product formulae and the Baker-Campbell-Hausdorff formula, then we introduce fractal methods and Yoshida's method to obtain higher-order formulae, and describe the processed product formulae and the Taylor series method.

\subsection{Product formulae}
\label{sec:background/product_formulae}

It is well known that, for any non-commuting operators $X$ and $Y$,
\begin{equation}\label{eq:order1-prodformula}
    \exp((X + Y) t) = \exp(Xt) \exp(Yt) + \order{t^2}.
\end{equation}
where we have absorbed the $-i$ factor used in quantum simulation into $X$ and $Y$. The above equation demonstrates that the exponential of a sum of two operators is, to first order, equal to the product of the exponential of those operators.
The above equation is often referred as a `first-order product formula'.
Higher-order terms can be computed via the Baker-Campbell-Haussdorff (BCH) formula.

\begin{theorem}[Baker-Campbell-Haussdorff formula \cite{BLANES2004bch}]\label{thm:BCH}
Let $X$ and $Y$ be any operators such that $\norm{X}+\norm{Y}<\ln{2}$. We have for an operator $Z$ that $\exp(X)\exp(Y) = \exp(Z)$, with
\begin{equation}
\label{eq:BCH_formula}
    Z = \sum_{n=1}^{\infty} \frac{(-1)^{n-1}}{n} \sum_{\substack{r_1+s_1>0\\ \vdots\\ r_n +s_n>0}} \frac{[X^{r_1},Y^{s_1},\cdots X^{r_n},Y^{s_n}]}{\left(\sum_{j=1}^{n}r_i+s_i\right)\prod_{i=1}^{n}r_i! s_i!},
\end{equation}
where 
$$[X^{r_{1}},Y^{s_{1}},\dotsm X^{r_{n}},Y^{s_{n}}]=[\underbrace {X,[X,\dotsm [X} _{r_{1}},[\underbrace {Y,[Y,\dotsm [Y} _{s_{1}},\,\dotsm \,[\underbrace {X,[X,\dotsm [X} _{r_{n}},[\underbrace {Y,[Y,\dotsm Y} _{s_{n}}]]\dotsm ]].$$
\end{theorem}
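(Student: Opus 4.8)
The plan is to prove the identity first at the purely formal level, as a statement about power series in two non-commuting indeterminates $X,Y$, and to invoke the hypothesis $\norm{X}+\norm{Y}<\ln 2$ only at the end, to guarantee convergence of the operator series involved and to license the formal manipulations for genuine operators.

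First I would expand $\exp(X)\exp(Y)=\sum_{r,s\ge0}X^rY^s/(r!\,s!)$ and set $u:=\exp(X)\exp(Y)-1=\sum_{r+s>0}X^rY^s/(r!\,s!)$, whose lowest-degree part is $X+Y$. Since $u$ has no constant term, the formal logarithm $\log(1+u)=\sum_{n\ge1}\frac{(-1)^{n-1}}{n}u^n$ is well defined degree by degree; substituting the series for $u$ and multiplying out gives
\begin{equation*}
Z=\sum_{n=1}^{\infty}\frac{(-1)^{n-1}}{n}\sum_{\substack{r_1+s_1>0\\ \vdots\\ r_n+s_n>0}}\frac{X^{r_1}Y^{s_1}\cdots X^{r_n}Y^{s_n}}{\prod_{i=1}^{n}r_i!\,s_i!}.
\end{equation*}
This is already the claimed expression, except that each word $X^{r_1}Y^{s_1}\cdots X^{r_n}Y^{s_n}$ appears as an ordinary associative product rather than as the nested commutator $[X^{r_1},Y^{s_1},\ldots,X^{r_n},Y^{s_n}]$, and the weight $1/\sum_i(r_i+s_i)$ is missing. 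Turning the former expression into the latter is the heart of the argument.

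Two facts close the gap. The first is that $Z$ is a \emph{Lie series}: its homogeneous component $Z_m$ of total degree $m$ in $X,Y$ lies in the span of iterated commutators of $X$ and $Y$. The slickest route is the Friedrichs criterion in the tensor Hopf algebra — $\exp(X)\exp(Y)$ is group-like, hence its logarithm is primitive, and primitive elements are exactly the Lie elements — but one can also establish this by induction on $m$. The second is the Dynkin–Specht–Wever lemma: the linear map $\phi$ sending a word $a_1a_2\cdots a_m$ to the right-nested bracket $[a_1,[a_2,[\cdots,[a_{m-1},a_m]\cdots]]]$ acts as multiplication by $m$ on degree-$m$ Lie elements. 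Hence $Z_m=\tfrac1m\phi(Z_m)$, and applying $\phi$ to the displayed expansion of $Z_m$ term by term replaces each word $X^{r_1}Y^{s_1}\cdots X^{r_n}Y^{s_n}$ — which has degree $m=\sum_i(r_i+s_i)$ — by the nested commutator $[X^{r_1},Y^{s_1},\ldots,X^{r_n},Y^{s_n}]$ of the theorem while supplying the factor $1/\sum_i(r_i+s_i)=1/m$. Summing over $m$ yields \eqref{eq:BCH_formula}.

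For the analytic statement I would note $\norm{u}\le e^{\norm{X}}e^{\norm{Y}}-1\le e^{\norm{X}+\norm{Y}}-1<e^{\ln 2}-1=1$, so $\sum_{n\ge1}\frac{(-1)^{n-1}}{n}u^n$ converges absolutely to $Z=\log\!\big(\exp(X)\exp(Y)\big)$; grouping the terms of this operator series by total degree in $X,Y$ and applying the algebraic identity above within each degree expresses the limit as the sum in \eqref{eq:BCH_formula}, read as a sum over increasing total degree. I expect the genuine obstacle to be the first algebraic fact — that $Z$ is a Lie element and not merely an associative power series — since the word-to-commutator replacement carrying the weight $1/m$ is legitimate only on Lie elements; the logarithm expansion, the Dynkin–Specht–Wever step, and the convergence estimate are comparatively routine bookkeeping.
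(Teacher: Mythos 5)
The paper does not prove this statement at all: Theorem~1 is imported background, quoted with a citation to Blanes and Casas \cite{BLANES2004bch}, and the body of the paper only ever uses it as a black box (e.g.\ to derive the symmetric BCH expansion). So there is no in-paper proof to compare against; what you have written is the standard Dynkin-type proof, and it is essentially correct. Your three ingredients are exactly the classical ones: the formal expansion of $\log\bigl(\exp(X)\exp(Y)\bigr)$ into associative words, the Friedrichs/primitivity argument showing each homogeneous component $Z_m$ is a Lie element, and the Dynkin--Specht--Wever lemma to convert each degree-$m$ word into the right-nested bracket while producing the factor $1/m=1/\sum_i(r_i+s_i)$; together these yield precisely Eq.~\eqref{eq:BCH_formula}. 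Your convergence estimate $\norm{u}\le e^{\norm{X}+\norm{Y}}-1<1$ is also the correct source of the $\ln 2$ hypothesis. One point deserves the care you partly gave it: under $\norm{X}+\norm{Y}<\ln 2$ you have absolute convergence of the \emph{word-form} double series, which licenses regrouping by total degree and then applying the (finite, exact) Dynkin--Specht--Wever identity within each degree; but the crude bound $\norm{[X^{r_1},\dots,Y^{s_n}]}\le 2^{m-1}\prod_i\norm{X}^{r_i}\norm{Y}^{s_i}$ would give absolute convergence of the commutator-form series only on a smaller domain, so the series in \eqref{eq:BCH_formula} should indeed be read as summed by increasing total degree (as you state), rather than as an unconditionally rearrangeable sum in the displayed order. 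With that reading made explicit, your argument is complete and is the proof one would find in the cited literature.
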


The standard second-order symmetric product formula is as given in the definition below.

\begin{definition}[Symmetric product formula of order two]\label{def:symmetric-order2}
Let $X$ and $Y$ be non-commuting operators and let $t$ be a real variable. Define
\begin{equation}
\label{eq:sym_prod_formula_order_2}
S_2 (t) := \exp (\frac{1}{2}Xt) \exp (Yt) \exp (\frac{1}{2}Xt).
\end{equation}
\end{definition}

The operators $X$ and $Y$ used in the definition of $S_2$ should always be clear from context.
More generally, when there is a sum of $X_j$, the product formula is
\begin{equation}\label{eq:S2def}
S_2 (t) := \left[ \prod_{j=1}^J \exp (\frac{1}{2}X_jt)\right] \left[ \prod_{j=J}^1 \exp (\frac{1}{2}X_jt) \right] .
\end{equation}
Products are ordered with the starting index on the right and the final one on the left, so for $J=2$ terms the expression in the definition is obtained.
This form of the product formula can be used for Hamiltonians that are a product of any number of terms.
The higher-order product formulae considered in this work are constructed from products of $S_2$, and due to this method of construction these product formulae are suitable for Hamiltonians that are a sum of any number of terms.
The corollary below describes the form of the error terms in the symmetric product formula, and also implies that it is second order.

\begin{corollary}[Symmetric BCH formula \cite{Yoshida1990}]\label{cor:symmetric_BCH}
Let $X$ and $Y$ be any operators such that $\norm{X}+\norm{Y}<\ln{2}$ and let $t$ be a real variable. Define $Z$ such that $S_2 (t) = \exp (Z)$.
Then there is a sequence $\alpha_\ell$ consisting of linear combinations of $\ell$-term commutators in $X$ and $Y$ such that
\begin{equation}
  Z = \sum_{\ell=1}^\infty \alpha_\ell t^\ell.
\end{equation}
Moreover, $\alpha_\ell \equiv 0$ whenever $\ell$ is even.
\end{corollary}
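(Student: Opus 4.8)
The plan is to apply the BCH formula (\thm{BCH}) twice to the three-factor product defining $S_2(t)$, and then exploit the symmetry $S_2(-t) = S_2(t)^{-1}$ to kill the even-order terms. First I would set $X' = \tfrac12 Xt$ and $Y' = Yt$, note that the hypothesis $\norm{X}+\norm{Y}<\ln 2$ guarantees (for $|t|$ small enough, or after a trivial rescaling remark) that the norm conditions needed to apply BCH repeatedly are met, and write $\exp(X')\exp(Y') = \exp(W)$ with $W$ given by the BCH series. Each term of that series is a multiple of a $\ell$-term nested commutator in $X$ and $Y$ carrying a factor $t^{\ell}$, so $W = \sum_{\ell \ge 1} w_\ell t^\ell$ with $w_\ell$ a linear combination of $\ell$-term commutators (the single-term "commutator" at $\ell=1$ being just $\tfrac12 X + Y$). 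Then I would apply BCH a second time to $\exp(W)\exp(X')$; since $X'$ is itself a multiple of $t$ times a one-term commutator, the resulting $Z$ with $S_2(t)=\exp(Z)$ again has the form $Z = \sum_{\ell\ge1}\alpha_\ell t^\ell$, each $\alpha_\ell$ a linear combination of $\ell$-term commutators in $X,Y$. This establishes the claimed series form; convergence for small $\norm{X}+\norm{Y}$ follows from the convergence of the BCH series, which is why the same norm hypothesis is imposed.

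The second and essential step is the vanishing of $\alpha_\ell$ for even $\ell$. The key observation is that $S_2$ is \emph{symmetric} (time-reversible): directly from \defn{symmetric-order2},
\begin{equation}
S_2(-t) = \exp(-\tfrac12 Xt)\exp(-Yt)\exp(-\tfrac12 Xt) = \bigl[\exp(\tfrac12 Xt)\exp(Yt)\exp(\tfrac12 Xt)\bigr]^{-1} = S_2(t)^{-1}.
\end{equation}
Taking logarithms, $Z(-t) = -Z(t)$, i.e.\ $\sum_\ell \alpha_\ell (-t)^\ell = -\sum_\ell \alpha_\ell t^\ell$, which forces $(-1)^\ell \alpha_\ell = -\alpha_\ell$ for every $\ell$, hence $\alpha_\ell = 0$ whenever $\ell$ is even. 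One has to be slightly careful that $\log S_2(t)$ is single-valued and that $\log$ of an inverse is minus the $\log$; this is fine in the regime where the series converges and $S_2(t)$ is close to the identity, which again is exactly what the norm hypothesis secures. Since $\alpha_2 = 0$, the first error term is $\alpha_3 t^3$, so $S_2(t) = \exp((X+Y)t) + \order{t^3}$, confirming it is a genuine second-order formula.

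The main obstacle is purely bookkeeping/analytic rather than conceptual: one must justify that composing the BCH series twice still yields a convergent series in $t$ whose $t^\ell$ coefficient is a finite linear combination of $\ell$-term commutators, and that the logarithm manipulation used for the parity argument is legitimate. I would handle this by working with $|t|$ small (the statement is about the formal/analytic structure of the error, and a rescaling $X\mapsto Xt$, $Y\mapsto Yt$ reduces any $t$ to the case $|t|\le 1$), so that all three pairwise products stay within the BCH radius of convergence and $S_2(t)$ lies in a neighbourhood of the identity on which $\log$ is a well-defined analytic inverse of $\exp$. Everything else — the parity cancellation — is then a one-line consequence of time-reversal symmetry.
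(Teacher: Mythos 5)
Your proposal is correct and follows essentially the route the paper (and Yoshida, whom it cites) intends: two applications of the ordinary BCH formula from \thm{BCH} give the expansion $Z=\sum_\ell \alpha_\ell t^\ell$ with $\alpha_\ell$ a linear combination of $\ell$-term commutators, and the palindromic structure $S_2(-t)=S_2(t)^{-1}$, hence $Z(-t)=-Z(t)$, forces the even coefficients to vanish. Your handling of the convergence/log-well-definedness caveat for small $|t|$ is the standard one and matches the level of rigour of the cited argument.
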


Reference \cite{Yoshida1990} also shows that even terms are zero for more general symmetric product formulae.
  The first three non-zero $\alpha_\ell$ terms from above are
  \begin{align}
      \alpha_1 &= X + Y, \\
      \alpha_3 &= \frac{1}{12} [Y, [Y, X]] - \frac{1}{24} [X, [X, Y]], \\
      \alpha_5 &=  \frac{7}{5760}[X,X,X,X,Y]-\frac{1}{720}[Y,Y,Y,Y,X] + \frac{1}{360}[X,Y,Y,Y,X]+\frac{1}{360}[Y,X,X,X,Y]- \frac{1}{480}[X,X,Y,Y,X] \nonumber \\
      & \quad + \frac{1}{120}[Y,Y,X,X,Y].
  \end{align}
  Here the square brackets are used to indicate multicommutator expressions similar to the notation in Theorem \ref{thm:BCH}, for example
  \begin{equation}\label{eq:nested_commutator}
      [Y,Y,X,X,Y]\equiv [Y,[Y,[X,[X,Y]]]].
  \end{equation}
Expressions for each $\alpha_\ell$ can be derived from two applications of the usual BCH formula.

\begin{definition}[Product formula]
Let $X$ and $Y$ be any non-commuting operators.
Given a natural number $n$, a \emph{product formula of order $n$} is a pair
$(\vec c, \vec d)$ with $\vec c, \vec d \in \mathbb{R}^\ell$ and $\ell$ a natural number such that for all $t\in\mathbb{R}$
\begin{equation}\label{eq:general-prod}
    \exp ((X + Y)t) = \prod_{j=1}^\ell \exp(c_j X t) \exp(d_j Y t) + \order{t^{n+1}}.
\end{equation}
We refer to the number of non-zero coefficients in $\left(\vec c, \vec d\right)$ as the length of the product formula. 
\end{definition}

Hence $S_2$ is a length-$3$ product formula.
The number of exponentials used in the product formula is a crucial measure of its efficiency, and in quantum simulation it is proportional to the required number of gates.

\paragraph{Fractal methods for generating higher-order product formulae.}
Here we describe two fractal methods to obtain higher-order product formulae.
Starting from the symmetrised product formula in Eq.~\eqref{eq:sym_prod_formula_order_2}, fractal methods produce product formulae at all even orders.
The first fractal (or triple jump) technique to generate a product formula of order $k=2\kappa$ is \cite{Creutz1989,FOREST1990105,CAMPOSTRINI1990753,Suzuki1990,Yoshida1990}
\begin{equation}
\label{eq:Suzuki_symmetric_1}
    S_{2\kappa}(t)=S_{2\kappa-2}(s_\kappa t)S_{2\kappa-2}((1-2s_\kappa)t)S_{2\kappa-2}(s_\kappa t),
\end{equation}
where $s_\kappa=1/(2-2^{1/(2\kappa-1)})$.
This method can be used to generate even orders starting at $S_2$, or more generally to increase the order of symmetric product formulae.
A drawback to this method is that both $s_k$ and $1-2s_\kappa$ are greater than 1, so the coefficients in the higher-order methods are large, causing greater error.

The history of this approach is discussed in detail in a review article by Forest \cite{Forest_2006}.
This form for a fourth-order integrator was originally derived by Ruth about 1984, but not published until 1990 \cite{FOREST1990105}.
That work specifically considered a Hamiltonian that is a sum of two parts.
The iterative form was given by Creutz and Gocksch in 1989 \cite{Creutz1989}, but citing unpublished work by Campostrini and Rossi for the fourth-order case.
That work was published in 1990 \cite{CAMPOSTRINI1990753}, and cited Creutz and Gocksch for the iterative form.
Suzuki \cite{Suzuki1990} and Yoshida \cite{Yoshida1990} independently derived the iterative form in 1990.

The iterative form given by Creutz and Gocksch in 1989 \cite{Creutz1989} was the more general expression
\begin{equation}
    \Tilde{S}_{2\kappa-2}(u_\kappa t)^p\Tilde{S}_{2\kappa-2}((1-2p u_\kappa)t)\Tilde{S}_{2\kappa-2}(u_\kappa t)^p \, .
\end{equation}
A case of particular interest is the second fractal method, also derived by Suzuki in 1991 \cite{Suzuki5}
\begin{equation}
\label{eq:Suzuki_symmetric_2}
    \Tilde{S}_{2\kappa}(t)=\Tilde{S}_{2\kappa-2}(u_\kappa t)^2\Tilde{S}_{2\kappa-2}((1-4u_\kappa)t)\Tilde{S}_{2\kappa-2}(u_\kappa t)^2,
\end{equation}
where $u_\kappa=1/(4-4^{1/(2\kappa-1)})$.
This method has the advantage that both $u_\kappa$ and $1-4u_\kappa$ are less than 1, so the coefficients of higher-order formulae are small resulting in small error.
The drawback is that far more exponentials are required.
Each iteration uses 5 copies of the lower-order formula, whereas the previous one uses 3 copies.
The virtue of these fractal methods is that they allow one to generate arbitrarily high-order product formulae easily, albeit at the expense of a large number of exponentials.

\paragraph{Exponential length scaling of fractal methods.} For the first fractal method, the total number of exponentials for a given order $2\kappa={4,6,8,...}$ in the product formula $S_{2\kappa}$ is given by
\begin{equation}
    2(J-1) 3^{\kappa-1}+1 \, .
\end{equation}
For example $J=2$ for $X+Y$ and $\kappa=1$ just corresponds to $S_2$, and the expression gives 3 as expected.
For the second fractal method $\Tilde{S}_{2\kappa}$ the number of exponentials is
\begin{equation}
    2(J-1) 5^{\kappa-1}+1 .
\end{equation}
The number of exponential operators in both cases of the fractal method grows very rapidly.
Thus one may be interested in alternative method to obtain product formulae with a lower count, such as the method of Yoshida in the next section.

\subsection{Yoshida's method for deriving 6th order product formulae}
\label{sec:yoshidas-method}

Here, we give a pedagogical explanation of the method to derive product formulae.

\paragraph{Approach.}
Rather than using \cref{eq:Suzuki_symmetric_1,eq:Suzuki_symmetric_2},
Yoshida uses the general procedure \cite{Yoshida1990}
\begin{align}\label{eq:Yoshida_ansatz}
    S^{(m)}(t) = \bigg(\prod_{j=1}^m S_2(w_{m-j+1}t) \bigg) S_2(w_0 t) \bigg(\prod_{j=1}^m S_2(w_j t)\bigg), 
\end{align}
where $w_j\in \mathbb{R}$ for $j=0,1,\cdots,m$ are parameters to be determined. Note the number of exponentials in this product is given by
\begin{equation}\label{eq:noexp}
    (4m+2)(J-1)+1 .
\end{equation}
Given this ansatz, the task becomes to find $m$ and $w_i$ such that $S^{(m)}$ is an order $k$ product formula.
We will illustrate Yoshida's method by deriving the result for 6th order.

\paragraph{Expand Yoshida product using Baker-Campbell-Haussdorf formula.}
The method is to expand \cref{eq:Yoshida_ansatz} using the BCH formula from \cref{thm:BCH}
recursively as follows. First, note that from \cref{cor:symmetric_BCH}, $S_2(t)=e^{\frac{t}{2}X}e^{tY}e^{\frac{t}{2}X}=e^{t \alpha_1+t^3 \alpha_3\cdots}$, where $\alpha_\ell$ is a commutator of $\ell$ operators as described below \cref{cor:symmetric_BCH}. 
We are for the moment considering sums of two terms $X+Y$. 
Define $C=\sum_{i=1}^\infty t^{2i-1}w_1^{2i-1}\alpha_{2i-1}$ and $D=\sum_{i=1}^\infty t^{2i-1}w_0^{2i-1}\alpha_{2i-1}$. Then,
\begin{align}\label{eq:Yoshida_S2_6}
      &S_2 (w_1 t) S_2 (w_0 t) S_2 (w_1 t) \nn &=e^{C}e^{D}e^{C} \nn
      &= \exp \bigg\{ t w_1 \alpha_1 + t^3 w_1^3 \alpha_3 +t^5 w_1^5 \alpha_5 + \order{t^7} \bigg\} \exp\bigg\{t w_0 \alpha_1 + t^3 w_0^3 \alpha_3 + t w_0^5 \alpha_5 + \order{t^7} \bigg\} \nn
    &\quad\times \exp \bigg\{ t w_1 \alpha_1 + t^3 w_1^3 \alpha_3 +t^5 w_1^5 \alpha_5 + \order{t^7} \bigg\} \nonumber\\
        &= \exp\bigg\{t(2w_1 + w_0)\alpha_1 + t^3 (2w_1^3 + w_0^3)\alpha_3 + t^5 (2w_1^5 + w_0^5)\alpha_5 
    + \frac{1}{6}([D,D,C]-[C,C,D]) + \order{t^7}\bigg\} .
\end{align}
A simple computation shows 
\begin{align}
    [D,D,C]-[C,C,D] &= t^5 (w_0^2 w_1^3 -w_1^2 w_0^3 + w_1^4 w_0 - w_0^4 w_1)[\alpha_1,\alpha_1,\alpha_3] + \order{t^7}.
\end{align}
Define $\beta_5 = [\alpha_1,\alpha_1,\alpha_3]$ so
\begin{align}\label{eq:6-order-m}
    S_2(w_1 t)S_2(w_0 \tau)S_2(w_1 t) &= \exp\bigg\{t(2w_1 + w_0)\alpha_1 + t^3 (2w_1^3 + w_0^3)\alpha_3 + t^5 (2w_1^5 + w_0^5)\alpha_5 \nonumber\\
    &\quad + t^5 \frac{1}{6}(w_0^2 w_1^3 - w_1^2 w_0^3 +w_0w_1^4 - w_0^4w_1)\beta_5 + \order{t^7}\bigg\} .
\end{align}
By an induction argument Yoshida shows that
\begin{align}\label{eq:Yosh-order6}
    S^{(m)}(t) = \exp \bigg\{t A_{1,m} \alpha_1 + t^3 A_{3,m} \alpha_3 + t^5 (A_{5,m} \alpha_5 + B_{5,m}\beta_5) + \order{t^7}  \bigg\}, 
\end{align}
where $A_{j,m}$ and $B_{5,m}$ are polynomials on the variables $w_0,\dots, w_m$.

The case $m=0$ is just the symmetric BCH formula, so it is clear that Eq.~\eqref{eq:Yosh-order6} holds with
\begin{align}
    A_{1,0}&=w_{0},\nonumber\\
    A_{3,0}&=w_{0}^3,\nonumber\\
    A_{5,0}&=w_{0}^5 ,\nonumber\\
    B_{5,0}&= 0 . 
\end{align}
To prove Eq.~\eqref{eq:Yosh-order6} for $m>0$, one needs to show that the exponential is of the form with operator $\alpha_1$ for first order in $t$, operator $\alpha_3$ for third order in $t$, and operators $\alpha_5$ and $\beta_5$ for fifth order in $t$.
This result may be shown using
\begin{align}
    S^{(m+1)}(t)&=S_2(w_{m+1}t) S^{(m)}(t) S_2(w_{m+1}t)\nonumber\\
    &= \exp\bigg\{t w_{m+1}\alpha_1 +t^3 w_{m+1}^3 \alpha_3 + t^5 w_{m+1}^5 \alpha_5  + \order{t^7} \bigg\} \nonumber\\ 
    & \quad \times
    \exp\bigg\{ t A_{1,m} \alpha_1 + t^3 A_{3,m} \alpha_3 + t^5 (A_{5,m} \alpha_3+ B_{5,m} \beta_5)  + \order{t^7} \bigg\} \nonumber\\ 
    & \quad \times \exp\bigg\{t w_{m+1}\alpha_1 +t^3 w_{m+1}^3 \alpha_3 + t^5 w_{m+1}^5 \alpha_5 +\order{t^7} \bigg\}\nonumber\\
    & = \exp \bigg\{2 t w_{m+1} \alpha_1 + t A_{1,m} \alpha_1 + 2 t^3 w_{m+1}^3 \alpha_3 + t^3 A_{3,m} \alpha_3 + 2 t^5 w_{m+1}^5 \alpha_5 + t^5 A_{5,m} \alpha_5 + t^5 B_{5,m} \beta_5 \nonumber\\
    &\quad + \frac{1}{6} t^5 (A_{1,m}^2 w_{m+1}^3 - A_{1,m}A_{3,m}w_{m+1} - w_{m+1}^2 A_{3,m}+w_{m+1}^4 A_{1,m} )\beta_5 + \order{t^7} \bigg \} .
\end{align}
Hence, if the product formula can be expressed as in the form \eqref{eq:Yosh-order6} for $S^{(m)}(t)$, it can again be expressed in this form for $S^{(m+1)}(t)$, establishing it for all $m\ge 0$ by induction.
This expression also shows that the scalar coefficients can be determined from the formulae
\begin{align}
    A_{1,m+1}&=2w_{m+1} + A_{1,m},\nonumber\\
    A_{3,m+1}&=2w_{m+1}^3 + A_{3,m},\nonumber\\
    A_{5,m+1}&=2w_{m+1}^5 + A_{5,m},\nonumber\\
    B_{5,m+1}&=B_{5,m}+\frac{1}{6}(A_{1,m}^2 w_{m+1}^3 - A_{1,m}A_{3,m}w_{m+1} - w_{m+1}^2 A_{3,m}+w_{m+1}^4 A_{1,m}) .
\end{align}
See \cref{sec:yoshida-order10} for an explanation of how to extend this approach to 10th order.

\paragraph{Constraints to satisfy in order to derive 6th order formula.}
To derive a 6th order formula,
the lower-order terms in the exponential in \cref{eq:Yosh-order6} must be zero (see also Eq.~(5.16) of \cite{Yoshida1990}), which gives the four conditions
\begin{equation}
    A_{1,m} = 1, \quad A_{3,m} = 0, \quad A_{5,m} = 0, \quad B_{5,m} = 0.
\end{equation}
For $m=3$ there are four unknowns $w_0$ to $w_3$, and it can be expected there are solutions because there are the same number of equations as unknowns.
In practice $A_{1,m} = 1$ is satisfied by taking $w_0=1-2\sum_j w_j$, so there are then three equations for three unknowns $w_1,w_2,w_3$.
Whereas it is possible to solve the equations using the Newton-Raphson method,
the expression for the appropriate Jacobian matrix is complicated, so
Yoshida instead uses the Brent method.
Yoshida produced three $m=3$ solutions in this way, and states ``It seems that there is no other solution.''
We have performed an extensive search and also found no more solutions.

The product formulae obtained through the Yoshida method also work for exponentials of sums of more operators. 
The $S_2$ product formula can again be used as the building block for the product formula, and we can write 
\begin{equation}\label{eq:yoshida-multiterm}
     S_2(t)= \exp
     \left( \sum_{\ell=0}^\infty \tilde{\alpha}_\ell t^\ell
     \right),
\end{equation}
where $\tilde{\alpha}_\ell$ are now order-$\ell$ multicommutator expressions on the $J$ terms.
The reasoning for finding the product formula is entirely based on the construction with ${\alpha}_\ell$, but without taking advantage of its particular form, so exactly the same reasoning holds for $\tilde{\alpha}_\ell$.
This immediately implies that the higher-order product formulae work in general.
This is an advantage of constructing product formulae as products of $S_2$.

\subsection{Processed product formulae}\label{sec:other-high}
Another technique to obtain higher-order product formulae is that of processing \cite{Lopez96, butcher1996number, mclachlan1996more, wisdom1996symplectic, Blanes2006processing, blanes2001high, blanes1999symplectic}. In this technique a product formula $S_k$ of order $k$ is generated  by the composition of a \emph{kernel} $\Sigma$ conjugated by a \emph{processor} operator $P$ as
\begin{align}\label{eq:procecssor}
    S_k=P \Sigma P^{-1}.
\end{align}
The advantage of this method is that usually $\Sigma$ requires fewer stages than other methods, and due to the construction, we have that $S_k^n=P\Sigma^n P^{-1}$.
This means that the cost of using the product formula is effectively that of the kernel in the usual application where a long time evolution is partitioned into many repetitions of the product formula for short times.

Typically, one chooses a kernel $\Sigma$ as a product formula that has a smaller order than $k$, but conjugating by processors gives an order $k$ integrator. 
Some of the best-performing kernels are given in Ref.~\cite{Blanes2006processing}, and recently new kernels were reported in \cite{blanes2024families}.
The advantage of this method is that the number of nonlinear equations required to be solved to find new product formulae is reduced.
One could use a shorter minimum-length product formula, 
but best performance is provided by increasing the length of a product formula beyond the minimum.
The more the number of free parameters exceeds the number of equations, the more freedom there is to find product formulae with low error.
Reducing the number of equations by solving for the kernel enables reduced error for a given length.

To be more specific, Ref.~\cite{Blanes2006processing} gives a set of conditions for the kernels and processors in Table 4 of that work.
Reference~\cite{Blanes2006processing} uses the notation $Y_j$ rather than $\alpha_j$, so $S_2(t)=e^{t Y_1+t^3 Y_3\cdots}$.
It then uses the notation $E_{j,i}$ for multicommutator expressions, with $E_{j,1}=Y_j\equiv\alpha_j$ and for example
\begin{align}
    E_{5,2} &= [Y_1,Y_1,Y_3] \equiv \beta_5\, , \\
    E_{7,2} &= [Y_3,Y_1,Y_3] \equiv -\gamma_7 \, ,
\end{align}
where $\gamma_7$ is given in \cref{sec:yoshida-order10}.
In Ref.~\cite{Blanes2006processing} the quantities $f_{j,i}$ are the coefficients of $E_{j,i}$, so the equivalent of Eq.~\eqref{eq:Yosh-order6} in that notation is
\begin{align}
    S^{(m)}(t) = \exp \!\bigg\{ t f_{1,1} E_{1,1} + t^3 f_{3,1} E_{3,1} + t^5 (f_{5,1} E_{5,1} + f_{5,2}E_{5,2}) + \order{t^7}  \bigg\} \, .
\end{align}

To obtain the conditions for a 6th order \emph{kernel} from Ref.~\cite{Blanes2006processing}, one should use the conditions for $f_{j,i}$ up to $q=5$ in Table 4 of that work, which give $f_{1,1}=1$, $f_{3,1}=0$, $f_{5,1}=0$, which in the above notation are equivalent to
\begin{equation}
    A_{1,m} = 1, \quad A_{3,m} = 0, \quad A_{5,m} = 0.
\end{equation}
The conditions for the kernel are the same as for the complete product formula, but missing $B_{5,m} = 0$.
That is a general feature of the kernel order conditions; they are a subset of the conditions for the complete product formula, enabling shorter kernels, or more flexibility to choose kernels of the same length but lower error.
For 4th order there are the same number of conditions, so the kernel is a valid product formula of that order.

For the processors, Ref.~\cite{Blanes2006processing} uses the notation $p_{j,i}$ for the coefficients of $E_{j,i}$ in the expansion of $P$.
These need to be determined from the BCH formula rather than symmetric BCH, and instead of only odd-order terms there need to be only even-order terms.
Some symmetries mean that low-order odd terms automatically cancel, but higher-order ones need to be made zero by the choice of $P$.
Table 4 in Ref.~\cite{Blanes2006processing} gives the conditions on the even-order terms.
The first processor condition depending on $f_{j,i}$ is for $q=5$, which means that for 6th order and higher the processors are dependent on the kernel.
The 4th order the kernel is already 4th order, and the processors will yield another 4th order product formula, but can be chosen to reduce the error (which arises from higher-order terms).
See Appendix \ref{app:processors} for more detail on the method of determining processors.

\subsection{More general product formulae}\label{sec:general}
The form of product formulae constructed from products of $S_2$ can be generalised in a number of ways.
First, one may choose different weightings in the two parts of Eq.~\eqref{eq:S2def}, so one has
\begin{equation}
S (t_1,t_2) = \left[ \prod_{j=1}^J \exp (\frac{1}{2}X_jt_1)\right] \left[ \prod_{j=J}^1 \exp (\frac{1}{2}X_jt_2) \right] .
\end{equation}
That is, one still alternates between the order of applying the exponentials, but allows the times to be different.
This form is no longer symmetric, but there are twice as many parameters for a given length.
In the literature on symplectic integrators, the notation for product formulae of this type is \cite{Blanes_Casas_Murua_2024}
\begin{equation}
    \chi_{\alpha_{2s}h}\circ\chi^*_{\alpha_{2s-1}h}\circ \cdots \circ \chi_{\alpha_{2}h}\circ\chi^*_{\alpha_{1}h}\, ,
\end{equation}
where the asterisk is indicating that the exponentials are performed in reverse order (rather than conjugation).

In the case where the Hamiltonian is a sum of only two parts, then a product formula may be written in the form
\begin{equation}\label{eq:XY}
    e^{w_m X t} e^{w_{m-1} Y t} \cdots e^{w_1 X t} e^{w_{0} Y t} e^{w_1 X t} \cdots e^{w_{m-1} Y t} e^{w_m X t} \, ,
\end{equation}
if $m$ is odd, or ending in an exponential of $Y$ if $m$ is even.
Here we have written the product formula as symmetric, but one can also consider product formulae without that constraint.
Again there are twice as many parameters as for a product formula constructed as a product of $S_2$.
Product formulae of this type are often written as \cite{Blanes_Casas_Murua_2024}
\begin{equation}
    \varphi^{[1]}_{a_{s+1}h}\circ \varphi^{[2]}_{b_{s+1}h}\circ \varphi^{[1]}_{a_{s}h}\circ \cdots \circ \varphi^{[1]}_{a_{2}h}\circ \varphi^{[2]}_{b_{1}h}\circ \varphi^{[1]}_{a_{1}h}
    \, ,
\end{equation}
where $\varphi^{[1]}$ and $\varphi^{[2]}$ are indicating the two different exponentials.

In the case where the Hamiltonian is a sum of two terms, the product of $S(t_1,t_2)$ gives a form as in Eq.~\eqref{eq:XY}, and conversely the product formula as in Eq.~\eqref{eq:XY} can be expressed as a product of $S(t_1,t_2)$.
In Section~\ref{sec:comparison-formulae} we give results for testing some product formulae of this type, and find that they give the best results for 4th and 6th order.
This method is mostly considered only up to 6th order, because for higher orders the number of order conditions becomes significantly larger (see Table I of Ref.~\cite{Blanes2006processing}).

Another class of product formulae is those constructed from products of fourth-order product formulae $S_4$.
Some product formulae have been constructed in that way; for example see Ref.~\cite{Blanes2006processing}.
We have also tested those (using the fourth-order fractal product formulae), but do not include them in the comparison in Section~\ref{sec:comparison-formulae}.
We found that, although they reduced the error, the much larger number of exponentials in the product meant that they were not competitive.
Part of the reason is that the best gains are obtained when there are significantly more free parameters to adjust than the number of equations to satisfy.
This is the reason to use product formulae with more than the minimum length.
Constructing a product formula from $S_4$ increases the length without the added advantage of more free parameters to adjust.

Another option is to construct product formulae from a different second-order product formula.
For example, a more accurate second-order method is provided in Ref.~\cite{McLachlan95}.
In our testing, we find that yields slightly more accurate product formulae, but with twice the length, making them uncompetitive with product formulae constructed from the standard second-order symmetric product formula $S_2$.

\subsection{Solution using Taylor expansion}\label{sec:Taylor}

Another solution method is based on computing the Taylor expansion of both the exact exponential and its product formula approximation with given $w_j$.
This method is not normally used because it is much more computationally intensive than the symmetric BCH approach.
Here, we describe how to effectively code this method for numerical solution.
The Taylor expansion is performed on both sides up to the desired order of approximation for the integrator.
By imposing equality on terms of the same order one can obtain a series of equations for $w_j$ which can be solved.
For higher orders, a large number of simultaneous equations are obtained, so we need an automated way of generating them.

To make precise the problem we are solving, denote as $\mathcal{T}_k[\cdot]$ the map giving the Taylor expansion in $t$ around $0$, truncated at order $k$, so
\begin{align}\label{eq:def_Tot} 
     \mathcal{T}_k[e^{t (X+Y)}]\nonumber &= \sum_{p=0}^{k} \frac{t^p}{p!}(X+Y)^p \nonumber\\
       &=\sum_{p=0}^{k} \frac{t^p}{p!} \sum_{r_1,\cdots,r_p=0}^1 X^{r_1}Y^{1-r_1}\cdots X^{r_p}Y^{1-r_p} .
\end{align}
We consider a sum of two operators $X+Y$, but
this approach for solving for $w_j$ will also be sufficient to provide product formulae for sums of arbitrary numbers of terms.
That is because the solutions must also be solutions of the equations derived using Yoshida's method, and as explained above those equations will be the same when considering sums of arbitrary numbers of terms.
Now consider the ansatz operator of Yoshida from \cref{eq:Yoshida_ansatz} 
\begin{align}
    S^{(m)}(t,w_1,\cdots,w_m)&=e^{tw_m X/2 }e^{tw_m Y}e^{t(w_m+w_{m-1})X/2}e^{tw_{m-1}Y}e^{t(w_{m-1}+w_{m-2})X/2}\nonumber\\
    &\quad \cdots e^{tw_0 Y}e^{t(w_1+w_0)X/2}e^{tw_1 Y}\cdots e^{tw_m X/2}\nonumber\\
    &= e^{tc_1X}e^{tc_2Y}\cdots e^{tc_{4m+3}X},
\end{align}
where in the last line we have defined the constants $c_1=w_m/2$, $c_2=w_m$, $c_3=(w_m+w_{m-1})/2, \cdots$ $c_{4m+3}=w_m/2$.
Taylor expanding this ansatz up order $k$ gives
\begin{align}\label{eq:def_Prod}
     \mathcal{T}_k[S^{(m)}(t,w_1,\dots,w_m)]
    &= \sum_{\substack{r_1, \cdots, r_{4m+3} = 0 \\ r_1 + \dots + r_{4m+3} \leq k}}^k \frac{t^{r_1 +\dots + r_{4m+3}}}{r_1 ! \cdots r_{4m+3} !}c_1^{r_1}\dots c_{4m+3}^{r_{4m+3}}X^{r_1}Y^{r_2} \dots X^{r_{k-1}}Y^{r_{4m+3}}.
\end{align}
Note that the total number of exponentials in Yoshida's ansatz is $4m+3$.

The product formula obtained from the solution procedure needs to be independent of the choice of $X$ and $Y$, so we need to match the coefficients for each sequence of products of $X$ and $Y$.
Because $X$ and $Y$ are non-commuting, we need to record coefficients for each ordered sequence.
To do this in an automated way we construct a data structure to store the coefficients.

Given operators of the form $e^{cX}=I + cX + \frac{c^2}{2!}X^2 + \frac{c^3}{3!}X^3 + \cdots$ and $e^{d Y}=I + d Y + \frac{d^2}{2!}Y^2 + \frac{d^3}{3!}Y^3 + \cdots$  with $c,d$ scalar coefficients and $X, Y$ general operators, we describe this Taylor expansion up to order $k$ using an array encoding.
First, write monomials composed of $X$ and $Y$ operators in lexicographical order and note that these operators can be mapped to binary numbers:
\begin{equation}
\label{eq:lexi}
\begin{matrix}
I & X & Y & XX & XY &  YX &  YY & XXX & XXY & \cdots\\
1 & 10 & 11 & 100 & 101 &  110 &  111 & 1000 & 1001 & \cdots\\
1 & 2 & 3 & 4 & 5 &  6 &  7 & 8 & 9 & \cdots
\end{matrix}    
\end{equation}

To construct a bit string, we map each $X$ to 0 and each $Y$ to 1, then place a 1 on the left to flag the length of the string, as shown in the second line of \cref{eq:lexi}.
Then, to obtain the operator products, simply remove the leading 1 and then map $0$ to $X$ and $1$ to $Y$.
The empty string corresponds to the identity $I$.
Now, to store the coefficients in a sum of products of $X$ and $Y$, convert each product to a binary integer as above, then store the coefficient in the corresponding location in a vector.
This approach enables products of Taylor expansions of exponentials to be rapidly calculated.

By way of illustration, consider the polynomial $10 I + 3X + 2Y + 2X^2 + YX$.
This operator would be stored in an array as $[10,3,2,2,0,1,0,\cdots]$.
In this way, any polynomial of $X$ and $Y$ can be efficiently stored in a vector.
We denote this vector storing the coefficients of operators of order up to $k$ as $\mathrm{vec}_k[p(X,Y)]$, where $k$ denotes that the vector will only store the coefficients of the corresponding operators up to order $k$ (so a vector of length $2^{k+1}-1$) and $p(X,Y)$ is the polynomial in terms of $X$ and $Y$.

\section{Product formula search}\label{sec:search}

First we describe the numerical methods, then summarise solutions we found at 8th and 10th order.

\subsection{Numerical methods}

The size of the product formula we consider as in Eq.~\eqref{eq:Yoshida_ansatz} is governed by the value of $m$, so the number of $S_2$ in the product is $M=2m+1$.
The number of free parameters is $m$, and it needs to be a least a large as the number of independent equations in order for there to be a solution.
Yoshida \cite{Yoshida1990} finds that $m=7$ is minimal for 8th order, and $m=15$ is minimal for 10th order \cite{Sofroniou2005integrators}.
Choosing the minimal $m$ typically yields product formulae with large error, and for that reason we focus on larger values of $m$.

For the numerical solution of the simultaneous nonlinear equations, we found that Matlab's \texttt{fsolve} (using the Levenberg-Marquardt algorithm) provided rapid results using the vector of errors.
The method was to repeatedly solve from random starting vectors $\vec w$ generated according to the normal distribution.
The best solutions were those with smaller values for the coefficients; for much of the calculations we selected standard deviations of $0.6$ for 8th order and $0.9$ for 10th order.
We filtered the large number of solutions by numerically checking the error for example Hamiltonians, and selecting those that provided the best performance.
We then perform tests for larger numbers of samples, to more accurately select the best performing product formulae.
We also further refined the solutions to give the results to extended precision, which enables testing of the error with smaller values of $t$.
We were also able to further reduce the error by using an alternative solution method using the Taylor series.

The Taylor series approach could be used to solve for product formulae, but is more computationally intensive because it produces an error vector that is exponentially large in the order.
However, it is very effective at further improving the performance of solutions that have already been found.
If a solution is of 8th order (for example), then we can consider the error using the Taylor expansion up to and including the 9th order.
This will give nonzero error for the 8th order product formula, so minimising the error gives a new product formula.
That can be used as a starting point to again solve for an 8th order formula, and often it is found that the product formula adjusted in that way has reduced error.

\subsection{Improved 8th order}\label{sec:optimisation-8}

Following these methods, we have searched for new product formulae and found thousands of product formulae of 8th order.
Once we have obtained the potential solutions, we generate random Hamiltonians and
compute the product formula errors for a chosen short time.
The error for 8th order product formulae is given by $\delta(t)\propto t^9$, but the constant will depend on the measure of error used.
One could use the spectral-norm error for a single time step, but a more appropriate measure of the performance of the product formulae is the accuracy of the eigenvalues.
In the following we use $\con$ for the constant factor based on spectral-norm error, and $\zeta$ for the constant based on eigenvalue error, with $\con=\delta/t^9$ or $\zeta=\delta/t^9$.

For each product formula, we compute an average constant factor error; this average corresponds to the geometric mean of the constant factors computed for each random Hamiltonian.
See Section \ref{sec:comparison-formulae} for more details of the numerical methods, and the comprehensive comparison of product formulae.
This method of comparing the performance of product formulae through the estimation of the constant factor in the error has been used before (see for example \cite{Blanes2013highorder}) and is considered a good approximation to the performance of the product formula in practice.

The case $m=7$ yields product formulae of minimal length, because the number of equations is equal to the number of unknowns.
We have found over 600 solutions, and the search now finds almost only repeated solutions and very few new solutions.
This indicates that we have found nearly all solutions, but it is also possible that there are many more solutions with large values of $\vec w$.
Indeed, the most recent new solutions we found have significantly larger values of $\vec w$.
We find that large values of $\vec w$ correspond to worse product formulae with larger error.
Therefore, even if there are many more solutions with large $\vec w$, they likely will not give improved performance over those we have already found.

This search for $m=7$ (15 stages) verified that the best-performing product formula corresponds to that previously reported in Refs.~\cite{McLachlan95} and \cite{Kahan1997integrators}.
Similarly, a numerical search with $m=8$ (17 stages) yielded the best-performing product formula close to solutions s17odr8a and s17odr8b given in Ref.~\cite{Kahan1997integrators}.
It is not exactly the same, but the small difference is likely due to the system of equations being underdetermined.
The extensive nature of our search indicates that these product formulae reported in Refs.~\cite{McLachlan95,Kahan1997integrators} are optimal for 8th order product formulae with $m=7$ and $m=8$.

\begin{table}[tbh]
\centering
\begin{tabular}{ |c|c|c| } 
 \hline
 & Best 8th order for spectral-norm error & Best 8th order for eigenvalue error \\ 
 \hline
 $w_1$ & $0.59358060400850625863514059265224$ & $0.10467636532245895252340732579853$ \\ 
 $w_2$ & $-0.46916012347004197296293264921328$ & $-0.57896999331780988041471955125778$ \\
 $w_3$ & $0.2743566425898467907228242878146$ & $0.57503350160061785946141563279891$ \\
 $w_4$ & $0.17193879484656773059919074965377$ & $0.12231011868707029786561397542663$  \\
  $w_5$ & $0.23439874482541384415430578747541$ & $0.27793149999039524816733903301747$ \\
  $w_6$ & $-0.48616424480326193899617759997914$ & $-0.37349605088056728482635987352576$ \\
  $w_7$ & $0.49617367388114660354871757044906$ & $0.11575566589480463220616543972403$   \\
  $w_8$ & $-0.32660218948439130114501815323814$ & $0.1464645610975800618712569230326$   \\
  $w_9$ & $0.23271679349369857679445410270557$ & $-0.39443578322284085764474498594073$   \\
  $w_{10}$ & $0.098249557414708533273471906180643$ & $0.44370228726021218923197141183196$  \\
 \hline
\end{tabular}
\caption{Our best-performing 8th order solutions when setting $m=10$. }
\label{tab:order8-solm10}
\end{table}

To provide improved performance over prior work, we extended the search to $m=10$ (21 stages), and found over 100,000 solutions.
This size of 8th order product formula was previously studied in Ref.~\cite{Sofroniou2005integrators}, but we find solutions with significantly lower error.
Our best solutions for spectral-norm and eigenvalue error are given in \cref{tab:order8-solm10}.
The solution selected for minimum spectral-norm error has a constant of $\con=5.8\times 10^{-8}$.
Its constant factor for eigenvalue error $\zeta=7.0\times 10^{-9}$ is significantly smaller, but the best performance is obtained by selecting for minimum eigenvalue error.
The second solution given in \cref{tab:order8-solm10} has the significantly smaller constant factor in the error $\zeta=5.4\times 10^{-10}$.

These constant factors all improve over those provided by product formulae in prior work.
Our solution with smallest eigenvalue error is well over 100 times more accurate than the $m=10$ product formula in Ref.~\cite{Sofroniou2005integrators}, and about 100 times more accurate than the $m=9$ product formula in that work.
See Section \ref{sec:comparison-formulae} for the detailed numerical testing of the product formulae.

A further improvement in product formulae can be obtained by using processing.
We have performed a search for high-accuracy 8th order product formulae using this method with $m=8$, and found one particular example that provides excellent performance; see Table \ref{tab:process}.
This kernel provides a constant of $\zeta=8.1\times 10^{-10}$, only slightly larger than the best length $m=10$ product formula, while providing better performance due to the shorter length.
This error is nearly 300 times smaller than provided by 8th order processed product formulae in prior work \cite{Blanes2006processing}.
Therefore, we have provided a sequence of three solutions that all improve over prior work, with successively better and better performance.

\begin{table}[tbh]
\centering
\begin{tabular}{ |c|c|c|c| } 
 \hline
 & Best 8th order kernel & Processor for kernel & \\ 
 \hline
 $w_1$ & $0.21784176681731006074681969186513$ & $-0.44324901019570126590495430949294$ & $\gamma_1$ \\ 
 $w_2$ & $0.1947017706053903224022456342907$ & $0.25459857192003772850622377066944$ & $\gamma_2$ \\
 $w_3$ & $0.18372413281145589944261642180363$ & $-0.73862036266779261573694538099739$ & $\gamma_3$ \\
 $w_4$ & $-0.37307499512657736825709230652023$ & $-0.00024139614958652134370419495289618$ & $\gamma_4$  \\
  $w_5$ & $0.15757644257569146373033662060461$ & $0.73873460354125365739379753874964$ & $\gamma_5$ \\
  $w_6$ & $-0.33342207567391682979227850551172$ & $-0.20285971152536085519251666906017$ & $\gamma_6$ \\
  $w_7$ & $0.51788649682987924281787142226803$ & $0.44989521689676869571827637424046$ & $\gamma_7$   \\
  $w_8$ & $0.21456475499897766986381219621761$ & $0.29538398007876871184026747505657$ & $\gamma_8$   \\
  &  & $-0.3364996155865700091428329802017$ &  $\gamma_9$  \\
 \hline
\end{tabular}
\caption{Our best-performing 8th order kernel and processor when setting $m=8$. The value of $\gamma_{10}$ for the processor is chosen such that the sum is zero.}
\label{tab:process}
\end{table}

To provide an accurate result in terms of the spectral norm (rather than just the eigenvalues) it is necessary to solve for the processor.
For the kernel there are many solutions for the processor, but we list the one with the best performance of the solutions we have found.
This one was found by solving for many (hundreds of thousands) of processors, picking the one with the best performance, and further optimising.
Moreover, using the processor the constant for the spectral-norm error is $\chi=5.3\times 10^{-8}$.
That is even better than the best length $m=10$ (non-processed) formula above.

\subsection{Finding 10th order}\label{sec:optimisation-10}

We have also used our solution procedure to find new $10$th order product formulae.
For Yoshida's method applied to 10th order, there are $15$ independent equations to be solved (see \cref{sec:yoshida-order10}).
We performed searches for solutions with $m=15$ (the minimal number) to $m=18$.
Again the larger values of $m$ give the flexibility to adjust the solution to reduce the error.
As in \cref{sec:optimisation-8}, we compare the performance of product formulae of $10$th order by computing the constant factors $\con$ and $\zeta$ for random Hamiltonians.

Out of our solutions, the best eigenvalue performance is given by a $m=17$ solution (see \cref{tab:order10-solm17}), which yields $\zeta=6.1 \times 10^{-11}$.
This formula is slightly outperformed by one product formula in Ref.~\cite{Sofroniou2005integrators}, but for the purpose of quantum computing we will show that 8th order provides better performance than 10th order.
For this reason we did not perform further searches for processed 10th order product formulae or proceed to any higher orders.

In the search for 10th order product formulae, unlike in the case of 8th order, we find that almost all new solutions found are different from those found before.
That indicates there is an extremely large number of solutions, and we have only found a very small proportion of them.
There is still the potential for better-performing solutions to be found at this order, but we expect that the longer length will make them unsuitable for quantum computing.

\begin{table}[tbh]
\centering
\begin{tabular}{ |c|c|} 
 \hline
 & 10th order solution with $m=17$  \\ 
 \hline
 $w_1$ & $-0.28371232689144296279654621726493$    \\ 
 $w_2$  & $0.046779504778147381605331000278223$   \\
 $w_3$ & $0.36845892382797770619657504217539$    \\
 $w_4$ &$0.19186204094674514739760408197461$    \\
  $w_5$ & $-0.53123134392680669702873064192428$   \\
  $w_6$ & $-0.0081253242720827266680816105600661$   \\
  $w_7$ & $-0.16389450414378567860032917538393$   \\
  $w_8$ & $0.18514766119291405032528647881$   \\
  $w_9$& $0.5383584694754681989174668806505$   \\
  $w_{10}$& $-0.30583981835573485697292316732177$   \\
  $w_{11}$  & $0.43199935609523301289295473774488$    \\
  $w_{12}$& $0.1510502301631786853020124612813$   \\
  $w_{13}$& $-0.35051099204829676098801520498121$   \\
  $w_{14}$ & $0.1032971125844291674511513007661$   \\  
  $w_{15}$ & $0.15043936943817152697371946806229$  \\
    $w_{16}$ & $0.12118469498650736511410491586846$  \\
     $w_{17}$ & $0.10437742779547826358296681557444$  \\
 \hline
\end{tabular}
\caption{The 10th order solution with lowest eigenvalue error in our numerical search.}
\label{tab:order10-solm17}
\end{table}

\section{Methods for comparison of product formulae}\label{sec:comparison-method}
\subsection{Same order comparison}
A fair comparison between product formulae of different length may be made in the following way.
An order $k$ product formula for time $t$ will have an error $\delta=\zeta t^{k+1}$ (using $\zeta$ for eigenvalue error, or $\delta=\con t^{k+1}$ for the case of spectral-norm error).
Let $T$ be the total evolution time for a product formula of order $k$, and $\epsilon$ be the maximum allowable error. 
Subdivide the evolution time $T$ into $r$ subintervals, so $t={T}/{r}$ is the length of each time subinterval. 
We thus have $\zeta \left({T}/{r}\right)^{k+1}\approx{\epsilon}/{r}$, which gives
\begin{align}
    r\approx\left(\frac{\zeta T}{\epsilon}\right)^{{1}/{k}}T.
\end{align}

As explained above (see Eq.~\eqref{eq:noexp}), the number of exponentials in the product is $(4m+2)(J-1)+1$.
When applying products of these product formulae, two exponentials can be combined, so the effective number for each is $(4m+2)(J-1)$.
As a result, the total number of exponentials can be given as approximately
\begin{align}\label{eq:totexps}
    2(J-1) M\left(\frac{\zeta T}{\epsilon}\right)^{{1}/{k}}T \, ,
\end{align}
where $M=2m+1$.
The only quantities that vary between different product formulae of the same order are $M$ and $\zeta$,
so to compare product formulae, we need only compare the values of $M\zeta^{1/k}$; the one with the smaller value is the more efficient product formula.
Similarly, if we consider spectral-norm error then we should compare $M\con^{1/k}$ between the formulae.

As another motivation for this form, consider the case where the task is to estimate an eigenvalue of the Hamiltonian.
Then one would use an ancilla register to control the time of evolution, and use phase estimation.
If we are controlling between the forward and reverse time evolution as per Ref.~\cite{BabbushPRX18}, then the (root-mean square) error in estimating the phase of $e^{-iHT/r}$ is given by
\begin{equation}
    \epsilon_{\rm phase} = \frac{\pi}{2r}.
\end{equation}
For the error in the simulation of $e^{-iHT/r}$ to be no more than this phase estimation error, we should have
\begin{equation}
    \frac{\pi}{2r} \approx \zeta (T/r)^{k+1} ,
\end{equation}
or
\begin{equation}
    r \approx \left(\frac{2\zeta T}{\pi}\right)^{{1}/{k}}T \, .
\end{equation}
If $\epsilon_H$ is the allowed error in estimation of the eigenvalue of $H$, then it would correspond to $\pi/(2T)$.
Then, in terms of $\epsilon_H$ the choice of $r$ is
\begin{equation}
    r \approx \left(\frac{\zeta}{\epsilon_H}\right)^{{1}/{k}}\frac{\pi}{2\epsilon_H} \, .
\end{equation}
That would correspond to a total number of exponentials
\begin{equation}\label{eq:totexpspha}
    \pi(J-1)M \left(\frac{\zeta}{\epsilon_H}\right)^{{1}/{k}}\epsilon_H \, .
\end{equation}
Again the constant factor dependent on the product formula is $M\zeta^{1/k}$, so it is this expression that should be compared between formulae.
In terms of the overall complexity, this is similar to simulation of the evolution for time $T=\pi/(2\epsilon_H)$ with error $\epsilon=\pi/2$.

\subsection{Thresholds for different order}

If we wish to compare product formulae of \emph{different} order, then we need to take account of the values of $T$ and $\epsilon$. Assume we have two product formulae of order $k_1$ and $k_2$, with corresponding constants $\zeta_1$, $\zeta_2$.
Given $T$ and $\epsilon$, when the two product formulae use the same number of exponentials we have $M_1 r_1=M_2 r_2$, thus
\begin{align}
    M_1\left(\frac{\zeta_1T}{\epsilon}\right)^{{1}/{k_1}}T&=M_2\left(\frac{\zeta_2T}{\epsilon}\right)^{{1}/{k_2}}T\\
    \implies \quad \frac{T}{\epsilon} &= \left(\frac{M_2 \zeta_2^{1/k_2}}{M_1 \zeta_1^{1/k_1}}\right)^{\frac{1}{\frac{1}{k_1}-\frac{1}{k_2}}}.\label{eq:threshold}
\end{align}
This gives the threshold beyond which the higher-order product formula should be used for improved performance.
Again, when considering spectral-norm error, $\zeta$ would be replaced with $\con$.
In the case of eigenvalue estimation, the threshold is for $1/\epsilon_H$ rather than $T/\epsilon$.

A limitation of this analysis is that it is assumed the time step $(\epsilon/(\zeta T))^{1/k}$ is small enough for the scaling law for the error to hold.
To adjust the threshold for cases where the time step is large, we can consider a more general functional dependence of the error on the time interval as $f(t)$.
Then we require for time $T$ and error $\epsilon$ that
\begin{equation}
    f(t) \times T/t = \epsilon \, .
\end{equation}
The cost is then proportional to $M\times r = M T/t$.
For the threshold between two product formulae, we require
\begin{equation}
    M_1 T/t_1 = M_2 T/t_2 \, .
\end{equation}
That implies $t_2 = t_1 M_2/M_1$.
For the total error to be equal to $\epsilon$ in each case, we then require
\begin{equation}
    f_1(t_1) \times T/t_1 = f_2(t_2) \times T/t_2 \, .
\end{equation}
That can be rearranged to give
\begin{equation}
    f_1(t_1) = f_2(t_1 M_2/M_1) \times M_1/M_2 \, .
\end{equation}
This is an expression we can solve for $t_1$, which may then be used to determine the $T/\epsilon$ threshold from
\begin{equation}\label{eq:threshold_timestep}
    T/\epsilon = t_1/f_1(t_1)  \, .
\end{equation}
This means that the threshold is still for the ratio $T/\epsilon$, rather than having separate dependence on $T$ and $\epsilon$.

\subsection{Motivation for considering eigenvalue error}

Here, we give further explanation of why it is important to consider eigenvalue error.
If the goal is to estimate eigenvalues of the Hamiltonian (as is often the case in quantum chemistry), then the measure of the error should be that in the eigenvalues.
That is, how close are the eigenvalues of the product formula to those of the exact Hamiltonian evolution.
In the case that the goal is to accurately reproduce the final state after the evolution, then an appropriate measure of the error should be the spectral norm of the difference of the unitary operators.
This error accounts for both error in the eigenvalues and basis.

That error will upper bound the 2-norm error in the generated quantum state, but using the triangle inequality to bound the error for long evolution times overestimates the error.
This is because the error beyond that in the eigenvalues cancels when using a product of many short time intervals.
Let us denote the exact evolution operator for short time $U$, and the approximate evolution operator provided by the product formula as $\widetilde U$.
In the basis of the Hamiltonian's eigenstates, $U$ is diagonal, and we can diagonalise $\widetilde U$ in this basis as $\widetilde U = V D V^\dagger$.
Then the difference between $D$ and $U$ describes the eigenvalue error, since these are diagonal matrices with the eigenvalues on the diagonal.
The matrix $V$ describes the basis error.

It is convenient to write $V=e^{i\tilde H}$ for some Hermitian matrix $\tilde H$.
Then the error can be given as
\begin{align}
    \| \widetilde U - U \| &= \| e^{i\tilde H} D e^{-i\tilde H} - U \| \nn
    &= \| (I + i\tilde H + \mathcal{O}(\tilde H^2)) D (I - i\tilde H + \mathcal{O}(\tilde H^2)) - U \| \nn
    &= \| i [\tilde H, D] + D-U + \mathcal{O}(\tilde H^2)\| \nn
    &\le \| [\tilde H, D-I] \| + \|D-U\|  + \mathcal{O}(\tilde H^2) \, .
\end{align}
That is, the error can be split up between a part $[\tilde H, D-I]$ corresponding to basis error, and a part $D-U$ corresponding to eigenvalue error.
Note that $D-I$ is proportional to $t$, so the error in the operator due to the basis error is one order higher than the error in the basis.

Then, for the case of a large number of time steps $r$, we can bound the overall error as
\begin{align}
    \| \widetilde U^r - U^r \| &= \| V D^r V^\dagger - U^r \| \nn
    &= \| (V D^r V^\dagger - 
    D^rV^\dagger)
    +(D^r V^\dagger
    - U^r V^\dagger)
    +(U^r V^\dagger
    - U^r ) \| \nn
    &\le \| V - I \| + \|
    D^r-U^r\|+
    \| V^\dagger- I \| \nn
    &\le 2 \| V - I \| + r\| D-U\| \, .
\end{align}
That is, the error due to the eigenvalues is multiplied by a factor of $r$, but the error due to the basis is not.
The error due to the basis can be larger than for a single time step, but only by a factor of $\sim 1/t$.
That factor arises because for a single time step the basis error is commuted with $D-I$ which is proportional to $t$, so the expression here without $D-I$ can be a factor $\propto 1/t$ larger.

In practice we find that in many cases the error in the basis scales as $\mathcal{O}(t^k)$, resulting in the expected contribution to the spectral-norm error $\mathcal{O}(t^{k+1})$.
However, in other cases we find scaling of the error in the basis as $\mathcal{O}(t^{k+1})$, resulting in a higher-order contribution to the spectral-norm error. 
Let us give the error in the eigenvalues as $\zeta t^{k+1}$, and the error in the basis as $\mu t^{k+\nu}$ where $\nu\in[0,1]$ to account for the various scalings found numerically.

Then the choice of $r$ to make the error in the eigenvalues smaller than $\epsilon$ is
\begin{equation}
    r \approx \left( \frac{\zeta T}{\epsilon} \right)^{1/k} T .
\end{equation}
The contribution to the total error from the basis error is 
\begin{equation}
    2\mu (T/r)^{k+\nu} \approx 2\mu \left(\frac{\epsilon}{\zeta T}\right)^{1+\nu/k} \, .
\end{equation}
That gives the ratio of the basis error to the eigenvalue error as approximately
\begin{equation}
    \frac{2\mu}{\epsilon} \left(\frac{\epsilon}{\zeta T}\right)^{1+\nu/k} = \frac{2\mu}{\zeta T} \left(\frac{\epsilon}{\zeta T}\right)^{\nu/k} \, .
\end{equation}
Hence, for large $T$ the eigenvalue error should be dominant.
This means that the eigenvalue error for a single step is the relevant measure to estimate the spectral-norm error for long-time evolution.

\section{Numerical comparison of product formulae}\label{sec:comparison-formulae}
\subsection{Comparison for general matrices}

In what follows, we report on the comparison between product formulae of 4th, 6th, 8th and 10th order based on the threshold provided above.
We analyse those found in this work, and those from prior work including processed product formulae.
When giving the number of stages of a processed product formula, we refer to the number of stages in the kernel (not  $P$), because those are the ones that would be repeated for a simulation over an extended time.

We list 4th order product formulae in \tab{order4}, 6th order in \tab{order6-constant}, 8th order in \tab{order8-constant}, and 10th order in \tab{order10-constant}, and give the constant factors in the error scaling $\con,\zeta$ (spectral norm or eigenvalue error).
These product formulae include some constructed with even $M$ rather than $M=2m+1$ as for our product formulae.
We generated 1024 random Hamiltonians for each product formula, and then estimated the constants $\con$ and $\zeta$ by taking the geometric mean.
Each random Hamiltonian is generated as a sum of two random Hamiltonians $H=A+B$, where $A$ and $B$ are random Hermitian matrices of dimension $64\times 64$ and norm 1.

The size of the time step used for the estimation is $t=e^{-5/2}\approx 0.08$, which we found to be small enough for accurate estimation of the constants.
An exception is the 10th order product formula PP10s23 in the second-last line of \tab{order10-constant}.
That was evaluated for $t=e^{-3/2}\approx 0.2$, rather than $t\approx 0.08$ for the other product formulae.
Estimates for $t\approx 0.08$ are significantly larger, possibly because the coefficients for that product formula were not given to high precision.
We provide the estimates of $\con,\zeta$ for larger $t$ to evaluate it as fairly as possible.

\begin{table}[tbh]
\centering
\begin{tabular}{ |c|c|c|c|c|c|c|c|c| } 
 \hline
 label & $M$ & $S_2$ & processing & reference & $\con$ & $M\con^{1/k}$ & $\zeta$ & $M\zeta^{1/k}$ \\ 
 \hline
  S4m1 & $3$ & Y & N & first fractal product \cite{Suzuki1990} & $4.9\times 10^{-2}$ & 1.41 & $2.3\times 10^{-2}$  & 1.17 \\
  S4m2 & $5$ & Y & N & second fractal product \cite{Suzuki5} & $3.0\times 10^{-3}$ & 1.17 & $3.3\times 10^{-4}$  & 0.67 \\
O4M5 & $5$ & N & N & Ostmeyer Eq.~(40) of \cite{Ostmeyer_2023} & $3.0\times 10^{-4}$ & 0.66 & $7.7\times 10^{-5}$ &  0.47 \\
    \rowcolor{green1}
BM4M6 & $6$ & N & N & $S_6$ Table 2 of \cite{BLANES2002313} & $1.6\times 10^{-4}$ & 0.67 & $2.4\times 10^{-5}$ &  0.42 \\
 PPBCM4m6  & $6$ & N & Y & $P_6 4$ Table 5 of \cite{Blanes2006processing} & $5.7\times 10^{-5}$ & 0.52 & $2.4\times 10^{-5}$  & 0.42 \\
 BCE4m3  & $6$ & Y & Y & $\psi_3^{[4]}$ Table 6 of \cite{blanes2024families} & $1.6\times 10^{-2}$ & 2.15 & $1.5\times 10^{-3}$  & 1.17 \\
  BCE4m4 & $8$ & Y & Y & $\psi_4^{[4]}$ Table 6 of \cite{blanes2024families} & $3.4\times 10^{-4}$ & 1.09 & $9.8\times 10^{-5}$  & 0.80 \\
  BCE4m5 & $10$ & Y & Y & $\psi_5^{[4]}$ Table 6 of \cite{blanes2024families} & $6.7\times 10^{-5}$ & 0.90 & $2.1\times 10^{-5}$  & 0.68 \\
  BCE4m6 & $12$ & Y & Y & $\psi_6^{[4]}$ Table 6 of \cite{blanes2024families} & $2.6\times 10^{-5}$ & 0.85 & $7.0\times 10^{-6}$  & 0.62 \\  
  BCE4m7 & $14$ & Y & Y & $\psi_7^{[4]}$ Table 6 of \cite{blanes2024families} & $1.3\times 10^{-5}$ & 0.85 & $3.0\times 10^{-6}$  & 0.58 \\  
  BCE4m8 & $16$ & Y & Y & $\psi_8^{[4]}$ Table 6 of \cite{blanes2024families} & $7.9\times 10^{-5}$ & 0.85 & $1.5\times 10^{-6}$  & 0.56 \\  
  BCE4m9 & $18$ & Y & Y & $\psi_9^{[4]}$ Table 6 of \cite{blanes2024families} & $5.3\times 10^{-6}$ & 0.86 & $8.6\times 10^{-7}$  & 0.55 \\ 
   \hline
\end{tabular}
\caption{The list of 4th order product formulae.
The column labelled ``processing'' indicates whether the formula uses processor.
The label is the name we will refer to the formula by.
The constant factor in the error is denoted $\con$ for spectral-norm error and $\zeta$ for eigenvalue error, and the corresponding quantities $M\con^{1/k}$ and $M\zeta^{1/k}$ are given.
The column labelled $S_2$ indicates whether the product formula is a product of $S_2$.
The best result is highlighted in green, and is from 2002 \cite{BLANES2002313}.
The 2006 result PPBCM4m6 is slightly higher error, but it is not evident to 2 significant figures.}
\label{tab:order4}
\end{table}

\begin{table}[tbh]
\centering
\begin{tabular}{ |c|c|c|c|c|c|c|c|c| } 
 \hline
 label & $M$ & $S_2$ & processing & reference & $\con$ & $M\con^{1/k}$ & $\zeta$ & $M\zeta^{1/k}$ \\ 
 \hline
   S6m1 & $9$ & Y & N & first fractal product \cite{Suzuki1990} & $4.5\times 10^{-2}$ & 5.36 & $2.3\times 10^{-2}$  & 4.81 \\
  S6m2 & $25$ & Y & N & second fractal product \cite{Suzuki5} & $1.3\times 10^{-5}$ & 3.83 & $2.0\times 10^{-7}$  & 1.91 \\ 
 Y6m3a & $7$ & Y & N & Solution A in Table 1 of \cite{Yoshida1990} & $2.0\times 10^{-3}$ & 2.49 & $1.2\times 10^{-3}$ & 2.28 \\
  KL6s9a & $9$ & Y & N & s9odr6a in Appendix A  of \cite{Kahan1997integrators} & $2.9\times 10^{-4}$ & 2.31 & $1.7\times 10^{-4}$  & 2.12 \\
 KL6s9b & $9$ & Y & N & s9odr6b in Appendix A  of \cite{Kahan1997integrators} & $2.9\times 10^{-4}$ & 2.31 & $1.7\times 10^{-4}$ & 2.11  \\
 SS6s11 & $11$ & Y & N & Section 4.2 of \cite{Sofroniou2005integrators} & $4.2\times 10^{-5}$ & 2.05 & $1.4\times 10^{-5}$ & 1.71  \\
 SS6s13 & $13$ & Y & N & Section 4.2 of \cite{Sofroniou2005integrators} & $2.1\times 10^{-5}$ & 2.16 & $3.4\times 10^{-6}$ &  1.59  \\
  BM6M10 & $10$ & N & N & $S_{10}$ Table 2 of \cite{BLANES2002313} & $6.0\times 10^{-6}$ & 1.35 & $1.4\times 10^{-6}$  & 1.06 \\  
    \rowcolor{green1}
PPBCM6m9 & $9$ & N & Y & $P_9 6$ Table 5 of \cite{Blanes2006processing} & $5.0\times 10^{-6}$ & 1.18 & $4.3\times 10^{-7}$  & 0.78 \\
 PPBCM6m5 & $11$ & Y & Y & $P_{11}6$ in Table 6 of \cite{Blanes2006processing} & $1.9\times 10^{-6}$ & 1.22 & $9.0\times 10^{-7}$ & 1.08 \\
 PPBCM6m6 & $13$ & Y & Y & $P_{13}6$ in Table 6 of \cite{Blanes2006processing} & $4.8\times 10^{-7}$ & 1.15 & $2.5\times 10^{-7}$ & 1.03 \\
  BCE6m5 & $10$ & Y & Y & $\psi_5^{[6]}$ Table 8 of \cite{blanes2024families} & $5.8\times 10^{-3}$ & 4.24 & $3.2\times 10^{-3}$  & 3.84 \\
  BCE6m6 & $12$ & Y & Y & $\psi_6^{[6]}$ Table 8 of \cite{blanes2024families} & $7.7\times 10^{-5}$ & 2.47 & $2.0\times 10^{-5}$  & 1.98 \\  
  BCE6m7 & $14$ & Y & Y & $\psi_7^{[6]}$ Table 8 of \cite{blanes2024families} & $1.6\times 10^{-5}$ & 2.22 & $2.4\times 10^{-6}$  & 1.62 \\  
  BCE6m8 & $16$ & Y & Y & $\psi_8^{[6]}$ Table 8 of \cite{blanes2024families} & $4.6\times 10^{-6}$ & 2.06 & $3.2\times 10^{-7}$  & 1.32 \\  
  BCE6m9 & $18$ & Y & Y & $\psi_9^{[6]}$ Table 8 of \cite{blanes2024families} & $7.3\times 10^{-6}$ & 2.51 & $1.3\times 10^{-7}$  & 1.28 \\  
  BCE6m10 & $20$ & Y & Y & $\psi_{10}^{[6]}$ Table 8 of \cite{blanes2024families} & $3.2\times 10^{-6}$ & 2.42 & $9.9\times 10^{-9}$  & 0.93 \\  
  BCE6m11 & $22$ & Y & Y & $\psi_{11}^{[6]}$ Table 8 of \cite{blanes2024families} & $7.9\times 10^{-6}$ & 3.11 & $1.4\times 10^{-8}$  & 1.08 \\  
  \hline
\end{tabular}
\caption{The list of 6th order product formulae with their average errors.
The column labelled $S_2$ indicates whether it is a product of $S_2$.
The best result is highlighted in green, and is from 2006 \cite{Blanes2006processing}.}
\label{tab:order6-constant}
\end{table}

\begin{table}[tbh]
\centering
\begin{tabular}{ |c|c|c|c|c|c|c|c| } 
 \hline
 label & $M$ & processing & reference & $\con$ & $M\con^{1/k}$ & $\zeta$ & $M\zeta^{1/k}$ \\ 
 \hline
   S8m1 & $27$ & N & first fractal product \cite{Suzuki1990} & $5.6\times 10^{-2}$ & 18.8 & $1.6\times 10^{-2}$  & 16.2 \\
  S8m2 & $125$ & N & second fractal product \cite{Suzuki5} & $6.7\times 10^{-9}$ & 11.9 & $5.2\times 10^{-13}$  & 3.64 \\ 
 Y8m7d & $15$ & N & Solution D in Table 2 of \cite{Yoshida1990} & $1.1\times 10^{-3}$ & 6.41 & $1.3\times 10^{-4}$ & 4.89 \\
MC8s15 & $15$ & N & Table 2 of \cite{McLachlan95} & $6.5\times 10^{-6}$ & 3.37 & $2.0\times10^{-6}$ & 2.90 \\
MC8s17 & $17$ & N & Table 2 of \cite{McLachlan95} & $7.9\times 10^{-7}$ & 2.94 & $3.3\times10^{-7}$ & 2.63 \\
KL8s17a & $17$ & N & s17odr8a in \cite{Kahan1997integrators} & $6.1\times 10^{-7}$ & 2.84 & $2.7\times10^{-7}$ & 2.56 \\
KL8s17b & $17$ & N & s17odr8b in \cite{Kahan1997integrators}  & $5.9\times 10^{-7}$ & 2.83 & $2.5\times10^{-7}$ & 2.54 \\
\rowcolor{blue2}
SS8s19 & $19$ & N & Section 4.3 of \cite{Sofroniou2005integrators} & $1.8\times 10^{-7}$ & 2.72 & $5.3\times 10^{-8}$ & 2.34 \\
SS8s21 & $21$ & N & Section 4.3 of \cite{Sofroniou2005integrators} & $3.4\times 10^{-7}$ & 3.27 & $7.5\times 10^{-8}$ & 2.70 \\
PP8s13 & $13$ & Y & $P_{13}8$ in Table 6 of \cite{Blanes2006processing} & $1.2\times 10^{-6}$ & 2.37 & $6.5\times 10^{-7}$ & 2.19 \\
\rowcolor{green2}
PP8s19 & $19$ & Y & $P_{19}8$ in Table 6 of \cite{Blanes2006processing} & NA & NA & $2.4\times 10^{-7}$ & 2.83 \\
Y8m10 & $21$ & N & \cref{tab:order8-solm10} (our new result) & $5.8\times 10^{-8}$ & 2.61 & $7.0\times 10^{-9}$ & 2.01 \\
\rowcolor{blue1}
Y8m10b & $21$ & N & \cref{tab:order8-solm10} (our new result) & $6.3\times 10^{-7}$ & 3.53 & $5.4\times 10^{-10}$ & 1.46 \\
\rowcolor{green1}
YP8m8 & $17$ & Y & Table \ref{tab:process} (our new result) & $5.3\times 10^{-8}$ & 2.09 & $8.1\times 10^{-10}$ & 1.24 \\
 \hline
\end{tabular}
\caption{The list of 8th order product formulae with their average errors.
This table shows our new results for product formulae that improve over those in the prior literature.
The result that provides the most efficient simulations is highlighted in (dark) green, and is our processed product formula.
That provides the best performance due to the shorter length, but the lowest error is provided by our product formula highlighted in (dark) blue.
The lowest error results from prior work are highlighted in light blue (for the non-processed case), and light green (for the processed case).
The processed product formula PP8s13 ($P_{13}8$) has larger error than the highlighted PP8s19, but better performance due to its shorter length.}
\label{tab:order8-constant}
\end{table}

\begin{table}[htb]
\centering
\begin{tabular}{ |c|c|c|c|c|c|c|c| } 
 \hline
 label & $M$ & processing & reference & $\con$ & $M\con^{1/k}$ & $\zeta$ & $M\zeta^{1/k}$ \\ 
 \hline
S10m1 & $81$ & N & first fractal product \cite{Suzuki1990} & $9.0\times 10^{-2}$ & 63.7 & $2.7\times 10^{-3}$  & 44.8 \\
S10m2 & $625$ & N & second fractal product \cite{Suzuki5} & $4.1\times 10^{-13}$ & 36.0 & $6.1\times 10^{-19}$  & 9.44 \\ 
KL10s31a & $31$ & N & s31odr10a in Appendix A of \cite{Kahan1997integrators} & $8.7\times10^{-6}$ & 9.67 & $5.8\times10^{-6}$ & 9.28 \\
KL10s31b & $31$ & N & s31odr10b in Appendix A of \cite{Kahan1997integrators} & $8.4\times10^{-5}$ & 12.1 & $4.1\times10^{-5}$ & 11.3 \\
Tsi10s33 & $33$ & N & Table II of \cite{Tsitouras1999} & $3.3\times 10^{-6}$ & 9.33 & $6.4\times 10^{-7}$  & 7.93 \\ 
SS10s31 & $31$ & N & Section 4.4 of \cite{Sofroniou2005integrators} & $4.2\times10^{-8}$ & 5.66 & $2.4\times 10^{-8}$ & 5.36  \\
SS10s33 & $33$ & N & Section 4.4 of \cite{Sofroniou2005integrators} & $1.4\times10^{-8}$ & 5.42 & $8.8\times 10^{-9}$ & 5.17  \\
\rowcolor{green1}
SS10s35 & $35$ & N & Section 4.4 of \cite{Sofroniou2005integrators} & $1.0\times10^{-9}$ & 4.41 & $3.1\times 10^{-11}$ &  3.11 \\
Alberdi31 & $31$ & N & Appendix A of \cite{alberdi2019algorithm} &  $1.5\times 10^{-7}$ & 6.44 & $1.0\times 10^{-7}$ & 6.20 \\
Alberdi33 & $33$ & N & Appendix A of \cite{alberdi2019algorithm} &  $8.4\times 10^{-8}$ & 6.47 & $5.5\times 10^{-8}$ & 6.20 \\
Alberdi35 & $35$ & N & Appendix A of \cite{alberdi2019algorithm} &  $1.5\times 10^{-8}$ & 5.79 & $9.5\times 10^{-9}$ & 5.52 \\
PP10s19 & $19$ & Y & $P_{19}10$ in Table 6 of \cite{Blanes2006processing} & NA & NA & $6.2\times 10^{-6}$ & 5.73 \\
PP10s23 & $23$ & Y & $P_{23}10$ in Table 6 of \cite{Blanes2006processing} & $1.4\times10^{-6}$ & 5.96 & $1.6\times 10^{-8}$ & 3.82 \\
Y10m17 & $35$ & N & \cref{tab:order10-solm17} & $1.9\times 10^{-8}$ & 5.91 & $6.1\times 10^{-11}$ & 3.33 \\
 \hline
\end{tabular}
\caption{The list of 10th order product formulae with their average errors.
The best result is highlighted in green, and is the product formula from 2005 \cite{Sofroniou2005integrators}.
That product formula provides exceptional performance for 10th order, being about 280 times better than all others from prior work, and half the error of our solution (bottom line).}
\label{tab:order10-constant}
\end{table}

To compare among product formulae of the same order, we compare $M\zeta^{1/k}$ for eigenvalue error, and the best results in each table are highlighted.
We also provide values of $M\con^{1/k}$ for comparison, but these are not used to select the best product formulae.
For 8th order it can be seen that our new processed product formula provides the best performance out of all formulae tested.
Our two new product formulae of length $m=10$ ($M=21$) also give better performance than any prior product formulae tested, including processed product formulae.
Hence we have given a sequence of three new solutions here, all of which provide improved performance over prior work.
There are also other results in our database of over 100,000 solutions that outperform prior work, but we have just given a selection of the best here.

In terms of eigenvalue error, our solution Y8m10b is about 100 times more accurate than the lowest-error prior result, SS8s19.
Our processed product formula YP8m8 is about a factor of 300 times more accurate than the lowest-error prior result for processed product formulae, PP8S19, which is longer.
These surprisingly large improvements demonstrate that the solution space had not been fully explored in prior work.
Each increase in $m$ greatly expands the solution space to explore, and it is likely that there are further improvements still to be found.
The error for our processed product formula YP8m8 has slightly larger error than our non-processed formula Y8m10b, but its shorter length means that $M\zeta^{1/k}$ is smaller, thus providing the best performance out of all product formulae tested.

These improvements are consistent between the samples of the Hamiltonians.
In comparing our formula Y8m10b to the best from prior work SS8s19, our \emph{smallest} improvement (out of 1024 samples) was about a factor of 35.
A histogram of the improvement for our new product formula is shown in Fig.~\ref{fig:Hist}.
We also find that the improvements are consistent between sizes of the matrices.
We have tested sizes 6, 16, and 32 as well as 64 (which is shown in the tables), and find significant improvements at all sizes tested.

\begin{figure}
	\includegraphics[scale=0.6]{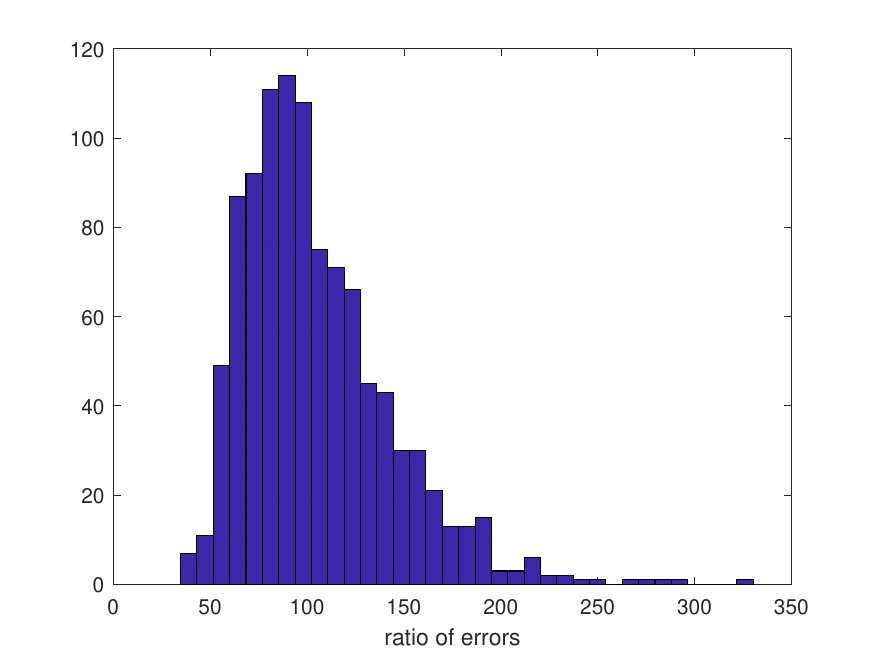}
	\caption{A histogram of the ratio of the error of the most accurate prior product formula SS8s19 to our new solution, in the case without processing.
 The histogram is over the 1024 samples of $64\times 64$ matrices.
 The average improvement is about a factor of 100, with the lowest being about 35.
}\label{fig:Hist}
\end{figure}

A recent review \cite{Blanes_Casas_Murua_2024} also provides a comparative analysis of many of the product formulae in the literature, and for 8th order their recommended methods are SS8s21 in the case without processing, and PP8s13 for the case with processing.
In our testing we find that SS8s19 is slightly better than SS8s21; this difference may be because we are testing Hamiltonians here, whereas Ref.~\cite{Blanes_Casas_Murua_2024} considers general matrices (which are non-Hermitian).
PP8s19 provides lower error than PP8s13, but PP8s13 provides better performance due to the shorter length (and correspondingly smaller value of $M\zeta^{1/k}$).
The three new results we have listed in Table \ref{tab:order8-constant} all outperform those methods recommended by Ref.~\cite{Blanes_Casas_Murua_2024}.

As a further test, we have calculated the error for simulating the transverse-field Ising model for an 8-qubit chain.
We used a time step of 1, with the two components of the Hamiltonian normalised to unit norm, so it is
\begin{equation}\label{eq:Ising}
    H = -\frac 17 \sum_{j=1}^7 Z_j Z_{j+1} + \frac 18 \sum_{j=1}^8 X_j \, .
\end{equation}
We then determined the spectral-norm error for up to 1000 steps, for our minimum eigenvalue error solution Y8m10b, as well as SS8s19 and SS8s21 which provided the minimum eigenvalue error out of results presented in prior work.
As seen in Figure~\ref{fig:Ising}, the error is far smaller for our new product formula than that provided by those from prior work.
The initial error is larger as indicated by the larger spectral-norm error for a single step, but then the error for a large number of time steps is dominated by the eigenvalue error which is much smaller for this product formula.
It is also surprising that the simulation is highly accurate for a time step of 1, which is not the short time step normally considered for scaling of the error.
The eigenvalue error for a single time step of 1 is only $2.7\times 10^{-12}$ for our product formula.

\begin{figure}
	\includegraphics[scale=0.4]{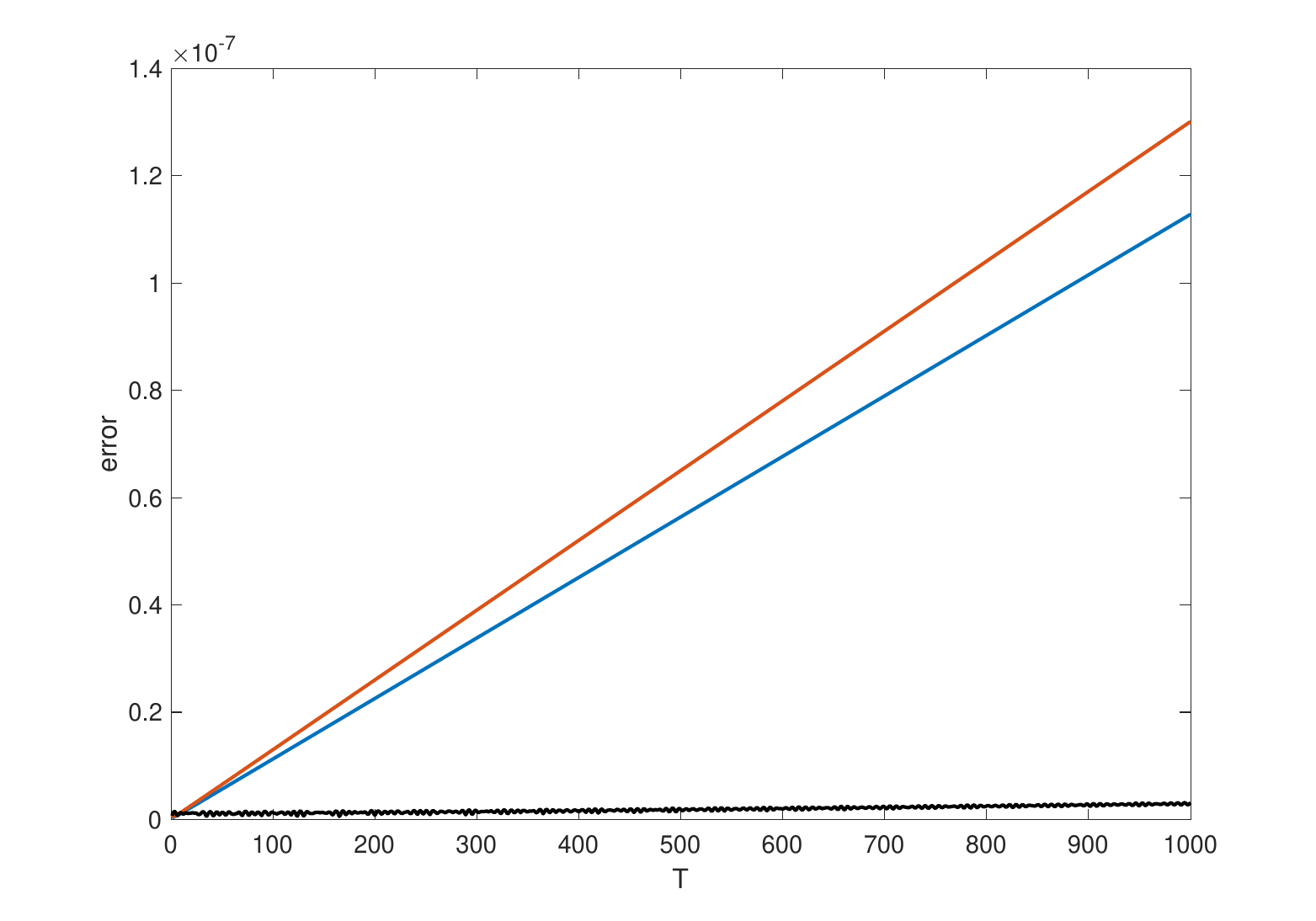}
	\caption{The spectral-norm error for simulating an 8-qubit transverse-field Ising Hamiltonian in Eq.~\eqref{eq:Ising} for our product formula Y8m10b (black), and SS8s19 (blue) and SS8s21 (red) from Ref.~\cite{Sofroniou2005integrators}.
 The size of the time step is 1, and the components of the Hamiltonian are normalised to unit norm.
}\label{fig:Ising}
\end{figure}

In the case of 10th order, we find that the best performance is provided by SS10s35, from Sofroniou and Spaletta \cite{Sofroniou2005integrators} (which is recommended by Ref.~\cite{Blanes_Casas_Murua_2024}).
The performance of that product formula is extraordinarily good, with error nearly a factor of 280 times less than any others we have tested from prior work.
Our solution Y10m17 provides intermediate performance, having error twice as large as SS10s35, but a factor of 140 times better than any others.
For example, it outperforms the processed product formula PP10s23, which was also recommended by Ref.~\cite{Blanes_Casas_Murua_2024}.
The fact that there is a very exceptional solution at one particular length for 10th order suggests that there are better solutions still to be found, motivating further searches in future work.

For 4th order the best result for eigenvalue error is given by the formula BM4M6 of Blanes and Moan~\cite{BLANES2002313}, which is not in the form of a product of $S_2$.
The processed product formula PPBCM4m6 also provides excellent performance, with a marginally larger error in our testing.
A recently proposed formula O4M5 of Ostmeyer~\cite{Ostmeyer_2023} also gives good performance.
In the 6th order case we find that the best performance is given by PPBCM6m9, which is a processed product formula not constructed as a product of $S_2$.
The recently proposed processed product formula BCE6m10 also gives excellent performance with low error, but is much longer, resulting in a larger value of $M\zeta^{1/k}$.

In all cases the fractal products provide poor performance.
The second fractal product yields very low error but very large $M$, equal to 125 for 8th order and 625 for 10th order.
This is far beyond that for any other product formulae considered here, making this product formula uncompetitive despite the small error.
The first fractal product yields large error as well as large $M$, making its performance by far the worst out of any product formula tested. 
We also tested various combinations of the two formulae (by using different formulae in the iteration to increase the order) or using fractal formulae to increase the order of lower-order formulae.
None of these provided competitive performance, so are not included in the tables.

To consider the threshold to use a higher-order product formula over a lower-order one, we first consider the asymptotic expression using $M\zeta^{1/k}$ for eigenvalue error. 
For 4th order the value of this quantity is 0.42 (BM4M6) and for 6th order is 0.78 (PPBCM6m9), which gives a threshold of 1700.
The value for the best 8th order is 1.24 (our result YP8m8), which gives a threshold of around 6000 when compared with 4th order and 68,000 when compared with 6th order.

At 10th order the best value is 3.11 (SS10s35), which when compared with 8th order gives a threshold of $9.3\times 10^{15}$.
This is far greater than can be expected for realistic simulations.
To see the reason for this, if we use Eq.~\eqref{eq:totexpspha} for the approximate number of exponentials needed for phase estimation (where the threshold is for $1/\epsilon_H$), it would be more than $3\times 10^{18}$.
In realistic implementations of quantum chemistry it takes about $10^4$ Toffoli gates per exponential, so this would correspond to a Toffoli complexity over $10^{22}$.
For comparison, in the analysis of Shor's algorithm in Ref.~\cite{Gidney2021}, $3\times 10^9$ Toffolis were estimated to take 8 hours.
Even with improvements in gate speed, a simulation sufficiently large for 10th order product formulae to provide an advantage would require a computation time of many millions of years.

This threshold between 8th and 10th order can be expected to be accurate (at least for the estimate of the order of magnitude), because it corresponds to small time steps below $0.3$.
On the other hand, the other thresholds would correspond to large time steps, so would be inaccurate because the asymptotic expressions break down.
The calculated threshold between 6th and 8th order would correspond to a time step larger than 3, which is far too large for the scaling of the error to be accurate.

In this case we find numerically that the threshold between these 6th and 8th order formulae (PPBCM6m9 and YP8m8) is increased dramatically, to about $3.2\times 10^{7}$, corresponding to a time step of about $1.22$.
If the target $\epsilon$ is $0.01$ for example, that would correspond to about 260,000 time steps.
It is possible to optimise our kernel for the larger time step size in order to provide better performance.
This optimisation provides a kernel (see Table \ref{tab:processl}) such that the threshold is reduced about a factor of twelve to $2.8\times 10^{6}$.
This now corresponds to 15,000 time steps with a time step size of about $1.82$.
This is fewer time steps than expected for simulations of classically intractable systems.
Thus the optimisation provides another order of magnitude where 8th order product formulae are optimal, albeit with a product formula tailored to the time step size.

\begin{table}[tbh]
\centering
\begin{tabular}{ |c|c| } 
 \hline
 & kernel for large time step \\ 
 \hline
 $w_1$ & $0.1777372900430394$  \\ 
 $w_2$ & $0.2862580532195395$  \\
 $w_3$ & $0.1701306063199336$  \\
 $w_4$ & $-0.3746748008394162$  \\
  $w_5$ & $0.1485267804844835$  \\
  $w_6$ & $-0.3773225725485588$  \\
  $w_7$ & $0.5395886879620081$   \\
  $w_8$ & $0.2210419534887659$  \\
 \hline
\end{tabular}
\caption{The 8th order kernel tailored for large time step size $t=1.82$ to improve the threshold.}
\label{tab:processl}
\end{table}

One must also be careful in considering thresholds based on eigenvalue error, because there needs to be a large number of time steps such that the spectral-norm error in the complete evolution time is dominated by the eigenvalue error.
There is a factor of about 65 between the spectral-norm error and the eigenvalue error for the 8th order product formula YP8m8.
For these thresholds the number of time steps is orders of magnitude larger than this, which indicates that the eigenvalue error is an appropriate measure.

These threshold calculations are for Hamiltonians that are sums of two terms, each of which is normalised.
In the case where the Hamiltonian is not normalised, the threshold should be scaled by the norm.
In the simple case where the Hamiltonian is a sum of two terms with equal norms $\|A\|=\|B\|$, then the threshold will be for $\|A\|T/\epsilon$ rather than $T/\epsilon$.
More generally the Hamiltonian may be a sum of any number of terms with different norms, which would change the threshold and make it unclear what quantity the threshold should be considered in terms of.
There is also the possibility that the threshold can be changed for Hamiltonians that have structure to them instead of being random, for example those for quantum chemistry.

\subsection{Comparison for fermionic Hamiltonians}

Fermionic Hamiltonians encountered in quantum chemistry often have the form 
\begin{align}\label{eq:fermion-ham}
    \sum_{p,q=1}^{d} \tau_{pq} a^{\dag}_{p} a_q + \sum_{p,q=1}^d \nu_{pq} a^{\dag}_{p} a_p a^{\dag}_{q} a_q
\end{align}
where $a^\dag_p$ and $a_p$ are the fermionic creation and annihilation operators acting on orbital $p$, and there are a total of $d$ orbitals.
Each entry $\tau_{pq}$, $\nu_{pq}$ is real and there is symmetry in exchanging indices.
The behaviour of the error as the size of system is changed can be predicted based on the result in Theorem 4 of \cite{Low2022trotter}.

\begin{theorem}[Theorem 4 in \cite{Low2022trotter}]
Let $H=T+V=\sum_{p,q}^{d} \tau_{pq} a^{\dag}_{p} a_q + \sum_{p,q}^d \nu_{pq} a^{\dag}_{p} a_p a^{\dag}_{q} a_q$ be an interacting-electronic Hamiltonian, and $S_k(t)$ be a $k$th order product formula splitting the evolutions under $T$ and $V$. Then
\begin{equation}\label{eq:Low_bound}
    \norm{S_k(t)-e^{-itH}}_{W_\eta}=\order{(\norm{\tau}_{1}+\norm{\nu}_{1,[\eta]})^{k-1}\norm{\tau}_1\norm{\nu}_{1,[\eta]}\eta t^{k+1}} \, ,
\end{equation}
where $\norm{\cdot}_{W_\eta}$ corresponds to the operator norm on the operator acting in the $\eta$-electron subspace, $\norm{\tau}_1 = \max_p \sum_{q} \abs{\tau_{pq}}$ and $\norm{\nu}_{1,[\eta]}=\max_p \max_{q_1<q_1<...<q_\eta}(\abs{\nu_{p q_1}}+\cdots+\abs{\nu_{pq_\eta}})$.
\end{theorem}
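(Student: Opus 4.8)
\emph{Proof proposal.} I would follow the operator-theoretic route to Trotter error (Duhamel / variation of parameters, then commutator bounds), specialised to the particle-number grading of fermionic operators. \textbf{Step 1: the error as a commutator sum.} Since $S_k(t)$ is an order-$k$ formula assembled from the exponentials of $-ic_jtT$ and $-id_jtV$, standard manipulations (the exact integral representation of the error, in the spirit of the BCH-type expansions of \cref{thm:BCH} and \cref{cor:symmetric_BCH}) give a representation
\begin{equation}
  S_k(t)-e^{-itH}=\int_0^t e^{-i(t-s)H}\,\mathcal{R}_k(s)\,ds ,
\end{equation}
where $\mathcal{R}_k(s)$ is a finite linear combination, with a number of terms and coefficients depending only on $k$, of $s^k$ times $(k+1)$-fold nested commutators $[H_{i_{k+1}},[\dots,[H_{i_2},H_{i_1}]]]$ with every $H_{i_j}\in\{T,V\}$, plus higher-order terms that are subleading and absorbed into the $O(\cdot)$. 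Because $[T,T]=[V,V]=0$, such a commutator vanishes unless its innermost pair is mixed, so every surviving term contains at least one $T$ and one $V$; since $\|e^{-i(t-s)H}\|_{W_\eta}=1$, the problem reduces to bounding $\|[H_{i_{k+1}},[\dots,[H_{i_2},H_{i_1}]]]\|_{W_\eta}$ for mixed strings, with the $t^{k+1}$ of the claim supplied by $\int_0^t s^k\,ds$.

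\textbf{Step 2: subspace-restricted adjoint bounds and assembly.} Writing $n_p=a_p^\dagger a_p$ and using the canonical (anti)commutation relations in the forms $[a_p^\dagger a_q,a_r^\dagger a_s]=\delta_{qr}a_p^\dagger a_s-\delta_{ps}a_r^\dagger a_q$, $[n_p,a_r^\dagger a_s]=(\delta_{pr}-\delta_{ps})a_r^\dagger a_s$, and $[n_p,n_q]=0$, I would show that on the $\eta$-electron subspace the superoperators $\mathrm{ad}_T=[T,\cdot]$ and $\mathrm{ad}_V=[V,\cdot]$, acting on the ($k$-bounded, finite-dimensional) space of low-body number-conserving operators restricted to $W_\eta$, have induced norms $O(\|\tau\|_1)$ and $O(\|\nu\|_{1,[\eta]})$ respectively: $\mathrm{ad}_T$ contracts a single index of its argument against $\tau$, producing the row-sum norm $\|\tau\|_1$; and $\mathrm{ad}_V$ decorates its argument with factors $\sum_p\nu_{pr}n_p$, which on $W_\eta$ have norm at most the sum of the $\eta$ largest $|\nu_{pr}|$, i.e.\ at most $\|\nu\|_{1,[\eta]}$. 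Separately, the innermost mixed commutator carries one extra factor of the particle number, $\|[T,V]\|_{W_\eta}=O(\eta\,\|\tau\|_1\|\nu\|_{1,[\eta]})$, since $[T,V]$ is a one-body fermionic string weighted by number operators and the norm of a one-body operator on $W_\eta$ scales with $\eta$. Peeling a surviving $(k+1)$-fold mixed commutator from the inside out then gives the base factor $O(\eta\,\|\tau\|_1\|\nu\|_{1,[\eta]})$ for the innermost pair and $k-1$ further factors each $O(\|\tau\|_1)$ or $O(\|\nu\|_{1,[\eta]})$, hence each at most $O(\|\tau\|_1+\|\nu\|_{1,[\eta]})$, so each term is $O\big(\eta\,\|\tau\|_1\|\nu\|_{1,[\eta]}(\|\tau\|_1+\|\nu\|_{1,[\eta]})^{k-1}\big)$; summing the $O_k(1)$ terms and inserting into the integral yields \eqref{eq:Low_bound}.

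\textbf{Main obstacle.} The crux is Step 2: establishing the adjoint bounds with the \emph{sharp} norms (the row-sum norm $\|\tau\|_1$ in place of $\|T\|$, and the $\eta$-truncated norm $\|\nu\|_{1,[\eta]}$ in place of $\|\nu\|_1$), and, crucially, verifying that iterating the commutators accumulates only \emph{one} overall factor of $\eta$ and at most $\|\tau\|_1+\|\nu\|_{1,[\eta]}$ per operator — in particular that commuting repeatedly with $V$ (which dresses the expression with products of number operators) and then with $T$ does not reintroduce a particle-number factor at each level, and that the body-degree, though it may grow by a bounded amount per commutator, stays $O(k)$. Carrying this out through the induced norms of $\mathrm{ad}_T$ and $\mathrm{ad}_V$ as superoperators — rather than bounding each nested commutator by hand — is what keeps the $k$-dependent constants and the bookkeeping under control; the remainder is routine.
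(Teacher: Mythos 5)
This statement is not proven in the paper at all: it is imported verbatim as Theorem 4 of Ref.~\cite{Low2022trotter}, and the paper only cites it (remarking later that the proof in that reference proceeds by simplifying the multicommutator expressions of its Lemma 2). So the only meaningful comparison is with the proof in the cited work.

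Your outline follows essentially the same route as that proof: write the order-$k$ Trotter error as an integral over $(k+1)$-fold nested commutators of $T$ and $V$ (the commutator-scaling error representation), note that only mixed commutators survive, and then bound the nested commutators in the $\eta$-electron subspace with the refined norms. The gap is that the decisive content of the theorem --- that each successive $\mathrm{ad}_T$ or $\mathrm{ad}_V$ costs only $\norm{\tau}_1$ or $\norm{\nu}_{1,[\eta]}$ (rather than $\norm{T}_{W_\eta}$ or $\norm{V}_{W_\eta}$, which are themselves of size $\Theta(\eta)$), and that exactly one overall factor of $\eta$ survives --- is asserted rather than established in your Step 2, as you yourself concede under ``main obstacle.'' A generic induced-norm argument for $\mathrm{ad}_T,\mathrm{ad}_V$ as superoperators on ``low-body number-conserving operators restricted to $W_\eta$'' does not obviously yield the truncated norm $\norm{\nu}_{1,[\eta]}$: the restriction to $W_\eta$ gives $\|\sum_p \nu_{pr} n_p\|_{W_\eta}\le\norm{\nu}_{1,[\eta]}$ for each fixed $r$, but it does not follow that the induced norm of $\mathrm{ad}_V$ on the relevant operator space is $O(\norm{\nu}_{1,[\eta]})$ uniformly over all $k$ nesting levels. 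In Ref.~\cite{Low2022trotter} this is obtained by tracking the explicit structure of the nested commutators --- they remain linear combinations of fermionic strings (one-body skeletons dressed with number operators) whose coefficient tensors are iterated contractions of $\tau$ and $\nu$ --- and then applying fermionic-seminorm bounds to those strings, with the single factor of $\eta$ arising only from the final norm evaluation of the one-body skeleton on $W_\eta$. Without that structural bookkeeping the bound \eqref{eq:Low_bound} does not follow; so your skeleton matches the cited proof, but the step you flag as the crux \emph{is} the theorem, and it is not routine.
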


To fairly compare product formulae as applied to quantum chemistry, we can define $\omega$ so that
\begin{align}
    \zeta &= \omega \, (\norm{\tau}_{1}+\norm{\nu}_{1,[\eta]})^{k-1}\norm{\tau}_1\norm{\nu}_{1,[\eta]} \eta \, .
\end{align}
Note that the expression in Ref.~\cite{Low2022trotter} was derived for error in the spectral norm, but that also provides an upper bound on the error in the eigenvalues, so it is reasonable to consider the constant defined for eigenvalue error.
Then the formula for the threshold $T/\epsilon$ becomes
\begin{align}
\frac{T}{\epsilon} &= \left(\frac{M_2 [\omega_2 (\norm{\tau}_{1}+\norm{\nu}_{1,[\eta]})^{k_2-1}\norm{\tau}_1\norm{\nu}_{1,[\eta]}\eta]^{1/k_2}}{M_1 [\omega_1 (\norm{\tau}_{1}+\norm{\nu}_{1,[\eta]})^{k_1-1}\norm{\tau}_1\norm{\nu}_{1,[\eta]}\eta]^{1/k_1}}\right)^{\frac{1}{\frac{1}{k_1}-\frac{1}{k_2}}} \nn
&= \left(\frac{M_2 [\omega_2 \norm{\tau}_1\norm{\nu}_{1,[\eta]}\eta/(\norm{\tau}_{1}+\norm{\nu}_{1,[\eta]})]^{1/k_2}}{M_1 [\omega_1 \norm{\tau}_1\norm{\nu}_{1,[\eta]}\eta/(\norm{\tau}_{1}+\norm{\nu}_{1,[\eta]})]^{1/k_1}}\right)^{\frac{1}{\frac{1}{k_1}-\frac{1}{k_2}}} \nn
\frac{\norm{\tau}_1\norm{\nu}_{1,[\eta]}\eta}{\norm{\tau}_{1}+\norm{\nu}_{1,[\eta]}} \frac{T}{\epsilon} &= \left(\frac{M_2 \omega_2^{1/k_2}}{M_1 \omega_1^{1/k_1}}\right)^{\frac{1}{\frac{1}{k_1}-\frac{1}{k_2}}} .
\label{chemthresh}
\end{align}
For estimation of eigenvalues (rather than simulation of time evolution) the same derivation holds with $T/\epsilon$ replaced by $1/\epsilon_H$.
Thus we see that the ratio ${\norm{\tau}_1\norm{\nu}_{1,[\eta]}\eta}/(\norm{\tau}_{1}+\norm{\nu}_{1,[\eta]})$ governs the threshold where a higher-order product formula will improve over a lower-order product formula.
This is somewhat different from the factor of the norm of the Hamiltonian that might otherwise be expected for unstructured Hamiltonians.

To analyse the non-asymptotic regime, we rewrite the error to be proportional to
\begin{equation}
    \frac{\norm{\tau}_1\norm{\nu}_{1,[\eta]}\eta}{\norm{\tau}_{1}+\norm{\nu}_{1,[\eta]}} t \times \left[ {t}\left(\norm{\tau}_{1}+\norm{\nu}_{1,[\eta]}\right)\right]^{k}.
\end{equation}
Following the derivation of Ref.~\cite{Low2022trotter}, it is easily seen that the contributions to the error from each of the orders higher than $k+1$ will be of a similar form.
These higher-order contributions to the error can be used to determine the error in the non-asymptotic regime where the order $k+1$ term is not dominant.

To see why that is the case, note that Lemma 2 of Ref.~\cite{Low2022trotter} describes the error at order $k+1$ ($p+1$ in the notation of that work) in terms of a sum over norms of multicommutators of $k+1$ terms in the Hamiltonian.
It is easily seen that the error at each of the higher orders will also be given by similar multicommutator forms.
The method of proof of Theorem 4 in Ref.~\cite{Low2022trotter} is based on simplifying the multicommutator expressions from Lemma 2, and proceeds in exactly the same way for the higher orders.

Therefore, the sum over errors of all orders would give a total error of the form
\begin{equation}
    \frac{\norm{\tau}_1\norm{\nu}_{1,[\eta]}\eta}{\norm{\tau}_{1}+\norm{\nu}_{1,[\eta]}} t \times \sum_{j=k+1}^\infty \xi_j \left[ t \left(\norm{\tau}_{1}+\norm{\nu}_{1,[\eta]}\right)\right]^{j}.
\end{equation}
This sum would then give the non-asymptotic form for the error
\begin{equation}
    \frac{\norm{\tau}_1\norm{\nu}_{1,[\eta]}\eta}{\norm{\tau}_{1}+\norm{\nu}_{1,[\eta]}} t \times g\!\left( {t}\left(\norm{\tau}_{1}+\norm{\nu}_{1,[\eta]}\right)\right) \, ,
\end{equation}
for some function $g$.

Then following the same procedure to determine the threshold as before, we will have a total error given by
\begin{equation}
    \frac{\norm{\tau}_1\norm{\nu}_{1,[\eta]}\eta}{\norm{\tau}_{1}+\norm{\nu}_{1,[\eta]}} T \times g\!\left( {t}\left(\norm{\tau}_{1}+\norm{\nu}_{1,[\eta]}\right)\right) \, .
\end{equation}
That would imply that comparing two product formulae, the threshold will be obtained for
\begin{equation}\label{eq:ferm_equality_error}
    g_1\!\left( {t_1}\left(\norm{\tau}_{1}+\norm{\nu}_{1,[\eta]}\right)\right) = g_2\!\left( {t_2}\left(\norm{\tau}_{1}+\norm{\nu}_{1,[\eta]}\right)\right) \, .
\end{equation}
Using $t_2=t_1M_2/M_1$ and solving for $t_1$ as before, the $T/\epsilon$ threshold would then be given by substituting the solution for $t_1$ into
\begin{equation}\label{eq:fermTthresh}
    \frac{\norm{\tau}_1\norm{\nu}_{1,[\eta]}\eta}{\norm{\tau}_{1}+\norm{\nu}_{1,[\eta]}} \frac{T}{\epsilon} = 1 / g_1\!\left( {t_1}\left(\norm{\tau}_{1}+\norm{\nu}_{1,[\eta]}\right)\right) \, .
\end{equation}

The threshold can be calculated for Hamiltonians with a given value of $\norm{\tau}_{1}+\norm{\nu}_{1,[\eta]}$, then for some new Hamiltonian with different values of the norms, the threshold should still be for the same value of ${t_1}\big(\norm{\tau}_{1}+\norm{\nu}_{1,[\eta]}\big)$.
This implies that the right-hand-side of Eq.~\eqref{eq:fermTthresh} should be the same for the threshold with the new Hamiltonian.
As a result, we expect that the threshold should be in terms of the left-hand-side of Eq.~\eqref{eq:fermTthresh}, which is the same expression as found in the asymptotic regime.

In our numerical testing, the coefficients $\tau_{pq}$, $\nu_{pq}$ were chosen uniformly from the interval $[-1,1]$.
We computed $\omega$ for a selection of the product formulae with the best performance for random Hamiltonians.
In \cref{tab:factor6-chemistry-eig6}, \cref{tab:factor8-chemistry-eig6} and \cref{tab:factor10-chemistry-eig6} we give the computed result for $d=6$ orbitals, assuming half-filling of the orbitals.
Our numerics indicate that the error is roughly proportional to the bound in \cref{eq:Low_bound}, independent of the number of orbitals, though the constant factors are small. In \cref{sec:fermHam_d4} we give the constant factors for the case $d=4$, showing that the computed constant does not change much with a different $d$.

\begin{table}[tbh]
\centering
\begin{tabular}{|c|c|c|c|c|c|} 
\hline
label & $M$ & processing & reference                          & $\omega$   & $M\omega^{1/k}$  \\ 
\hline
PPBCM6m9 & $9$ & Y & $P_9 6$ Table 5 of \cite{Blanes2006processing} & $2.8\times 10^{-9}$ & $0.33$ \\

PPBCM6m6 & $13$ & Y & $P_{13}6$ in Table 6 of \cite{Blanes2006processing} & $1.2\times 10^{-9}$ & $0.42$ \\
BCE6m10 & $20$ & Y & $\psi_{10}^{[6]}$ Table 8 of \cite{blanes2024families} & $3.4\times 10^{-11}$ & $0.36$ \\
\hline
\end{tabular}
\caption{Comparison of constant factors $\omega$ for a selection of the lowest-error best product formulae for 6th order. We generate $1000$ random Hamiltonians with $d=6$ orbitals as in \cref{eq:fermion-ham} and compute the average $\omega$.}
\label{tab:factor6-chemistry-eig6}
\end{table}

\begin{table}[tbh]
\centering
\begin{tabular}{|c|c|c|c|c|c|} 
\hline
label & $M$ & processing & reference                     & $\omega$   & $M\omega^{1/k}$  \\ 
\hline
SS8s19 & 19 & N & Section 4.3 of \cite{Sofroniou2005integrators} &   $3.5\times 10^{-11}$                          &      $0.94$                                  \\
PP8s13 & 13 & Y & $P_{13}8$ in Table 6 of \cite{Blanes2006processing} &   $4.4\times 10^{-10}$                           &      $0.88$                                  \\
Y8m10 & 21 & N & \cref{tab:order8-solm10} (our new result) & $8.5\times 10^{-12}$  & $0.87$ \\
Y8m10b & 21 & N & \cref{tab:order8-solm10} (our new result) & $1.3\times 10^{-12}$ & $0.68$ \\
YP8m8 & 17 & Y & \cref{tab:process} (our new result) & $1.7\times 10^{-12}$ & $0.57$ \\
\hline
\end{tabular}
\caption{Comparison of constant factors $\omega$ for a selection of the lowest-error product formulae for 8th order. We generate $1000$ random Hamiltonians with $d=6$ orbitals as in \cref{eq:fermion-ham} and compute the average $\omega$.}
\label{tab:factor8-chemistry-eig6}
\end{table}

\begin{table}[tbh]
\centering
\begin{tabular}{|c|c|c|c|c|c|} 
\hline
label & $M$ & processing & reference  & $\omega$   & $M\omega^{1/k}$  \\ 
\hline
SS10s35 & $35$ & N & Section 4.4 of \cite{Sofroniou2005integrators} & $3.0\times 10^{-15}$ & $1.24$ \\
PP10s23 & $23$ & Y & $P_{23}10$ in Table 6 of \cite{Blanes2006processing} & $2.3\times 10^{-12}$ & $1.58$ \\
Y10m17 & $35$ & N & \cref{tab:order10-solm17} & $2.6\times 10^{-14}$ & $1.53$ \\
\hline
\end{tabular}
\caption{Comparison of constant factors $\omega$ for a selection of the lowest-error product formulae for 10th order. We generate $1000$ random Hamiltonians with $d=6$ orbitals as in \cref{eq:fermion-ham} and compute the average $\omega$.}
\label{tab:factor10-chemistry-eig6}
\end{table}

From the results in these tables, we see that the relative performance for the product formulae in unchanged from the results with the random matrices.
This is a useful independent verification that the assessment of the performance of the product formulae is relatively independent of the class of matrices.
In particular, we find that for 8th order our three new results all outperform the best product formulae in prior work (where we have compared to the two prior formulae with best performance).
In the case of 10th order the product formula SS10s35 of Ref.~\cite{Sofroniou2005integrators} still provides exceptional performance.

For calculating the thresholds based on the asymptotic formula in \cref{chemthresh}, we should compare our best 8th order formula YP8m8 to SS10s35 in the 10th order case, and PPBCM6m9 in the 6th order case.
Then we obtain a threshold of $3\times 10^{13}$ for 8th to 10th order, and $5 \times 10^5$ for 6th to 8th order, indicating about 9 orders of magnitude for $T/\epsilon$ where our 8th order product formula is optimal.

Similar to the case for general matrices, we expect that the threshold for 8th to 10th order is sufficiently large that the time steps are small and the asymptotic formulae are accurate, but the threshold between 6th and 8th order may be unreliable.
In that case we solve \cref{eq:ferm_equality_error} for $t_1$ with $t_2=t_1 M_2/M_2$, and find the threshold using \cref{eq:fermTthresh}.
We expect that the 6th order product formula PPBCM6m9 can give better performance for larger time steps, and indeed we find that the threshold is further increased to $1.7\times 10^7$. By using our 8th order kernel optimized for large time step in \cref{tab:processl} we can improve the threshold between PPBCM6m6 and our 8th order formula in the non-asymptotic case.
We then obtain a threshold of around $3\times 10^6$, which is an improvement by a factor of 5 times.

Next we compare these thresholds to the parameters for specific systems.
From Ref.~\cite{Low2022trotter} the norms can be expected to scale as
\begin{equation}
    \norm{\nu}_{1,[\eta]} = \mathcal{O}\left( \frac{\eta^{2/3}N^{1/3}}{\Omega^{1/3}} \right), \qquad \norm{\tau}_1 = \mathcal{O}\left(\frac{N^{2/3}}{\Omega^{2/3}} \right),
\end{equation}
where $N$ is the number of orbitals and $\Omega$ is the volume (denoted $n$ and $\omega$ in \cite{Low2022trotter}).
The constants for the scaling of these norms are derived in Ref.~\cite{rubin2023quantum} as
\begin{equation}
    \norm{\nu}_{1,[\eta]} \approx \pi^{1/3} (3/4)^{2/3}\frac{\eta^{2/3} N^{1/3}}{\Omega^{1/3}}, \qquad \norm{\tau}_1 \approx\frac{3\pi^2N^{2/3}}{2\Omega^{2/3}} .
\end{equation}
For an initial order of magnitude estimate, the chemical accuracy required for the phase estimation is about $0.001$ Hartree.
As discussed above, for phase estimation we replace $T/\epsilon$ in the threshold with $1/\epsilon_H$.
With estimates $\eta\approx 100$ and $N/\Omega$ in the range $10^3$ to $10^9$ (corresponding to $\Delta$ between $10^{-1}a_0$ and $10^{-3}a_0$ in Ref.~\cite{SuPRXQuantum21}), the left-hand-side (LHS) of Eq.~\eqref{chemthresh} would be in the range $2.2\times 10^{7}$ to $2.6\times 10^{9}$.
That is far below the threshold between 8th and 10th order, demonstrating that our 8th order should be preferred to 10th order.
It is also above the threshold between 6th and 8th order, indicating that 8th order provides the best performance for simulations of this type.
For smaller simulations 6th order will be optimal, but it may be expected that the small values of $N/\Omega$ would not provide a sufficiently accurate representation of the quantum system.

For comparison, we also consider the Alpha + Hydrogen system from Ref.~\cite{rubin2023quantum}. The parameters $\eta$, $N$ and $\Omega$ are given in Table 1 of that work as 218, $53^3$, and 2420, respectively.
Some of the most challenging values of $T,\epsilon$ are given in Figure 4 of that work, with for example $T=40$ and $\epsilon=10^{-4}$.
With these parameters the LHS of Eq.~\eqref{chemthresh} is approximately $8.6\times 10^9$, which is below the threshold (between 8th and 10th order) by a few orders of magnitude.
The smallest values of $T/\epsilon$ from Ref.~\cite{rubin2023quantum} are obtained with $T=10$ and $\epsilon=0.01$ (see Table IV).
That results in a value for the LHS of Eq.~\eqref{chemthresh} of $2.2\times 10^7$, which is above the threshold between 6th and 8th order, so our 8th order product formula would still be optimal.
This indicates that the new 8th order product formula found here should be optimal for the full range of parameter values used in Ref.~\cite{rubin2023quantum}.
Nonetheless, for smaller values of $T/\epsilon$ it may be advantageous to further customise the product formula to provide better performance at larger time steps, as described above to improve the threshold between 6th order and 8th order.

Note also that the fact that the threshold here is different than for random Hamiltonians means that the threshold will depend on the class of Hamiltonians.
Selecting a different type of Hamiltonians can change the threshold.
Similarly, it will be expected that choosing a particular class of fermionic Hamiltonians (such as those arising from plane waves) rather than random fermionic Hamiltonians can give a different threshold.
Nevertheless, changing the class of Hamiltonians still resulted in an extremely large threshold for 10th order product formulae to be optimal.
This indicates that the 8th order formula will be optimal for realistic simulations regardless of the class of Hamiltonians.

\section{Discussion}\label{sec:discuss}

We have found 8th order product formulae with significantly lower error than in prior work.
For the case of processed product formulae, our best solution is about 300 times more accurate than that in prior work, and in the case of product formulae without processing our solution is about 100 times more accurate.
The size of the improvement is estimated based on testing with random $64\times 64$ matrices.
We have also tested on a range of different matrix sizes, fermionic Hamiltonians of different sizes, and a range of sizes of the transverse-field Ising model, and find consistent improvements with well over an order of magnitude in all cases.

The size of the improvement is surprising, especially considering the best prior results are from 2005 (without processing) and 2006 (with processing).
To achieve this improvement we have used distinct techniques from prior work.
The processed product formulae from Ref.~\cite{Blanes2006processing} set multiple coefficients equal to one another. Specifically, in the \( P_{19}8 \) formula, six of the coefficients are identical, following a heuristic suggested by McLachlan \cite{McLachlan2002}. However, our results suggest that using distinct values for each coefficient yields better outcomes.
The advantage of taking longer product formulae with distinct coefficients is that there are significantly more parameters than constraints, giving more freedom for optimisation of the product formula.
The solutions which we find give the best performance have no clear pattern to the coefficients, indicating that assuming these patterns is too restrictive.

The way that product formulae were evaluated in Ref.~\cite{Blanes2006processing} was based on the $\ell^p$ norms of the coefficients for $p=1$ and $p=k+1$.
Those norms are both smaller for $P_{19}8$ given in Ref.~\cite{Blanes2006processing} than our lower-error solution.
Similarly, these norms are both smaller for the solutions from Ref.~\cite{Sofroniou2005integrators} than our non-processed product formula.
Also, Ref.~\cite{alberdi2019algorithm} found 10th order integrators with 1-norm lower than those in Ref.~\cite{Sofroniou2005integrators}, but in our testing their error is larger.
In all these cases the solutions with lower error have larger norms.
The $\ell^{k+1}$ norm corresponds to just one of the $\mathcal{O}(t^{k+1})$ error terms, and a product formula that minimises that term while giving large values for the others will not perform well.
We find that the strategy of selecting product formulae with low error for examples can yield much better results than attempting to minimise norms.

In this work we compare the eigenvalue error, rather than the spectral-norm error as in prior work, because it is the eigenvalue error in a single step which dominates the error over longer time evolutions, with basis error cancelling.
In many applications of quantum computing the goal is to estimate eigenvalues, so it is solely the eigenvalue error which should be considered.
For most product formulae tested, the spectral norm of the error (for a single step) is substantially larger than the eigenvalue error, meaning that it can be an unrealistic measure of accuracy.
This also suggests that techniques for reducing error based on randomisation of the term order \cite{Childs2019fasterquantum} or randomised corrections \cite{Cho2024} may provide less improvement than anticipated, because if successive steps are not identical then the basis error will not cancel.

We have shown how to fairly compare the performance of product formulae with different numbers of factors and different orders.
The optimal order of integrator to use depends primarily on the ratio $T/\epsilon$, the ratio of the total evolution time to the allowed error.
This is because, as $T/\epsilon$ becomes larger, the error allowed for each individual time step becomes smaller, and the time steps must become shorter to make the error sufficiently small.
That results in a larger overall number of time steps, and so a larger complexity.
The higher-order integrators better reduce the error with shorter time steps, and so for large $T/\epsilon$ that improvement more than compensates for the larger number of exponentials for the higher-order product formulae.
For eigenvalue estimation there is an equivalent threshold in $1/\epsilon_H$, in terms of the allowed error in the eigenvalue.

The threshold for 10th order product formulae to outperform 8th order product formulae is very large, so 10th order would only be optimal for simulations well beyond the scale considered for quantum algorithms.
Similarly, it would not be useful to consider 12th or higher order product formulae for the purpose of Hamiltonian simulation, as the threshold would be even larger.
For large quantum simulations that are still reasonable to perform on a quantum computer, such as those considered in Refs.~\cite{rubin2023quantum} or \cite{SuPRXQuantum21}, the best performance is provided by our 8th order product formulae (out of the extensive set of product formulae tested).
For more moderately sized simulations 6th order will be optimal, and in our testing we find that the product formula $S_9 6$ of Ref.~\cite{Blanes2006processing} provides excellent performance.
For smaller simulations, as in NISQ experiments, 4th order product formulae would be ideal, with $S_6$ of Ref.~\cite{BLANES2002313} and $P_6 4$ of \cite{Blanes2006processing} providing the best performance.

Care is required in determining the thresholds between the orders.
Using the formula based on the asymptotic scaling can yield estimates of the $T/\epsilon$ threshold corresponding to large time steps where the asymptotic scaling no longer holds.
When properly accounting for the error for a given time step, the $T/\epsilon$ threshold can change by many orders of magnitude.
This can also be a problem more generally in estimating the complexity of an algorithm.
Estimating the number of time steps based on the asymptotic expression may correspond to unrealistically large time steps, so the asymptotic formula is inaccurate.
We find that significant performance gains can be obtained by fine-tuning a product formula for larger time steps.

A potential topic for future work is further customising product formulae for large time steps.
We have customised the 8th order product formula to improve the threshold, but it would also be possible to customise the 6th and 4th order product formulae.
It would also be possible to customise the product formulae for larger time steps with the fermionic Hamiltonians.
This approach could be used to provide better performance for small-scale simulations suitable for NISQ devices.
It could also potentially extend the parameter range where 6th order product formulae provide best performance to larger simulations on fault tolerant quantum computers.

\section*{Acknowledgements}
MESM was supported by the ARC Centre of Excellence for Quantum Computation and Communication Technology (CQC2T), project number CE170100012 and a scholarship top-up and extension from the Sydney Quantum Academy. MESM acknowledges support from the U.S. Department of Defense through a QuICS Hartree Fellowship. MESM and YRS were supported by the Defense Advanced Research Projects Agency under Contract No.~HR001122C0074. Any opinions, findings and conclusions or recommendations expressed in this material are those of the author(s) and do not necessarily reflect the views of the Defense Advanced Research Projects Agency. DKB acknowledges funding by the Australian Research Council (project numbers FT190100106, DP210101367, CE170100009).
DWB worked on this project under a sponsored research agreement with Google Quantum AI. DWB is also supported by Australian Research Council Discovery Projects DP210101367 and DP220101602.

\bibliographystyle{apsrev4-2}

\begin{thebibliography}{55}%
\makeatletter
\providecommand \@ifxundefined [1]{%
 \@ifx{#1\undefined}
}%
\providecommand \@ifnum [1]{%
 \ifnum #1\expandafter \@firstoftwo
 \else \expandafter \@secondoftwo
 \fi
}%
\providecommand \@ifx [1]{%
 \ifx #1\expandafter \@firstoftwo
 \else \expandafter \@secondoftwo
 \fi
}%
\providecommand \natexlab [1]{#1}%
\providecommand \enquote  [1]{``#1''}%
\providecommand \bibnamefont  [1]{#1}%
\providecommand \bibfnamefont [1]{#1}%
\providecommand \citenamefont [1]{#1}%
\providecommand \href@noop [0]{\@secondoftwo}%
\providecommand \href [0]{\begingroup \@sanitize@url \@href}%
\providecommand \@href[1]{\@@startlink{#1}\@@href}%
\providecommand \@@href[1]{\endgroup#1\@@endlink}%
\providecommand \@sanitize@url [0]{\catcode `\\12\catcode `\$12\catcode `\&12\catcode `\#12\catcode `\^12\catcode `\_12\catcode `\%12\relax}%
\providecommand \@@startlink[1]{}%
\providecommand \@@endlink[0]{}%
\providecommand \url  [0]{\begingroup\@sanitize@url \@url }%
\providecommand \@url [1]{\endgroup\@href {#1}{\urlprefix }}%
\providecommand \urlprefix  [0]{URL }%
\providecommand \Eprint [0]{\href }%
\providecommand \doibase [0]{https://doi.org/}%
\providecommand \selectlanguage [0]{\@gobble}%
\providecommand \bibinfo  [0]{\@secondoftwo}%
\providecommand \bibfield  [0]{\@secondoftwo}%
\providecommand \translation [1]{[#1]}%
\providecommand \BibitemOpen [0]{}%
\providecommand \bibitemStop [0]{}%
\providecommand \bibitemNoStop [0]{.\EOS\space}%
\providecommand \EOS [0]{\spacefactor3000\relax}%
\providecommand \BibitemShut  [1]{\csname bibitem#1\endcsname}%
\let\auto@bib@innerbib\@empty
\bibitem [{\citenamefont {Creutz}\ and\ \citenamefont {Gocksch}(1989)}]{Creutz1989}%
  \BibitemOpen
  \bibfield  {author} {\bibinfo {author} {\bibfnamefont {M.}~\bibnamefont {Creutz}}\ and\ \bibinfo {author} {\bibfnamefont {A.}~\bibnamefont {Gocksch}},\ }\href {https://doi.org/10.1103/PhysRevLett.63.9} {\bibfield  {journal} {\bibinfo  {journal} {Physical Review Letters}\ }\textbf {\bibinfo {volume} {63}},\ \bibinfo {pages} {9} (\bibinfo {year} {1989})}\BibitemShut {NoStop}%
\bibitem [{\citenamefont {Suzuki}(1990)}]{Suzuki1990}%
  \BibitemOpen
  \bibfield  {author} {\bibinfo {author} {\bibfnamefont {M.}~\bibnamefont {Suzuki}},\ }\href {https://doi.org/10.1016/0375-9601(90)90962-N} {\bibfield  {journal} {\bibinfo  {journal} {Physics Letters A}\ }\textbf {\bibinfo {volume} {146}},\ \bibinfo {pages} {319} (\bibinfo {year} {1990})}\BibitemShut {NoStop}%
\bibitem [{\citenamefont {Yoshida}(1990)}]{Yoshida1990}%
  \BibitemOpen
  \bibfield  {author} {\bibinfo {author} {\bibfnamefont {H.}~\bibnamefont {Yoshida}},\ }\href {https://doi.org/10.1016/0375-9601(90)90092-3} {\bibfield  {journal} {\bibinfo  {journal} {Physics Letters A}\ }\textbf {\bibinfo {volume} {150}},\ \bibinfo {pages} {262} (\bibinfo {year} {1990})}\BibitemShut {NoStop}%
\bibitem [{\citenamefont {Suzuki}(1991)}]{Suzuki5}%
  \BibitemOpen
  \bibfield  {author} {\bibinfo {author} {\bibfnamefont {M.}~\bibnamefont {Suzuki}},\ }\href {https://doi.org/10.1063/1.529425} {\bibfield  {journal} {\bibinfo  {journal} {Journal of Mathematical Physics}\ }\textbf {\bibinfo {volume} {32}},\ \bibinfo {pages} {400} (\bibinfo {year} {1991})}\BibitemShut {NoStop}%
\bibitem [{\citenamefont {Berry}\ \emph {et~al.}(2007)\citenamefont {Berry}, \citenamefont {Ahokas}, \citenamefont {Cleve},\ and\ \citenamefont {Sanders}}]{Berry2007}%
  \BibitemOpen
  \bibfield  {author} {\bibinfo {author} {\bibfnamefont {D.~W.}\ \bibnamefont {Berry}}, \bibinfo {author} {\bibfnamefont {G.}~\bibnamefont {Ahokas}}, \bibinfo {author} {\bibfnamefont {R.}~\bibnamefont {Cleve}},\ and\ \bibinfo {author} {\bibfnamefont {B.~C.}\ \bibnamefont {Sanders}},\ }\href {https://doi.org/10.1007/s00220-006-0150-x} {\bibfield  {journal} {\bibinfo  {journal} {Communications in Mathematical Physics}\ }\textbf {\bibinfo {volume} {270}},\ \bibinfo {pages} {359} (\bibinfo {year} {2007})}\BibitemShut {NoStop}%
\bibitem [{\citenamefont {Lloyd}(1996)}]{Lloyd1996universal}%
  \BibitemOpen
  \bibfield  {author} {\bibinfo {author} {\bibfnamefont {S.}~\bibnamefont {Lloyd}},\ }\href {https://doi.org/10.1126/science.273.5278.1073} {\bibfield  {journal} {\bibinfo  {journal} {Science}\ }\textbf {\bibinfo {volume} {273}},\ \bibinfo {pages} {1073} (\bibinfo {year} {1996})}\BibitemShut {NoStop}%
\bibitem [{\citenamefont {Aharonov}\ and\ \citenamefont {Ta-Shma}(2003)}]{Aharonov2003adiabatic}%
  \BibitemOpen
  \bibfield  {author} {\bibinfo {author} {\bibfnamefont {D.}~\bibnamefont {Aharonov}}\ and\ \bibinfo {author} {\bibfnamefont {A.}~\bibnamefont {Ta-Shma}},\ }in\ \href {https://doi.org/10.1145/780542.780546} {\emph {\bibinfo {booktitle} {Proceedings of the Thirty-Fifth Annual ACM Symposium on Theory of Computing}}},\ \bibinfo {series and number} {STOC '03}\ (\bibinfo  {publisher} {Association for Computing Machinery},\ \bibinfo {address} {New York, NY, USA},\ \bibinfo {year} {2003})\ p.\ \bibinfo {pages} {20–29}\BibitemShut {NoStop}%
\bibitem [{\citenamefont {Berry}\ and\ \citenamefont {Childs}(2012)}]{Berry2012blackbox}%
  \BibitemOpen
  \bibfield  {author} {\bibinfo {author} {\bibfnamefont {D.~W.}\ \bibnamefont {Berry}}\ and\ \bibinfo {author} {\bibfnamefont {A.~M.}\ \bibnamefont {Childs}},\ }\href {https://doi.org/10.26421/QIC12.1-2-4} {\bibfield  {journal} {\bibinfo  {journal} {Quantum Information and Computation}\ }\textbf {\bibinfo {volume} {12}},\ \bibinfo {pages} {29–62} (\bibinfo {year} {2012})}\BibitemShut {NoStop}%
\bibitem [{\citenamefont {Berry}\ \emph {et~al.}(2014)\citenamefont {Berry}, \citenamefont {Childs}, \citenamefont {Cleve}, \citenamefont {Kothari},\ and\ \citenamefont {Somma}}]{Berry2014exponential}%
  \BibitemOpen
  \bibfield  {author} {\bibinfo {author} {\bibfnamefont {D.~W.}\ \bibnamefont {Berry}}, \bibinfo {author} {\bibfnamefont {A.~M.}\ \bibnamefont {Childs}}, \bibinfo {author} {\bibfnamefont {R.}~\bibnamefont {Cleve}}, \bibinfo {author} {\bibfnamefont {R.}~\bibnamefont {Kothari}},\ and\ \bibinfo {author} {\bibfnamefont {R.~D.}\ \bibnamefont {Somma}},\ }in\ \href {https://doi.org/10.1145/2591796.2591854} {\emph {\bibinfo {booktitle} {Proceedings of the Forty-Sixth Annual ACM Symposium on Theory of Computing}}},\ \bibinfo {series and number} {STOC '14}\ (\bibinfo  {publisher} {Association for Computing Machinery},\ \bibinfo {address} {New York, NY, USA},\ \bibinfo {year} {2014})\ p.\ \bibinfo {pages} {283–292}\BibitemShut {NoStop}%
\bibitem [{\citenamefont {Berry}\ \emph {et~al.}(2015{\natexlab{a}})\citenamefont {Berry}, \citenamefont {Childs},\ and\ \citenamefont {Kothari}}]{Berry2015nearlyoptimal}%
  \BibitemOpen
  \bibfield  {author} {\bibinfo {author} {\bibfnamefont {D.~W.}\ \bibnamefont {Berry}}, \bibinfo {author} {\bibfnamefont {A.~M.}\ \bibnamefont {Childs}},\ and\ \bibinfo {author} {\bibfnamefont {R.}~\bibnamefont {Kothari}},\ }in\ \href {https://doi.org/10.1109/FOCS.2015.54} {\emph {\bibinfo {booktitle} {2015 IEEE 56th Annual Symposium on Foundations of Computer Science}}}\ (\bibinfo {year} {2015})\ pp.\ \bibinfo {pages} {792--809}\BibitemShut {NoStop}%
\bibitem [{\citenamefont {Low}(2019)}]{Low2019spectralnorm}%
  \BibitemOpen
  \bibfield  {author} {\bibinfo {author} {\bibfnamefont {G.~H.}\ \bibnamefont {Low}},\ }in\ \href {https://doi.org/10.1145/3313276.3316386} {\emph {\bibinfo {booktitle} {Proceedings of the 51st Annual ACM SIGACT Symposium on Theory of Computing}}},\ \bibinfo {series and number} {STOC 2019}\ (\bibinfo  {publisher} {Association for Computing Machinery},\ \bibinfo {address} {New York, NY, USA},\ \bibinfo {year} {2019})\ p.\ \bibinfo {pages} {491–502}\BibitemShut {NoStop}%
\bibitem [{\citenamefont {Childs}\ \emph {et~al.}(2021)\citenamefont {Childs}, \citenamefont {Su}, \citenamefont {Tran}, \citenamefont {Wiebe},\ and\ \citenamefont {Zhu}}]{Childs2022trotter}%
  \BibitemOpen
  \bibfield  {author} {\bibinfo {author} {\bibfnamefont {A.~M.}\ \bibnamefont {Childs}}, \bibinfo {author} {\bibfnamefont {Y.}~\bibnamefont {Su}}, \bibinfo {author} {\bibfnamefont {M.~C.}\ \bibnamefont {Tran}}, \bibinfo {author} {\bibfnamefont {N.}~\bibnamefont {Wiebe}},\ and\ \bibinfo {author} {\bibfnamefont {S.}~\bibnamefont {Zhu}},\ }\href {https://doi.org/10.1103/PhysRevX.11.011020} {\bibfield  {journal} {\bibinfo  {journal} {Physical Review X}\ }\textbf {\bibinfo {volume} {11}},\ \bibinfo {pages} {011020} (\bibinfo {year} {2021})}\BibitemShut {NoStop}%
\bibitem [{\citenamefont {Childs}\ and\ \citenamefont {Wiebe}(2012)}]{Childs2012LCU}%
  \BibitemOpen
  \bibfield  {author} {\bibinfo {author} {\bibfnamefont {A.~M.}\ \bibnamefont {Childs}}\ and\ \bibinfo {author} {\bibfnamefont {N.}~\bibnamefont {Wiebe}},\ }\href {https://doi.org/10.26421/QIC12.11-12-1} {\bibfield  {journal} {\bibinfo  {journal} {Quantum Information and Computation}\ }\textbf {\bibinfo {volume} {12}},\ \bibinfo {pages} {901–924} (\bibinfo {year} {2012})}\BibitemShut {NoStop}%
\bibitem [{\citenamefont {Berry}\ \emph {et~al.}(2015{\natexlab{b}})\citenamefont {Berry}, \citenamefont {Childs}, \citenamefont {Cleve}, \citenamefont {Kothari},\ and\ \citenamefont {Somma}}]{Berry2015}%
  \BibitemOpen
  \bibfield  {author} {\bibinfo {author} {\bibfnamefont {D.~W.}\ \bibnamefont {Berry}}, \bibinfo {author} {\bibfnamefont {A.~M.}\ \bibnamefont {Childs}}, \bibinfo {author} {\bibfnamefont {R.}~\bibnamefont {Cleve}}, \bibinfo {author} {\bibfnamefont {R.}~\bibnamefont {Kothari}},\ and\ \bibinfo {author} {\bibfnamefont {R.~D.}\ \bibnamefont {Somma}},\ }\href {https://doi.org/10.1103/PhysRevLett.114.090502} {\bibfield  {journal} {\bibinfo  {journal} {Physical Review Letters}\ }\textbf {\bibinfo {volume} {114}},\ \bibinfo {pages} {90502} (\bibinfo {year} {2015}{\natexlab{b}})}\BibitemShut {NoStop}%
\bibitem [{\citenamefont {Low}\ and\ \citenamefont {Chuang}(2017)}]{Low2017}%
  \BibitemOpen
  \bibfield  {author} {\bibinfo {author} {\bibfnamefont {G.~H.}\ \bibnamefont {Low}}\ and\ \bibinfo {author} {\bibfnamefont {I.~L.}\ \bibnamefont {Chuang}},\ }\href {https://doi.org/10.1103/PhysRevLett.118.010501} {\bibfield  {journal} {\bibinfo  {journal} {Physical Review Letters}\ }\textbf {\bibinfo {volume} {118}},\ \bibinfo {pages} {010501} (\bibinfo {year} {2017})}\BibitemShut {NoStop}%
\bibitem [{\citenamefont {Babbush}\ \emph {et~al.}(2015)\citenamefont {Babbush}, \citenamefont {McClean}, \citenamefont {Wecker}, \citenamefont {Aspuru-Guzik},\ and\ \citenamefont {Wiebe}}]{Babbush2015chemical}%
  \BibitemOpen
  \bibfield  {author} {\bibinfo {author} {\bibfnamefont {R.}~\bibnamefont {Babbush}}, \bibinfo {author} {\bibfnamefont {J.}~\bibnamefont {McClean}}, \bibinfo {author} {\bibfnamefont {D.}~\bibnamefont {Wecker}}, \bibinfo {author} {\bibfnamefont {A.}~\bibnamefont {Aspuru-Guzik}},\ and\ \bibinfo {author} {\bibfnamefont {N.}~\bibnamefont {Wiebe}},\ }\href {https://doi.org/10.1103/PhysRevA.91.022311} {\bibfield  {journal} {\bibinfo  {journal} {Physical Review A}\ }\textbf {\bibinfo {volume} {91}},\ \bibinfo {pages} {022311} (\bibinfo {year} {2015})}\BibitemShut {NoStop}%
\bibitem [{\citenamefont {Childs}\ \emph {et~al.}(2018)\citenamefont {Childs}, \citenamefont {Maslov}, \citenamefont {Nam}, \citenamefont {Ross},\ and\ \citenamefont {Su}}]{Childs2018speedup}%
  \BibitemOpen
  \bibfield  {author} {\bibinfo {author} {\bibfnamefont {A.~M.}\ \bibnamefont {Childs}}, \bibinfo {author} {\bibfnamefont {D.}~\bibnamefont {Maslov}}, \bibinfo {author} {\bibfnamefont {Y.}~\bibnamefont {Nam}}, \bibinfo {author} {\bibfnamefont {N.~J.}\ \bibnamefont {Ross}},\ and\ \bibinfo {author} {\bibfnamefont {Y.}~\bibnamefont {Su}},\ }\href {https://doi.org/10.1073/pnas.1801723115} {\bibfield  {journal} {\bibinfo  {journal} {Proceedings of the National Academy of Sciences}\ }\textbf {\bibinfo {volume} {115}},\ \bibinfo {pages} {9456} (\bibinfo {year} {2018})}\BibitemShut {NoStop}%
\bibitem [{\citenamefont {Su}\ \emph {et~al.}(2021{\natexlab{a}})\citenamefont {Su}, \citenamefont {Huang},\ and\ \citenamefont {Campbell}}]{Su2021nearlytight}%
  \BibitemOpen
  \bibfield  {author} {\bibinfo {author} {\bibfnamefont {Y.}~\bibnamefont {Su}}, \bibinfo {author} {\bibfnamefont {H.-Y.}\ \bibnamefont {Huang}},\ and\ \bibinfo {author} {\bibfnamefont {E.~T.}\ \bibnamefont {Campbell}},\ }\href {https://doi.org/10.22331/q-2021-07-05-495} {\bibfield  {journal} {\bibinfo  {journal} {{Quantum}}\ }\textbf {\bibinfo {volume} {5}},\ \bibinfo {pages} {495} (\bibinfo {year} {2021}{\natexlab{a}})}\BibitemShut {NoStop}%
\bibitem [{\citenamefont {Zhang}(2012)}]{Zhang2012randomized}%
  \BibitemOpen
  \bibfield  {author} {\bibinfo {author} {\bibfnamefont {C.}~\bibnamefont {Zhang}},\ }in\ \href {https://doi.org/10.1007/978-3-642-27440-4_42} {\emph {\bibinfo {booktitle} {Monte Carlo and Quasi-Monte Carlo Methods 2010}}},\ \bibinfo {editor} {edited by\ \bibinfo {editor} {\bibfnamefont {L.}~\bibnamefont {Plaskota}}\ and\ \bibinfo {editor} {\bibfnamefont {H.}~\bibnamefont {Wo{\'{z}}niakowski}}}\ (\bibinfo  {publisher} {Springer Berlin Heidelberg},\ \bibinfo {address} {Berlin, Heidelberg},\ \bibinfo {year} {2012})\ pp.\ \bibinfo {pages} {709--719}\BibitemShut {NoStop}%
\bibitem [{\citenamefont {Campbell}(2019)}]{Campbell2019random}%
  \BibitemOpen
  \bibfield  {author} {\bibinfo {author} {\bibfnamefont {E.}~\bibnamefont {Campbell}},\ }\href {https://doi.org/10.1103/PhysRevLett.123.070503} {\bibfield  {journal} {\bibinfo  {journal} {Physical Review Letters}\ }\textbf {\bibinfo {volume} {123}},\ \bibinfo {pages} {070503} (\bibinfo {year} {2019})}\BibitemShut {NoStop}%
\bibitem [{\citenamefont {Childs}\ \emph {et~al.}(2019)\citenamefont {Childs}, \citenamefont {Ostrander},\ and\ \citenamefont {Su}}]{Childs2019fasterquantum}%
  \BibitemOpen
  \bibfield  {author} {\bibinfo {author} {\bibfnamefont {A.~M.}\ \bibnamefont {Childs}}, \bibinfo {author} {\bibfnamefont {A.}~\bibnamefont {Ostrander}},\ and\ \bibinfo {author} {\bibfnamefont {Y.}~\bibnamefont {Su}},\ }\href {https://doi.org/10.22331/q-2019-09-02-182} {\bibfield  {journal} {\bibinfo  {journal} {{Quantum}}\ }\textbf {\bibinfo {volume} {3}},\ \bibinfo {pages} {182} (\bibinfo {year} {2019})}\BibitemShut {NoStop}%
\bibitem [{\citenamefont {Calvo}\ and\ \citenamefont {Sanz-Serna}(1993)}]{Calvo93}%
  \BibitemOpen
  \bibfield  {author} {\bibinfo {author} {\bibfnamefont {M.~P.}\ \bibnamefont {Calvo}}\ and\ \bibinfo {author} {\bibfnamefont {J.~M.}\ \bibnamefont {Sanz-Serna}},\ }\href {https://doi.org/10.1137/0914073} {\bibfield  {journal} {\bibinfo  {journal} {SIAM Journal on Scientific Computing}\ }\textbf {\bibinfo {volume} {14}},\ \bibinfo {pages} {1237} (\bibinfo {year} {1993})}\BibitemShut {NoStop}%
\bibitem [{\citenamefont {Suzuki}\ and\ \citenamefont {Umeno}(1993)}]{Suzuki93}%
  \BibitemOpen
  \bibfield  {author} {\bibinfo {author} {\bibfnamefont {M.}~\bibnamefont {Suzuki}}\ and\ \bibinfo {author} {\bibfnamefont {K.}~\bibnamefont {Umeno}},\ }in\ \href@noop {} {\emph {\bibinfo {booktitle} {Computer Simulation Studies in Condensed-Matter Physics VI}}},\ \bibinfo {editor} {edited by\ \bibinfo {editor} {\bibfnamefont {D.~P.}\ \bibnamefont {Landau}}, \bibinfo {editor} {\bibfnamefont {K.~K.}\ \bibnamefont {Mon}},\ and\ \bibinfo {editor} {\bibfnamefont {H.-B.}\ \bibnamefont {Sch{\"u}ttler}}}\ (\bibinfo  {publisher} {Springer Berlin Heidelberg},\ \bibinfo {address} {Berlin, Heidelberg},\ \bibinfo {year} {1993})\ pp.\ \bibinfo {pages} {74--86}\BibitemShut {NoStop}%
\bibitem [{\citenamefont {McLachlan}(1995)}]{McLachlan95}%
  \BibitemOpen
  \bibfield  {author} {\bibinfo {author} {\bibfnamefont {R.~I.}\ \bibnamefont {McLachlan}},\ }\href {https://doi.org/10.1137/0916010} {\bibfield  {journal} {\bibinfo  {journal} {SIAM Journal on Scientific Computing}\ }\textbf {\bibinfo {volume} {16}},\ \bibinfo {pages} {151} (\bibinfo {year} {1995})}\BibitemShut {NoStop}%
\bibitem [{\citenamefont {Kahan}\ and\ \citenamefont {Li}(1997)}]{Kahan1997integrators}%
  \BibitemOpen
  \bibfield  {author} {\bibinfo {author} {\bibfnamefont {W.}~\bibnamefont {Kahan}}\ and\ \bibinfo {author} {\bibfnamefont {R.-C.}\ \bibnamefont {Li}},\ }\href {https://doi.org/10.1090/S0025-5718-97-00873-9} {\bibfield  {journal} {\bibinfo  {journal} {Mathematics of Computation}\ }\textbf {\bibinfo {volume} {66}},\ \bibinfo {pages} {1089–1099} (\bibinfo {year} {1997})}\BibitemShut {NoStop}%
\bibitem [{\citenamefont {Tsitouras}(1999)}]{Tsitouras1999}%
  \BibitemOpen
  \bibfield  {author} {\bibinfo {author} {\bibfnamefont {C.}~\bibnamefont {Tsitouras}},\ }\href {https://doi.org/10.1023/a:1008346516048} {\bibfield  {journal} {\bibinfo  {journal} {Celestial Mechanics and Dynamical Astronomy}\ }\textbf {\bibinfo {volume} {74}},\ \bibinfo {pages} {223–230} (\bibinfo {year} {1999})}\BibitemShut {NoStop}%
\bibitem [{\citenamefont {McLachlan}(2002)}]{McLachlan2002}%
  \BibitemOpen
  \bibfield  {author} {\bibinfo {author} {\bibfnamefont {R.~I.}\ \bibnamefont {McLachlan}},\ }\href {https://doi.org/10.1023/a:1021195019574} {\bibfield  {journal} {\bibinfo  {journal} {Numerical Algorithms}\ }\textbf {\bibinfo {volume} {31}},\ \bibinfo {pages} {233–246} (\bibinfo {year} {2002})}\BibitemShut {NoStop}%
\bibitem [{\citenamefont {Hairer}\ \emph {et~al.}(2006)\citenamefont {Hairer}, \citenamefont {Wanner},\ and\ \citenamefont {Lubich}}]{Hairer2006}%
  \BibitemOpen
  \bibfield  {author} {\bibinfo {author} {\bibfnamefont {E.}~\bibnamefont {Hairer}}, \bibinfo {author} {\bibfnamefont {G.}~\bibnamefont {Wanner}},\ and\ \bibinfo {author} {\bibfnamefont {C.}~\bibnamefont {Lubich}},\ }\href {https://doi.org/10.1007/3-540-30666-8} {\emph {\bibinfo {title} {Geometric Numerical Integration: Structure-Preserving Algorithms for Ordinary Differential Equations}}}\ (\bibinfo  {publisher} {Springer-Verlag},\ \bibinfo {year} {2006})\BibitemShut {NoStop}%
\bibitem [{\citenamefont {Sofroniou}\ and\ \citenamefont {Spaletta}(2005)}]{Sofroniou2005integrators}%
  \BibitemOpen
  \bibfield  {author} {\bibinfo {author} {\bibfnamefont {M.}~\bibnamefont {Sofroniou}}\ and\ \bibinfo {author} {\bibfnamefont {G.}~\bibnamefont {Spaletta}},\ }\href {https://doi.org/10.1080/10556780500140664} {\bibfield  {journal} {\bibinfo  {journal} {Optimization Methods and Software}\ }\textbf {\bibinfo {volume} {20}},\ \bibinfo {pages} {597} (\bibinfo {year} {2005})}\BibitemShut {NoStop}%
\bibitem [{\citenamefont {{Blanes}}\ \emph {et~al.}(2008)\citenamefont {{Blanes}}, \citenamefont {{Casas}},\ and\ \citenamefont {{Murua}}}]{Blanes2008summary}%
  \BibitemOpen
  \bibfield  {author} {\bibinfo {author} {\bibfnamefont {S.}~\bibnamefont {{Blanes}}}, \bibinfo {author} {\bibfnamefont {F.}~\bibnamefont {{Casas}}},\ and\ \bibinfo {author} {\bibfnamefont {A.}~\bibnamefont {{Murua}}},\ }\href {https://www.sema.org.es/Documentos/Fotos/1/0/2/5/pagina_fichero_1025-ficheros-1644772684-01752300-68691.pdf} {\bibfield  {journal} {\bibinfo  {journal} {Boletin de la Sociedad Espanola de Matematica Aplicada}\ }\textbf {\bibinfo {volume} {45}},\ \bibinfo {pages} {89} (\bibinfo {year} {2008})}\BibitemShut {NoStop}%
\bibitem [{\citenamefont {Blanes}\ \emph {et~al.}(2006)\citenamefont {Blanes}, \citenamefont {Casas},\ and\ \citenamefont {Murua}}]{Blanes2006processing}%
  \BibitemOpen
  \bibfield  {author} {\bibinfo {author} {\bibfnamefont {S.}~\bibnamefont {Blanes}}, \bibinfo {author} {\bibfnamefont {F.}~\bibnamefont {Casas}},\ and\ \bibinfo {author} {\bibfnamefont {A.}~\bibnamefont {Murua}},\ }\href {https://doi.org/10.1137/030601223} {\bibfield  {journal} {\bibinfo  {journal} {SIAM Journal on Scientific Computing}\ }\textbf {\bibinfo {volume} {27}},\ \bibinfo {pages} {1817} (\bibinfo {year} {2006})}\BibitemShut {NoStop}%
\bibitem [{\citenamefont {Blanes}\ \emph {et~al.}(2013)\citenamefont {Blanes}, \citenamefont {Casas}, \citenamefont {Chartier},\ and\ \citenamefont {Murua}}]{Blanes2013highorder}%
  \BibitemOpen
  \bibfield  {author} {\bibinfo {author} {\bibfnamefont {S.}~\bibnamefont {Blanes}}, \bibinfo {author} {\bibfnamefont {F.}~\bibnamefont {Casas}}, \bibinfo {author} {\bibfnamefont {P.}~\bibnamefont {Chartier}},\ and\ \bibinfo {author} {\bibfnamefont {A.}~\bibnamefont {Murua}},\ }\href {http://www.jstor.org/stable/42002709} {\bibfield  {journal} {\bibinfo  {journal} {Mathematics of Computation}\ }\textbf {\bibinfo {volume} {82}},\ \bibinfo {pages} {1559} (\bibinfo {year} {2013})}\BibitemShut {NoStop}%
\bibitem [{\citenamefont {Blanes}\ \emph {et~al.}(2024{\natexlab{a}})\citenamefont {Blanes}, \citenamefont {Casas},\ and\ \citenamefont {Murua}}]{Blanes_Casas_Murua_2024}%
  \BibitemOpen
  \bibfield  {author} {\bibinfo {author} {\bibfnamefont {S.}~\bibnamefont {Blanes}}, \bibinfo {author} {\bibfnamefont {F.}~\bibnamefont {Casas}},\ and\ \bibinfo {author} {\bibfnamefont {A.}~\bibnamefont {Murua}},\ }\href {https://doi.org/10.1017/S0962492923000077} {\bibfield  {journal} {\bibinfo  {journal} {Acta Numerica}\ }\textbf {\bibinfo {volume} {33}},\ \bibinfo {pages} {1–161} (\bibinfo {year} {2024}{\natexlab{a}})}\BibitemShut {NoStop}%
\bibitem [{\citenamefont {Lopez-Marcos}\ \emph {et~al.}(1996)\citenamefont {Lopez-Marcos}, \citenamefont {Skeel},\ and\ \citenamefont {Sanz-Serna}}]{Lopez96}%
  \BibitemOpen
  \bibfield  {author} {\bibinfo {author} {\bibfnamefont {M.~A.}\ \bibnamefont {Lopez-Marcos}}, \bibinfo {author} {\bibfnamefont {R.~D.}\ \bibnamefont {Skeel}},\ and\ \bibinfo {author} {\bibfnamefont {J.~M.}\ \bibnamefont {Sanz-Serna}},\ }in\ \href@noop {} {\emph {\bibinfo {booktitle} {Numerical Analysis}}}\ (\bibinfo  {publisher} {Longman Scientific and Technical},\ \bibinfo {year} {1996})\ pp.\ \bibinfo {pages} {107--122}\BibitemShut {NoStop}%
\bibitem [{\citenamefont {Butcher}\ and\ \citenamefont {Sanz-Serna}(1996)}]{butcher1996number}%
  \BibitemOpen
  \bibfield  {author} {\bibinfo {author} {\bibfnamefont {J.}~\bibnamefont {Butcher}}\ and\ \bibinfo {author} {\bibfnamefont {J.}~\bibnamefont {Sanz-Serna}},\ }\href {https://doi.org/10.1016/S0168-9274(96)00028-1} {\bibfield  {journal} {\bibinfo  {journal} {Applied Numerical Mathematics}\ }\textbf {\bibinfo {volume} {22}},\ \bibinfo {pages} {103} (\bibinfo {year} {1996})}\BibitemShut {NoStop}%
\bibitem [{\citenamefont {McLachlan}(1996)}]{mclachlan1996more}%
  \BibitemOpen
  \bibfield  {author} {\bibinfo {author} {\bibfnamefont {R.~I.}\ \bibnamefont {McLachlan}},\ }\href@noop {} {\bibfield  {journal} {\bibinfo  {journal} {Integration Algorithms and Classical Mechanics}\ }\textbf {\bibinfo {volume} {10}},\ \bibinfo {pages} {141} (\bibinfo {year} {1996})}\BibitemShut {NoStop}%
\bibitem [{\citenamefont {Wisdom}\ \emph {et~al.}(1996)\citenamefont {Wisdom}, \citenamefont {Holman},\ and\ \citenamefont {Touma}}]{wisdom1996symplectic}%
  \BibitemOpen
  \bibfield  {author} {\bibinfo {author} {\bibfnamefont {J.}~\bibnamefont {Wisdom}}, \bibinfo {author} {\bibfnamefont {M.}~\bibnamefont {Holman}},\ and\ \bibinfo {author} {\bibfnamefont {J.}~\bibnamefont {Touma}},\ }\href@noop {} {\bibfield  {journal} {\bibinfo  {journal} {Integration Algorithms and Classical Mechanics}\ }\textbf {\bibinfo {volume} {10}},\ \bibinfo {pages} {217} (\bibinfo {year} {1996})}\BibitemShut {NoStop}%
\bibitem [{\citenamefont {Blanes}\ \emph {et~al.}(1999)\citenamefont {Blanes}, \citenamefont {Casas},\ and\ \citenamefont {Ros}}]{blanes1999symplectic}%
  \BibitemOpen
  \bibfield  {author} {\bibinfo {author} {\bibfnamefont {S.}~\bibnamefont {Blanes}}, \bibinfo {author} {\bibfnamefont {F.}~\bibnamefont {Casas}},\ and\ \bibinfo {author} {\bibfnamefont {J.}~\bibnamefont {Ros}},\ }\href {https://doi.org/10.1137/S1064827598332497} {\bibfield  {journal} {\bibinfo  {journal} {SIAM Journal on Scientific Computing}\ }\textbf {\bibinfo {volume} {21}},\ \bibinfo {pages} {711} (\bibinfo {year} {1999})}\BibitemShut {NoStop}%
\bibitem [{\citenamefont {Blanes}(2001)}]{blanes2001high}%
  \BibitemOpen
  \bibfield  {author} {\bibinfo {author} {\bibfnamefont {S.}~\bibnamefont {Blanes}},\ }\href {https://doi.org/10.1016/S0168-9274(00)00044-1} {\bibfield  {journal} {\bibinfo  {journal} {Applied Numerical Nathematics}\ }\textbf {\bibinfo {volume} {37}},\ \bibinfo {pages} {289} (\bibinfo {year} {2001})}\BibitemShut {NoStop}%
\bibitem [{\citenamefont {Blanes}\ \emph {et~al.}(2024{\natexlab{b}})\citenamefont {Blanes}, \citenamefont {Casas},\ and\ \citenamefont {Escorihuela-Tomàs}}]{blanes2024families}%
  \BibitemOpen
  \bibfield  {author} {\bibinfo {author} {\bibfnamefont {S.}~\bibnamefont {Blanes}}, \bibinfo {author} {\bibfnamefont {F.}~\bibnamefont {Casas}},\ and\ \bibinfo {author} {\bibfnamefont {A.}~\bibnamefont {Escorihuela-Tomàs}},\ }\href {https://doi.org/https://doi.org/10.1016/j.apnum.2024.06.002} {\bibfield  {journal} {\bibinfo  {journal} {Applied Numerical Mathematics}\ }\textbf {\bibinfo {volume} {204}},\ \bibinfo {pages} {86} (\bibinfo {year} {2024}{\natexlab{b}})}\BibitemShut {NoStop}%
\bibitem [{\citenamefont {Blanes}\ and\ \citenamefont {Casas}(2004)}]{BLANES2004bch}%
  \BibitemOpen
  \bibfield  {author} {\bibinfo {author} {\bibfnamefont {S.}~\bibnamefont {Blanes}}\ and\ \bibinfo {author} {\bibfnamefont {F.}~\bibnamefont {Casas}},\ }\href {https://doi.org/https://doi.org/10.1016/j.laa.2003.09.010} {\bibfield  {journal} {\bibinfo  {journal} {Linear Algebra and its Applications}\ }\textbf {\bibinfo {volume} {378}},\ \bibinfo {pages} {135} (\bibinfo {year} {2004})}\BibitemShut {NoStop}%
\bibitem [{\citenamefont {Forest}\ and\ \citenamefont {Ruth}(1990)}]{FOREST1990105}%
  \BibitemOpen
  \bibfield  {author} {\bibinfo {author} {\bibfnamefont {E.}~\bibnamefont {Forest}}\ and\ \bibinfo {author} {\bibfnamefont {R.~D.}\ \bibnamefont {Ruth}},\ }\href {https://doi.org/https://doi.org/10.1016/0167-2789(90)90019-L} {\bibfield  {journal} {\bibinfo  {journal} {Physica D: Nonlinear Phenomena}\ }\textbf {\bibinfo {volume} {43}},\ \bibinfo {pages} {105} (\bibinfo {year} {1990})}\BibitemShut {NoStop}%
\bibitem [{\citenamefont {Campostrini}\ and\ \citenamefont {Rossi}(1990)}]{CAMPOSTRINI1990753}%
  \BibitemOpen
  \bibfield  {author} {\bibinfo {author} {\bibfnamefont {M.}~\bibnamefont {Campostrini}}\ and\ \bibinfo {author} {\bibfnamefont {P.}~\bibnamefont {Rossi}},\ }\href {https://doi.org/https://doi.org/10.1016/0550-3213(90)90081-N} {\bibfield  {journal} {\bibinfo  {journal} {Nuclear Physics B}\ }\textbf {\bibinfo {volume} {329}},\ \bibinfo {pages} {753} (\bibinfo {year} {1990})}\BibitemShut {NoStop}%
\bibitem [{\citenamefont {Forest}(2006)}]{Forest_2006}%
  \BibitemOpen
  \bibfield  {author} {\bibinfo {author} {\bibfnamefont {E.}~\bibnamefont {Forest}},\ }\href {https://doi.org/10.1088/0305-4470/39/19/S03} {\bibfield  {journal} {\bibinfo  {journal} {Journal of Physics A: Mathematical and General}\ }\textbf {\bibinfo {volume} {39}},\ \bibinfo {pages} {5321} (\bibinfo {year} {2006})}\BibitemShut {NoStop}%
\bibitem [{\citenamefont {Babbush}\ \emph {et~al.}(2018)\citenamefont {Babbush}, \citenamefont {Gidney}, \citenamefont {Berry}, \citenamefont {Wiebe}, \citenamefont {McClean}, \citenamefont {Paler}, \citenamefont {Fowler},\ and\ \citenamefont {Neven}}]{BabbushPRX18}%
  \BibitemOpen
  \bibfield  {author} {\bibinfo {author} {\bibfnamefont {R.}~\bibnamefont {Babbush}}, \bibinfo {author} {\bibfnamefont {C.}~\bibnamefont {Gidney}}, \bibinfo {author} {\bibfnamefont {D.~W.}\ \bibnamefont {Berry}}, \bibinfo {author} {\bibfnamefont {N.}~\bibnamefont {Wiebe}}, \bibinfo {author} {\bibfnamefont {J.}~\bibnamefont {McClean}}, \bibinfo {author} {\bibfnamefont {A.}~\bibnamefont {Paler}}, \bibinfo {author} {\bibfnamefont {A.}~\bibnamefont {Fowler}},\ and\ \bibinfo {author} {\bibfnamefont {H.}~\bibnamefont {Neven}},\ }\href {https://doi.org/10.1103/PhysRevX.8.041015} {\bibfield  {journal} {\bibinfo  {journal} {Physical Review X}\ }\textbf {\bibinfo {volume} {8}},\ \bibinfo {pages} {041015} (\bibinfo {year} {2018})}\BibitemShut {NoStop}%
\bibitem [{\citenamefont {Ostmeyer}(2023)}]{Ostmeyer_2023}%
  \BibitemOpen
  \bibfield  {author} {\bibinfo {author} {\bibfnamefont {J.}~\bibnamefont {Ostmeyer}},\ }\href {https://doi.org/10.1088/1751-8121/acde7a} {\bibfield  {journal} {\bibinfo  {journal} {Journal of Physics A: Mathematical and Theoretical}\ }\textbf {\bibinfo {volume} {56}},\ \bibinfo {pages} {285303} (\bibinfo {year} {2023})}\BibitemShut {NoStop}%
\bibitem [{\citenamefont {Blanes}\ and\ \citenamefont {Moan}(2002)}]{BLANES2002313}%
  \BibitemOpen
  \bibfield  {author} {\bibinfo {author} {\bibfnamefont {S.}~\bibnamefont {Blanes}}\ and\ \bibinfo {author} {\bibfnamefont {P.}~\bibnamefont {Moan}},\ }\href {https://doi.org/https://doi.org/10.1016/S0377-0427(01)00492-7} {\bibfield  {journal} {\bibinfo  {journal} {Journal of Computational and Applied Mathematics}\ }\textbf {\bibinfo {volume} {142}},\ \bibinfo {pages} {313} (\bibinfo {year} {2002})}\BibitemShut {NoStop}%
\bibitem [{\citenamefont {Alberdi}\ \emph {et~al.}(2019)\citenamefont {Alberdi}, \citenamefont {Antoñana}, \citenamefont {Makazaga},\ and\ \citenamefont {Murua}}]{alberdi2019algorithm}%
  \BibitemOpen
  \bibfield  {author} {\bibinfo {author} {\bibfnamefont {E.}~\bibnamefont {Alberdi}}, \bibinfo {author} {\bibfnamefont {M.}~\bibnamefont {Antoñana}}, \bibinfo {author} {\bibfnamefont {J.}~\bibnamefont {Makazaga}},\ and\ \bibinfo {author} {\bibfnamefont {A.}~\bibnamefont {Murua}},\ }\href {https://arxiv.org/abs/1909.07263} {\bibfield  {journal} {\bibinfo  {journal} {arXiv: 1909.07263}\ } (\bibinfo {year} {2019})}\BibitemShut {NoStop}%
\bibitem [{\citenamefont {Gidney}\ and\ \citenamefont {Ekerå}(2021)}]{Gidney2021}%
  \BibitemOpen
  \bibfield  {author} {\bibinfo {author} {\bibfnamefont {C.}~\bibnamefont {Gidney}}\ and\ \bibinfo {author} {\bibfnamefont {M.}~\bibnamefont {Ekerå}},\ }\href {https://doi.org/10.22331/q-2021-04-15-433} {\bibfield  {journal} {\bibinfo  {journal} {Quantum}\ }\textbf {\bibinfo {volume} {5}},\ \bibinfo {pages} {433} (\bibinfo {year} {2021})}\BibitemShut {NoStop}%
\bibitem [{\citenamefont {Low}\ \emph {et~al.}(2023)\citenamefont {Low}, \citenamefont {Su}, \citenamefont {Tong},\ and\ \citenamefont {Tran}}]{Low2022trotter}%
  \BibitemOpen
  \bibfield  {author} {\bibinfo {author} {\bibfnamefont {G.~H.}\ \bibnamefont {Low}}, \bibinfo {author} {\bibfnamefont {Y.}~\bibnamefont {Su}}, \bibinfo {author} {\bibfnamefont {Y.}~\bibnamefont {Tong}},\ and\ \bibinfo {author} {\bibfnamefont {M.~C.}\ \bibnamefont {Tran}},\ }\href {https://doi.org/10.1103/PRXQuantum.4.020323} {\bibfield  {journal} {\bibinfo  {journal} {PRX Quantum}\ }\textbf {\bibinfo {volume} {4}},\ \bibinfo {pages} {020323} (\bibinfo {year} {2023})}\BibitemShut {NoStop}%
\bibitem [{\citenamefont {Rubin}\ \emph {et~al.}(2024)\citenamefont {Rubin}, \citenamefont {Berry}, \citenamefont {Kononov}, \citenamefont {Malone}, \citenamefont {Khattar}, \citenamefont {White}, \citenamefont {Lee}, \citenamefont {Neven}, \citenamefont {Babbush},\ and\ \citenamefont {Baczewski}}]{rubin2023quantum}%
  \BibitemOpen
  \bibfield  {author} {\bibinfo {author} {\bibfnamefont {N.~C.}\ \bibnamefont {Rubin}}, \bibinfo {author} {\bibfnamefont {D.~W.}\ \bibnamefont {Berry}}, \bibinfo {author} {\bibfnamefont {A.}~\bibnamefont {Kononov}}, \bibinfo {author} {\bibfnamefont {F.~D.}\ \bibnamefont {Malone}}, \bibinfo {author} {\bibfnamefont {T.}~\bibnamefont {Khattar}}, \bibinfo {author} {\bibfnamefont {A.}~\bibnamefont {White}}, \bibinfo {author} {\bibfnamefont {J.}~\bibnamefont {Lee}}, \bibinfo {author} {\bibfnamefont {H.}~\bibnamefont {Neven}}, \bibinfo {author} {\bibfnamefont {R.}~\bibnamefont {Babbush}},\ and\ \bibinfo {author} {\bibfnamefont {A.~D.}\ \bibnamefont {Baczewski}},\ }\href {https://doi.org/10.1073/pnas.2317772121} {\bibfield  {journal} {\bibinfo  {journal} {Proceedings of the National Academy of Sciences}\ }\textbf {\bibinfo {volume} {121}},\ \bibinfo {pages} {e2317772121} (\bibinfo {year} {2024})}\BibitemShut {NoStop}%
\bibitem [{\citenamefont {Su}\ \emph {et~al.}(2021{\natexlab{b}})\citenamefont {Su}, \citenamefont {Berry}, \citenamefont {Wiebe}, \citenamefont {Rubin},\ and\ \citenamefont {Babbush}}]{SuPRXQuantum21}%
  \BibitemOpen
  \bibfield  {author} {\bibinfo {author} {\bibfnamefont {Y.}~\bibnamefont {Su}}, \bibinfo {author} {\bibfnamefont {D.~W.}\ \bibnamefont {Berry}}, \bibinfo {author} {\bibfnamefont {N.}~\bibnamefont {Wiebe}}, \bibinfo {author} {\bibfnamefont {N.}~\bibnamefont {Rubin}},\ and\ \bibinfo {author} {\bibfnamefont {R.}~\bibnamefont {Babbush}},\ }\href {https://doi.org/10.1103/PRXQuantum.2.040332} {\bibfield  {journal} {\bibinfo  {journal} {PRX Quantum}\ }\textbf {\bibinfo {volume} {2}},\ \bibinfo {pages} {040332} (\bibinfo {year} {2021}{\natexlab{b}})}\BibitemShut {NoStop}%
\bibitem [{\citenamefont {Cho}\ \emph {et~al.}(2024)\citenamefont {Cho}, \citenamefont {Berry},\ and\ \citenamefont {Hsieh}}]{Cho2024}%
  \BibitemOpen
  \bibfield  {author} {\bibinfo {author} {\bibfnamefont {C.-H.}\ \bibnamefont {Cho}}, \bibinfo {author} {\bibfnamefont {D.~W.}\ \bibnamefont {Berry}},\ and\ \bibinfo {author} {\bibfnamefont {M.-H.}\ \bibnamefont {Hsieh}},\ }\href {https://doi.org/10.1103/PhysRevA.109.062431} {\bibfield  {journal} {\bibinfo  {journal} {Physical Review A}\ }\textbf {\bibinfo {volume} {109}},\ \bibinfo {pages} {062431} (\bibinfo {year} {2024})}\BibitemShut {NoStop}%
\bibitem [{\citenamefont {{Van-Brunt}}\ and\ \citenamefont {{Visser}}(2016)}]{VanBrunt-BCH}%
  \BibitemOpen
  \bibfield  {author} {\bibinfo {author} {\bibfnamefont {A.}~\bibnamefont {{Van-Brunt}}}\ and\ \bibinfo {author} {\bibfnamefont {M.}~\bibnamefont {{Visser}}},\ }\href {https://doi.org/10.1063/1.4939929} {\bibfield  {journal} {\bibinfo  {journal} {Journal of Mathematical Physics}\ }\textbf {\bibinfo {volume} {57}},\ \bibinfo {eid} {023507} (\bibinfo {year} {2016})}\BibitemShut {NoStop}%
\bibitem [{\citenamefont {Duleba}\ and\ \citenamefont {Karcz-Duleba}(2020)}]{Duleba2019LieMonomials}%
  \BibitemOpen
  \bibfield  {author} {\bibinfo {author} {\bibfnamefont {I.}~\bibnamefont {Duleba}}\ and\ \bibinfo {author} {\bibfnamefont {I.}~\bibnamefont {Karcz-Duleba}},\ }in\ \href {https://doi.org/10.1007/978-3-030-45093-9_56} {\emph {\bibinfo {booktitle} {Computer Aided Systems Theory -- EUROCAST 2019}}},\ \bibinfo {editor} {edited by\ \bibinfo {editor} {\bibfnamefont {R.}~\bibnamefont {Moreno-D{\'i}az}}, \bibinfo {editor} {\bibfnamefont {F.}~\bibnamefont {Pichler}},\ and\ \bibinfo {editor} {\bibfnamefont {A.}~\bibnamefont {Quesada-Arencibia}}}\ (\bibinfo  {publisher} {Springer International Publishing},\ \bibinfo {address} {Cham},\ \bibinfo {year} {2020})\ pp.\ \bibinfo {pages} {465--473}\BibitemShut {NoStop}%
\end{thebibliography}

\appendix
\section{Background: Extending Yoshida's method to 10th order}\label{sec:yoshida-order10}

Here we give a pedagogical explanation of how to extend the method of Yoshida to obtain the equations for a 10th order integrator.
See also Ref.~\cite{Hairer2006} for an explanation of how to derive order conditions in terms of tree diagrams, or Ref.~\cite{Sofroniou2005integrators} for explicit 10th order conditions.
We also provide an introductory explanation of how the method is used for 6th order in \sec{yoshidas-method}.
The general principle used there was to provide an expression for $S^{(m)}(\tau)$ in Eq.~\eqref{eq:Yosh-order6} with the expression in the exponential given up to 6th order.
Then the coefficients of the multicommutators of operators needed to be made equal to zero in order to obtain a 6th order approximation.
Here we apply the same principle, except now we need to derive the terms up to 10th order.

In order to do this, we start by expressing $S_2(t)$ up to 10th order as
\begin{equation}
    S_2(t)=e^{t \alpha_1+t^3 \alpha_3+t^5 \alpha_5+t^7 \alpha_7+t^9 \alpha_9+\mathcal{O}(t^{11})}
\end{equation}
where we follow the notation from \cref{cor:symmetric_BCH} with $\alpha_j$ defined as commutators of operators.
We will then iteratively apply the symmetric BCH expansion for $Z$ in $e^Z = e^C e^D e^C$ in order to obtain an expression for $S^{(m)}(\tau)$ up to 10th order. 
It will be helpful to consider the following notation for commutators. For any operators $X_1, X_2,\cdots, X_L$ we define 
\begin{equation}
[X_1^{n_1},X_2^{n_1},\cdots,X_L^{n_L}]:=[\underbrace{X_1,\cdots,X_1}_\text{$n_1$ times },\underbrace{X_2,\cdots,X_2}_\text{$n_2$ times },\cdots,\underbrace{X_L,\cdots,X_L}_\text{$n_L$ times }],
\end{equation}
where the commutator on the right hand should be understood as in \cref{eq:nested_commutator}. To express the results
it will be useful to define the following commutators
\begin{align}
\beta_9 &= [\alpha_1,\alpha_1,\alpha_7], \\
\gamma^{(1)}_9&=[\alpha_1,\alpha_3,\alpha_5],\\
\gamma^{(2)}_9&=[\alpha_3,\alpha_1,\alpha_5],\\
\gamma^{(3)}_9&=[\alpha_5,\alpha_1,\alpha_3],\\
\delta^{(1)}_9& = [\alpha_1^4, \alpha_5],\\
\delta^{(2)}_9&=[\alpha_3,\alpha_1^3,\alpha_3], \\
\delta^{(3)}_9&=[\alpha_1,\alpha_3,\alpha_1^2,\alpha_3],\\
\epsilon_9 &= [\alpha_1^6,\alpha_3] .
\end{align}
We also use the notation of Yoshida \cite{Yoshida1990} for the commutators
\begin{align}
    \beta_5&=[\alpha_1,\alpha_1,\alpha_3] \equiv E_{5,2},\\
    \beta_7&=[\alpha_1,\alpha_1,\alpha_5] \equiv E_{7,3},\\
    \gamma_7&=[\alpha_3,\alpha_3,\alpha_1] \equiv -E_{7,2},\\
    \delta_7&=[\alpha_1,\alpha_1,\alpha_1,\alpha_1,\alpha_3] \equiv E_{7,4} ,
\end{align}
where we have indicated the equivalence to the multicommutators in the notation of Ref.~\cite{Blanes2006processing}.

Note that here we have only defined commutators of up to 7 of the $\alpha_j$ operators.
In the following we will be expanding expressions up to 9th order in $t$, because the symmetry of the formulae means that 10th order terms (and all even-order terms) must be zero.
The only way to obtain 9th order in $t$ with commutators of $t\alpha_1$, $t^3\alpha_3$, etc., is to have commutators of all $\alpha_1$ as $[\alpha_1,\alpha_1,\cdots,\alpha_1]$.
But that expression must be zero, because $[\alpha_1,\alpha_1]=0$.
Then when we express the expansion of $Z$ in $e^Z = e^C e^D e^C$, the first-order terms in $C$ and $D$ will be proportional to $\alpha_1$.
That means the only 9th order terms coming from commutators of 9 of $C,D$ would correspond to $[\alpha_1,\alpha_1,\cdots,\alpha_1]$ and therefore be zero.
This means we will only need commutators of up to 7 operators when expressing the expansion of $Z$ in $e^Z = e^C e^D e^C$ as well.

To obtain the coefficients multiplying the commutators with up to 7 operators in the symmetric BCH expansion for $e^Z = e^C e^D e^C$, we first use the algorithm defined in Section V of Ref.~\cite{VanBrunt-BCH}.
The algorithm in that work generates the scalar coefficients multiplying products of operators rather than their commutators, as we need here.
In order to derive the corresponding expressions in terms of commutators, we express the symmetric BCH expansion in the Ph.\ Hall basis, which is a basis for writing Lie monomials consisting of commutators of the generators of the Lie algebra.
For example, the elements up to 4th order in $C,D$ are the operators $C,D$ themselves, as well as
\begin{equation}
    [C,D], \quad [C,C,D], \quad [D,C,D], \quad [C,C,C,D], \quad [D,C,C,D], \quad [D,D,C,D].
\end{equation}
For a list of operators in this basis up to 7th order, see Table 1 in \cite{Duleba2019LieMonomials}.

We obtain the coefficients for the Ph.~Hall basis by solving the corresponding linear problem of changing from one basis to another.
As an example, consider the term with $3$ operators in the symmetric BCH expansion from \cref{cor:symmetric_BCH}, given as  $\alpha_3=\frac{1}{12}[Y,[Y,X]]-\frac{1}{24}[X,[X,Y]]$. We can also express the commutators as products by expanding out the commutators, which gives $\tilde{\alpha}_3=\frac{1}{24}\left( 2Y^2 X -4YXY - 2 XY^2 - X^2 Y + 2 XYX + Y X^2 \right)$. The algorithm in \cite{VanBrunt-BCH} outputs expressions with the commutators expanded out as in $\tilde{\alpha}_3$. In order to obtain the original expression $\alpha_3$, we write $\tilde{\alpha}_3=a[Y,[Y,X]]+b[X,[X,Y]]$ with $a,b\in\mathbb{R}$ and expand the commutators $[Y,[Y,X]]$ and $[X,[X,Y]]$. This gives several linear equations that can be written in terms of a matrix. By inverting this matrix, we obtain the coefficients $a$ and $b$.

By using this method, we obtain the symmetric BCH expansion for $e^Z = e^C e^D e^C$ up to 7th order as
\begin{align}\label{eq:symBCHo7}
    Z &= 2C + D + \frac{1}{6}([D,D,C]-[C,C,D]) \nn
    &\quad + \frac{7}{360}[C,C,C,C,D]-\frac{1}{360}[D,D,D,D,C]\nn
    &\quad + \frac{1}{90}[C,D,D,D,C]+\frac{1}{45}[D,C,C,C,D]\nn
    &\quad - \frac{1}{60}[C,C,D,D,C] + \frac{1}{30}[D,D,C,C,D]\nn
    &\quad -\frac{31}{15120}[C,C,C,C,C,C,D] - \frac{31}{5040} [D,C,C,C,C,C,D]\nn
    &\quad  - \frac{13}{1890} [D,D,C,C,C,C,D] - \frac{53}{15120} [D,D,D,C,C,C,D]\nn
    &\quad  -\frac{1}{1260} [D,D,D,D,C,C,D] - \frac{1}{15120} [D,D,D,D,D,C,D] + \mathcal{R}_{(9\leq)}.
\end{align}
Here $\mathcal{R}_{(9\leq)}$ is an infinite sum with commutators of an odd number of operators equal to or higher than $9$.
Now we prove the following Lemma which gives us the expansion up to 10th order of $S^{(m)}(\tau)$.
This will allow us to derive the equations for 10th order product formulae.
\begin{lemma}\label{lem:genera_Sm}
Using the definition for $S^{(m)}(\tau)$ as given
in \cref{eq:Yoshida_ansatz}, we have that for all $m\in\mathbb{N}$
\begin{align}\label{eq:genera_Sm}
    S^{(m)}(\tau)&=\exp\bigg\{\tau A_{1,m} \alpha_1 + \tau^3 A_{3,m}\alpha_3 + \tau^5( A_{5,m}\alpha_5 + B_{5,m} \beta_5 )\nonumber\\
    &\quad +\tau^7 (A_{7,m}\alpha_7 + B_{7,m} \beta_7 + C_{7,m}\gamma_7 + D_{7,m}\delta_7)\nonumber\\
    &\quad + \tau^9 (A_{9,m}\alpha_9 + B_{9,m}\beta_{9} + C^{(1)}_{9,m}\gamma^{(1)}_{9} + C^{(2)}_{9,m} \gamma^{(2)}_9 +  C^{(3)}_{9,m} \gamma^{(3)}_9\nonumber\\
    &\quad  + D^{(1)}_{9,m} \delta^{(1)}_9 + D^{(2)}_{9,m} \delta^{(2)}_9 + D^{(3)}_{9,m} \delta^{(3)}_9 + E_{9,m} \epsilon_9) + \order{\tau^{11}} \bigg\} ,
\end{align}
where the variables in upper case denote polynomials in the variables $(w_1,\cdots,w_m)$.
\end{lemma}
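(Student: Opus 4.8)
The plan is to prove \cref{lem:genera_Sm} by induction on $m$, carrying out exactly the strategy of \cref{sec:yoshidas-method} but now to ninth order in $\tau$. For the base case $m=0$ we have $S^{(0)}(\tau)=S_2(w_0\tau)$, so \cref{cor:symmetric_BCH} gives the exponent as $\sum_{i\ge1}w_0^{2i-1}\tau^{2i-1}\alpha_{2i-1}$; thus \eqref{eq:genera_Sm} holds with $A_{2i-1,0}=w_0^{2i-1}$ for $i=1,\dots,5$ and every other listed coefficient equal to $0$, all of which are polynomials in $w_0$. For the inductive step, write $C$ for the exponent of $S_2(w_{m+1}\tau)$, i.e.\ $C=\sum_{i=1}^{5}w_{m+1}^{2i-1}\tau^{2i-1}\alpha_{2i-1}+\order{\tau^{11}}$, and $D$ for the exponent of $S^{(m)}(\tau)$ supplied by the inductive hypothesis, so that
\begin{equation}
S^{(m+1)}(\tau)=S_2(w_{m+1}\tau)\,S^{(m)}(\tau)\,S_2(w_{m+1}\tau)=e^{C}e^{D}e^{C},
\end{equation}
and its logarithm is given by the symmetric BCH expansion \eqref{eq:symBCHo7}.

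\textbf{Truncation.} The next step is to argue that only finitely much of \eqref{eq:symBCHo7} matters. Since both $C$ and $D$ contain only odd powers of $\tau$ and begin at order $\tau$, a $k$-fold commutator of $C$'s and $D$'s is $\order{\tau^k}$ and has all $\tau$-exponents congruent to $k\bmod 2$; because \eqref{eq:symBCHo7} contains only odd $k$, the result is automatically odd in $\tau$, as the statement requires. The remainder $\mathcal R_{(9\le)}$ consists of $k$-fold commutators with $k\ge9$, and at order $\tau^9$ all nine factors of a $9$-fold commutator must contribute their leading $\tau$-term, which is proportional to $\alpha_1$ in every case, so those vanish since $[\alpha_1,\dots,\alpha_1]=0$; hence $\mathcal R_{(9\le)}$ contributes only $\order{\tau^{11}}$. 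Therefore everything through $\tau^9$ is obtained by substituting $C$ and $D$ into the terms of \eqref{eq:symBCHo7} with at most seven brackets. Because the coefficients of $C$ are monomials in $w_{m+1}$ and, by the inductive hypothesis, those of $D$ are polynomials in $w_1,\dots,w_m$, every coefficient produced is a polynomial in $w_1,\dots,w_{m+1}$; collecting like terms then yields explicit recursions for $A_{j,m+1},B_{j,m+1},\dots$ in terms of the order-$m$ data, generalising the recursions displayed for the 6th-order case.

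\textbf{Reduction to the stated spanning set.} Expanding those $\le7$-fold brackets multilinearly produces nested commutators of $\alpha_1,\alpha_3,\alpha_5,\alpha_7$ (and $\alpha_9$, from the $2C+D$ term) of total $\tau$-weight $1,3,5,7,9$, and one must check that each such commutator lies in the span of the commutators defined just above the statement of the lemma. As in \cref{sec:yoshidas-method}, the derivation never uses the particular form of the $\alpha_j$, so this reduces to the claim that the listed commutators span the weight-$1,3,5,7,9$ graded pieces of the free Lie algebra on generators $\alpha_1,\alpha_3,\alpha_5,\alpha_7,\alpha_9$. Running through the partitions of $9$ into parts from $\{1,3,5,7,9\}$ and applying the Witt dimension formula, the nonzero pieces have dimensions $1$ ($\alpha_9$), $1$ (shape $(7,1,1)$, giving $\beta_9$), $2$ (shape $(5,3,1)$, spanned by $\gamma_9^{(1)},\gamma_9^{(2)},\gamma_9^{(3)}$, which are overcomplete by one Jacobi relation), $1$ (shape $(5,1^4)$, giving $\delta_9^{(1)}$), $2$ (shape $(3^2,1^3)$, giving $\delta_9^{(2)},\delta_9^{(3)}$) and $1$ (shape $(3,1^6)$, giving $\epsilon_9$), while shapes $(3^3)$ and $(1^9)$ vanish identically; the analogous easier counts at weights $5$ and $7$ show $\{\alpha_5,\beta_5\}$ and $\{\alpha_7,\beta_7,\gamma_7,\delta_7\}$ are bases, matching \eqref{eq:genera_Sm}. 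Rewriting every nested commutator in the chosen spanning set via antisymmetry and the Jacobi identity then puts $\log S^{(m+1)}(\tau)$ in the form \eqref{eq:genera_Sm}, closing the induction.

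\textbf{Main obstacle.} The heavy part is this last step: expanding all the seven-fold and smaller nested commutators of $C$ and $D$, tracking every term through $\tau^9$, and performing the change of basis from raw nested commutators first to the Ph.\ Hall basis and then to the particular set $\{\alpha_9,\beta_9,\gamma_9^{(i)},\delta_9^{(i)},\epsilon_9,\dots\}$. This is done most reliably by machine: one first obtains \eqref{eq:symBCHo7} (and the corresponding $\alpha$-level expansion) with the algorithm of Ref.~\cite{VanBrunt-BCH}, then solves the linear system expressing the output in the Ph.\ Hall basis, exactly as sketched around \eqref{eq:symBCHo7}.
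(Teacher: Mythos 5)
Your proposal is correct, and it shares the paper's overall skeleton: induction on $m$, base case $S^{(0)}(\tau)=S_2(w_0\tau)$ via \cref{cor:symmetric_BCH}, the inductive step written as $e^Ce^De^C$ and expanded with the symmetric BCH formula \eqref{eq:symBCHo7}, and the same truncation argument that brackets of nine or more letters contribute only at $\order{\tau^{11}}$ because their $\tau^9$ parts are nested commutators of $\alpha_1$ alone. Where you genuinely diverge is in the key step. The paper proves that the order-$9$ terms have the claimed form by explicitly computing the $\tau^9$ part of every bracket appearing in \eqref{eq:symBCHo7} — $[D,D,C]_9$, $[C,C,D]_9$, the five-fold and seven-fold brackets — and exhibiting each as a combination of $\beta_9,\gamma_9^{(i)},\delta_9^{(i)},\epsilon_9$; you instead argue structurally that the $\tau^9$ coefficient is a Lie element of weight $9$ in the generators $\alpha_1,\alpha_3,\alpha_5,\alpha_7,\alpha_9$, and that the listed commutators span every multihomogeneous component of that weight, via the Witt/necklace dimension counts (which I checked are right: dimensions $1,1,2,1,2,1$ for the shapes $(9)$, $(7,1^2)$, $(5,3,1)$, $(5,1^4)$, $(3^2,1^3)$, $(3,1^6)$, with the single Jacobi relation $\gamma_9^{(2)}=\gamma_9^{(1)}+\gamma_9^{(3)}$ matching the paper's remark that $C^{(2)}_{9,m}=C^{(1)}_{9,m}+C^{(3)}_{9,m}$). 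Your route is cleaner for the lemma as stated (existence of the form with polynomial coefficients) and has the added value of certifying that the chosen commutator set is complete, something the paper only establishes implicitly through its computations. What the paper's brute-force route buys in exchange is the explicit recursions \eqref{eq:recursion_A9}--onwards for $A_{9,m},B_{9,m},\dots,E_{9,m}$, which are the actual payload needed afterwards to extract the $15$ order conditions; your argument defers that computation to machine algebra, which is acceptable for the lemma itself but would still have to be carried out to use the result.
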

\begin{proof}
We proceed by induction. First, note that the statement is true for the case $m=0$,
\begin{align}
    S^{(m=0)}(\tau)&=S_2(w_0 \tau),\\
    &=\exp{t w_0 \alpha_1+t^3 w_0^{3} \alpha_3+t^5 w_0^{5} \alpha_5+t^7 w_0^{7} \alpha_7+t^9 w_0^{9}\alpha_9+\mathcal{O}(t^{11})}.
\end{align}
This clearly has the form of \cref{eq:genera_Sm} by taking $A_{j,m=0}=w_0^j$ and all other scalar variables as $0$.

Assume now that \cref{eq:genera_Sm} is correct, we want to derive an expression for $S^{(m+1)}$. We then have $S^{(m+1)}(\tau)= S_2(w_{m+1} \tau)S^{(m)}( \tau)S_2(w_{m+1} \tau)$ and thus
\begin{align}\label{eq:recursion-order10}
    S_2(w_{m+1} \tau)S^{(m)}( \tau)S_2(w_{m+1} \tau) &= \exp \bigg\{ \tau w_{m+1} \alpha_1 + \tau^3 w_{m+1}^3 \alpha_3 +\tau^5 w_{m+1}^5 \alpha_5+ \tau^7 w_{m+1}^7 \alpha_7 + \tau^9 w_{m+1}^9 \alpha_9 +\mathcal{O}(\tau^{11}) \bigg\} \nonumber\\
    &\quad  \times \exp\bigg\{\tau A_{1,m} \alpha_1 + \tau^3 A_{3,m}\alpha_3 + \tau^5( A_{5,m}\alpha_5 + B_{5,m} \beta_5 )\nonumber\\
    &\quad +\tau^7 (A_{7,m}\alpha_7 + B_{7,m} \beta_7 + C_{7,m}\gamma_7 + D_{7,m}\delta_7)\nonumber\\
    &\quad + \tau^9 (A_{9,m}\alpha_9 + B_{9,m}\beta_{9} + C^{(1)}_{9,m}\gamma^{(1)}_{9} + C^{(2)}_{9,m} \gamma^{(2)}_9 +  C^{(3)}_{9,m} \gamma^{(3)}_9\nonumber\\
    &\quad  + D^{(1)}_{9,m} \delta^{(1)}_9 + D^{(2)}_{9,m} \delta^{(2)}_9 + D^{(3)}_{9,m} \delta^{(3)}_9 + E_{9,m} \epsilon_9) + \order{\tau^{11}} \bigg\} \nonumber\\
    &\quad \times \exp \bigg\{ \tau w_{m+1} \alpha_1 + \tau^3 w_{m+1}^3 \alpha_3 +\tau^5 w_{m+1}^5 \alpha_5+ \tau^7 w_{m+1}^7 \alpha_7 + \tau^9 w_{m+1}^9 \alpha_9 +\order{\tau^{11}} \bigg\}.
\end{align}
We compute the right-hand-side (RHS) of \cref{eq:recursion-order10} applying the symmetric BCH formula from Corollary \ref{cor:symmetric_BCH}. Writing the RHS as $e^C e^D e^C$, we have that
\begin{align}
    C &=\tau w_{m+1} \alpha_1 + \tau^3 w_{m+1}^3 \alpha_3 +\tau^5 w_{m+1}^5 \alpha_5+ \tau^7 w_{m+1}^7 \alpha_7 + \tau^9 w_{m+1}^9 \alpha_9 + \mathcal{O}(\tau^{11}) \\
    D &=\tau A_{1,m} \alpha_1 + \tau^3 A_{3,m}\alpha_3 + \tau^5( A_{5,m}\alpha_5 + B_{5,m} \beta_5 )
    +\tau^7 (A_{7,m}\alpha_7 + B_{7,m} \beta_7 + C_{7,m}\gamma_7 + D_{7,m}\delta_7)\nn
    & \quad + \tau^9 (A_{9,m}\alpha_9 + B_{9,m}\beta_{9} + C^{(1)}_{9,m}\gamma^{(1)}_{9} + C^{(2)}_{9,m} \gamma^{(2)}_9 +  C^{(3)}_{9,m} \gamma^{(3)}_9
     + D^{(1)}_{9,m} \delta^{(1)}_9 + D^{(2)}_{9,m} \delta^{(2)}_9 + D^{(3)}_{9,m} \delta^{(3)}_9 + E_{9,m} \epsilon_9) \nn & \quad + \order{\tau^{11}} .
\end{align}
We then compute the commutators of $C$ and $D$ that appear in the symmetric BCH formula, here we give the resulting 9th order operators after applying the commutators.
When we write $[C,D,\cdots,C]_9$, the subscript indicates that we are only keeping the $9$th order terms when expanding the commutator. We will explain in detail how to compute the commutator $\mathcal{C}=[D,D,C]_9$, the other commutators are computed in a similar way. Since we only need to consider terms of 9th order when computing $\mathcal{C}$, each term will have contributions from each operator inside $\mathcal{C}$ (in this case two operators $D$ and one $C$) which is comprised of odd numbers that sum up to $9$ such that the commutator is non-zero.
We then have that
\begin{align}
    [D,D,C]_9 &=\tau^9 (A^2_{1,m}w_{m+1}^7 -A_{1,m}A_{7,m}w_{m+1})[\alpha_1,\alpha_1,\alpha_7]\nonumber\\
    &\quad + \tau^9 A_{1,m}B_{7,m}w_{m+1}[\alpha_1,\beta_7,\alpha_1]\nonumber\\
    &\quad + \tau^9 A_{1,m}C_{7,m}w_{m+1} [\alpha_1,\gamma_7,\alpha_1]\nonumber\\
    &\quad + \tau^9 A_{1,m}D_{7,m}w_{m+1} [\alpha_1,\delta_7,\alpha_1]\nonumber\\
    &\quad + \tau^9 (A_{1,m}A_{3,m}w_{m+1}-A_{1,m}A_{5,m}w_{m+1}^3)[\alpha_1,\alpha_3,\alpha_5]\nonumber\\
    &\quad + \tau^9 A_{1,m}B_{5,m}w_{m+1}^3[\alpha_1,\beta_5,\alpha_3]\nonumber\\
    &\quad + \tau^9 (A_{3,m}A_{1,m}w_{m+1}^5-A_{3,m}A_{5,m}w_{m+1})[\alpha_3,\alpha_1,\alpha_5]\nonumber\\
    &\quad + \tau^9 A_{1,m}B_{5,m}w_{m+1}[\alpha_3,\beta_5,\alpha_1]\nonumber\\
    &\quad + \tau^9 (A_{5,m}A_{1,m}w_{m+1}^3 - A_{5,m}A_{3,m}w_{m+1})[\alpha_5,\alpha_1,\alpha_3]\nonumber\\
    &\quad + \tau^9 B_{5,m}A_{3,m}w_{m+1}[\beta_5,\alpha_3,\alpha_1].
\end{align}
Given how we have defined the commutator, we have then
\begin{align}
    [D,D,C]_9 &= \tau^9 (A_{1,m}^2 w_{m+1}^7 - A_{1,m}A_{7,m} w_{m+1}) \beta_9-\tau^9 A_{1,m}B_{7,m}w_{m+1}\delta^{(1)}_9 \nonumber\\
    &\quad + \tau^9 A_{1,m}C_{7,m}w_{m+1}\delta^{(3)}_9 -\tau^9 A_{1,m}D_{7,m}w_{m+1} \epsilon_9 \nonumber\\
    &\quad + \tau^9 (A_{1,m}A_{3,m} w_{m+1}^5 - A_{1,m}A_{5,m} w_1^3) \gamma^{(1)}_9 - \tau^9 A_{1,m}B_{5,m}w_{m+1}^3 \delta^{(3)}_9 \nonumber\\
    &\quad + \tau^9(A_{3,m}A_{1,m}w_{m+1}^5 - A_{3,m}A_{5,m}w_{m+1})\gamma^{(2)}_9 - \tau^9 A_{3,m}B_{5,m}w_{m+1}\delta^{(2)}_9 \nonumber\\
    &\quad + \tau^9 (A_{5,m}A_{1,m}w_{m+1}^3 - A_{5,m}A_{3,m}w_{m+1}^2)\gamma^{(3)}_9 \nn
    & \quad + \tau^9 (B_{5,m}A_{1,m}w_{m+1}^3 - B_{5,m}A_{3,m}w_{m+1})(\delta^{(2)}_9 - \delta^{(3)}_9) \\
    [C,C,D]_9 &= \tau^9 (w_{m+1}^2 A_{7,m} - w_{m+1}^8 A_{1,m})\beta_9 + \tau^9 w_{m+1}^2 B_{7,m} \delta^{(1)}_9\nonumber\\
    &\quad -\tau^9 w_{m+1}^2 C_{7,m}\delta^{(3)}_9 + \tau^9 w_{m+1}^2 D_{7,m} \epsilon_9 \nonumber\\
    &\quad + \tau^9 (w_{m+1}^4 A_{5,m} - w_{m+1}^6 A_{3,m}) \gamma^{(1)}_9 + \tau^9 w_{m+1}^4 B_{5,m} \delta^{(3)}_9 \nonumber\\
    &\quad + \tau^9 (w_{m+1}^4 A_{5,m} - w_{m+1}^8 A_{1,m})\gamma^{(2)}_9 + \tau^9 w_{m+1}^4 B_{5,m} \delta^{(2)}_9 \nonumber\\
    &\quad + \tau^9 (w_{m+1}^6 A_{3,m} - w_{m+1}^8 A_{1,m})\gamma^{(3)}_9 \\
    [C,C,C,C,D]_9 &= \tau^9 (w_{m+1}^4 A_{5,m} - w_{m+1}^8 A_{1,m}) \delta^{(1)}_9 + \tau^9 A_{1,m}^3 B_{5,m}w_{m+1} \epsilon_9  \nonumber\\
    &\quad  + \tau^9 (w_{m+1}^6 A_{3,m} - w_{m+1}^8 A_{1,m}) \delta^{(2)}_9 \nonumber\\
    &\quad + \tau^9 2(w_{m+1}^6 A_{3,m} - w_{m+1}^8 A_{1,m})\delta^{(3)}_9 \\
    [D,D,D,D,C]_9 &= \tau^9 (A_{1,m}^4 w_{m+1}^5 - A_{1,m}^3 A_{5,m} w_{m+1})\delta^{(1)}_9 - \tau^9 A_{1,m}^3 B_{5,m} w_{m+1} \epsilon_9 \nonumber\\
    &\quad  + \tau^9 (A_{3,m}A_{1,m}^3 w_{m+1}^3 - A_{3,m}^2 A_{1,m}^2 w_{m+1})\delta^{(2)}_9 \nonumber \\
    &\quad + \tau^9 2(A_{1,m}^3 A_{3,m}w_{m+1}^3 - A_{1,m}^2 A_{3,m}^2 w_{m+1})\delta^{(3)}_9  \\
    [C,D,D,D,C]_9 &= \tau^9 (w_{m+1}^6 A_{1,m}^3 - w_{m+1}^2 A_{1,m}^2 A_{5,m})\delta^{(1)}_9 - \tau^9 A_{1,m}^2 B_{5,m}w_{m+1}^2 \epsilon_9\nonumber \\
    &\quad + \tau^9 (w_{m+1}^6A_{1,m}^3 - w_{m+1}^4 A_{1,m}^2 A_{3,m})\delta^{(2)}_9 \nonumber\\
    &\quad + \tau^9 2(w_{m+1}^4 A_{3,m}A_{1,m}^2 - w_{m+1}^2A_{3,m}^2 A_{1,m})\delta^{(3)}_9\\
    [D,C,C,C,D]_9 &= \tau^9 (A_{1,m}A_{5,m}w_{m+1}^3 - A_{1,m}^2 w_{m+1}^7)\delta^{(1)}_9 + \tau^9 w_{m+1}^3 A_{1,m}B_{5,m} \epsilon_9 \nonumber\\
    &\quad + \tau^9 (A_{3,m}^2 w_{m+1}^3 - A_{1,m} A_{3,m} w_{m+1}^5)\delta^{(2)} \nonumber\\
    &\quad + \tau^9 2(A_{1,m}A_{3,m}w_{m+1}^5 - A_{1,m}^2 w_{m+1}^7)\delta^{(3)}_9 \\
    [C,C,D,D,C]_9 &= \tau^9 (w_{m+1}^7 A_{1,m}^2 - w_{m+1}^3 A_{1,m}A_{5,m})\delta^{(1)}_9 - w_{m+1}^3 A_{1,m}B_{5,m} \epsilon_9 \nonumber\\
    &\quad + \tau^9 (w_{m+1}^7 A_{1,m}^2 - w_{m+1}^5 A_{1,m}A_{3,m})\delta^{(2)}_9 \nonumber\\
    &\quad +\tau^9 (w_{m+1}^5 A_{1,m} A_{3,m} - w_{m+1}^3 A_{3,m}^2)\delta^{(3)}_9 \\
    [D,D,C,C,D]_9 &= \tau^9 (A_{1,m}^2 A_{5,m}w_{m+1}^2 - A_{1,m}^3 w_{m+1}^6)\delta^{(1)}_9 + A_{1,m}^2 B_{5,m}w_{m+1}^2 \epsilon_9\nonumber\\
    &\quad + (A_{3,m}^2 A_{1,m}w_{m+1}^2 - A_{3,m}A_{1,m}^2 w_{m+1}^4)\delta^{(2)}_9\nonumber\\
    &\quad + \tau^9 (A_{3,m}^2 A_{1,m}w_{m+1}^2 - A_{3,m} A_{1,m}^2 w_{m+1}^4)\delta^{(3)}_9 + \tau^9 (A_{1,m}^2 A_{3,m}w_{m+1}^4 - A_{1,m}^3 w_{m+1}^6)\delta^{(3)}_9\\
    [C,C,C,C,C,C,D]_9 &= \tau^9 (w_{m+1}^6 A_{3,m} - w_{m+1}^8 A_{1,m})\epsilon_9 \nonumber\\
    [D,C,C,C,C,C,D]_9 &= \tau^9 (w_{m+1}^5 A_{1,m}A_{3,m} - w_{m+1}^7 A_{1,m}^2)\epsilon_9\\
    [D,D,C,C,C,C,D]_9 &= \tau^9 (A_{1,m}^2 A_{3,m}w_{m+1}^4 - A_{1,m}^3 w_{m+1}^6)\epsilon_9 \\
    [D,D,D,C,C,C,D]_9 &= \tau^9 (A_{1,m}^3A_{3,m}w_{m+1}^3 - A_{1,m}^4 w_{m+1}^5)\epsilon_9\\
    [D,D,D,D,C,C,D]_9 &= \tau^9 (A_{1,m}^4A_{3,m}w_{m+1}^2 - A_{1,m}^5 w_{m+1}^4)\epsilon_9\\
    [D,D,D,D,D,C,D]_9 &= \tau^9 (A_{1,m}^5A_{3,m}w_{m+1}-A_{1,m}^6w_{m+1}^3)\epsilon_9
\end{align}
Note that all the terms previously computed have terms that can be written as in \cref{eq:genera_Sm}, thus proving that $S^{(m+1)}$ can also be written in this way.
\end{proof}

Having proved \cref{lem:genera_Sm}, we can now compute the polynomials in \cref{eq:genera_Sm}. The polynomials are obtained from the recursion in \cref{eq:recursion-order10}, the left hand side corresponds to $S^{(m+1)}$ and can can be written as a single exponential, the same is true of the right side which is written as a single exponential. We have then the following polynomials:
\begin{align}
    A_{9,m+1}&=A_{9,m}+2w_{m+1}^9\label{eq:recursion_A9} \\
    B_{9,m+1}&=B_{9,m}+\frac{1}{6}(A_{1,m}^2w_{m+1}^7-A_{1,m}A_{7,m}w_{m+1})\nonumber\\
    &\quad -\frac{1}{6}(A_{7,m}w_{m+1}^2 - A_{1,m}w_{m+1}^8)\label{eq:recursion_B9}\\
    C_{9,m+1}^{(1)}&=C_{9,m}^{(1)} + \frac{1}{6}(A_{3,m}^2 A_{1,m} w_{m+1}^5 - A_{1,m}A_{5,m}w_{m+1}^3)\nonumber\\
    &\quad - \frac{1}{6}(A_{5,m}w_{m+1}^4-A_{3,m}w_{m+1}^6)\label{eq:recursion_C91}\\
        C_{9,m+1}^{(2)}&=C_{9,m}^{(2)} + \frac{1}{6}(A_{3,m}^2 A_{1,m} w_{m+1}^5 - A_{3,m}A_{5,m}w_{m+1})\nonumber\\
    &\quad - \frac{1}{6}(A_{5,m}w_{m+1}^4-A_{1,m}w_{m+1}^8)\label{eq:recursion_C92}\\
        C_{9,m+1}^{(3)}&=C_{9,m}^{(3)} + \frac{1}{6}(A_{5,m} A_{1,m} w_{m+1}^3 - A_{3,m}A_{5,m}w_{m+1})\nonumber\\
    &\quad - \frac{1}{6}(A_{3,m}w_{m+1}^6-A_{1,m}w_{m+1}^8)\label{eq:recursion_C93}\\
     D_{9,m+1}^{(1)}&=D_{9,m}^{(1)} - \frac{1}{6}(A_{1,m}B_{7,m}w_{m+1} + w_{m+1}^2 B_{7,m})  \nonumber\\
    &\quad +\frac{7}{360}(A_{5,m}w_{m+1}^4-w_{m+1}^8 A_{1,m})\nonumber\\
    &\quad -\frac{1}{360}(A_{1,m}^4 w_{m+1}^5 - A_{1,m}^3 A_{5,m} w_{m+1})\nonumber\\
    &\quad +\frac{1}{90}(A_{1,m}^3 w_{m+1}^6 - A_{1,m}^2A_{5,m}w_{m+1}^2)\nonumber\\
    &\quad +\frac{1}{45}(A_{1,m}A_{5,m}w_{m+1}^3-A_{1,m}^2 w_{m+1}^7)\nonumber\\
    &\quad -\frac{1}{60}(A_{1,m}^2w_{m+1}^7-A_{1,m} A_{5,m} w_{m+1}^3)\nonumber\\
    &\quad +\frac{1}{30}(A_{1,m}^2A_{5,m} w_{m+1}^2-A_{1,m}^3w_{m+1}^6)\label{eq:recursion_D91}\\
    D_{9,m+1}^{(2)}&=D_{9,m}^{(2)} - \frac{1}{6}(A_{3,m}B_{5,m}w_{m+1} + w_{m+1}^4 B_{5,m})  \nonumber\\
    &\quad +\frac{7}{360}(A_{3,m}w_{m+1}^6-w_{m+1}^8 A_{1,m})\nonumber\\
    &\quad -\frac{1}{360}(A_{1,m}^3A_{3,m} w_{m+1}^3 - A_{1,m}^2 A_{3,m}^2 w_{m+1})\nonumber\\
    &\quad +\frac{1}{90}(A_{1,m}^3 w_{m+1}^6 - A_{1,m}^2A_{3,m}w_{m+1}^4)\nonumber\\
    &\quad +\frac{1}{45}(A_{3,m}^2w_{m+1}^3-A_{1,m}A_{3,m} w_{m+1}^5)\nonumber\\
    &\quad -\frac{1}{60}(A_{1,m}^2w_{m+1}^7-A_{1,m} A_{3,m} w_{m+1}^5)\nonumber\\
    &\quad +\frac{1}{30}(A_{3,m}^2A_{1,m} w_{m+1}^2-A_{1,m}^2 A_{3,m} w_{m+1}^4)\label{eq:recursion_D92}\\
     D_{9,m+1}^{(3)}&=D_{9,m}^{(3)} - \frac{1}{6}(A_{1,m}B_{5,m}w_{m+1}^3 + w_{m+1}^4 B_{5,m})  \nonumber\\
    &\quad +\frac{14}{360}(A_{3,m}w_{m+1}^6-w_{m+1}^8 A_{1,m})\nonumber\\
    &\quad -\frac{2}{360}(A_{1,m}^3A_{3,m} w_{m+1}^3 - A_{1,m}^2 A_{3,m}^2 w_{m+1})\nonumber\\
    &\quad +\frac{2}{90}(A_{1,m}^2 A_{3,m} w_{m+1}^4 - A_{1,m}A_{3,m}^2w_{m+1}^2)\nonumber\\
    &\quad +\frac{2}{45}(A_{3,m}A_{1,m}w_{m+1}^5-A_{1,m}^2w_{m+1}^7)\nonumber\\
    &\quad -\frac{1}{60}(A_{1,m}^2w_{m+1}^7-A_{1,m} A_{3,m} w_{m+1}^5)\nonumber\\
    &\quad +\frac{1}{30}(A_{3,m}^2A_{1,m} w_{m+1}^2-A_{1,m}^2 A_{3,m} w_{m+1}^4)\nonumber\\
    &\quad + \frac{1}{6}(A_{1,m}C_{7,m}w_{m+1} + w_{m+1}^2 c_{7,m})\nonumber\\
    &\quad -\frac{1}{60}(w_{m+1}^5 A_{1,m}A_{3,m} - w_{m+1}^3 A_{3,m}^2)\nonumber\\
    &\quad +\frac{1}{30}(A_{1,m}^2 A_{3,m} w_{m+1}^4 - A_{1,m}^3 w_{m+1}^6)\nonumber\\
    &\quad -\frac{1}{6}(B_{5,m}A_{1,m}w_{m+1}^3 - B_{5,m}A_{3,m}w_{m+1})\label{eq:recursion_D93}\\
    E_{9,m+1} &= E_{9,m} - \frac{1}{6}(A_{1,m}D_{7,m} - w_{m+1}^2 D_{7,m})\nonumber\\
    &\quad +\frac{7}{360}w_{m+1}^4 B_{5,m}\nonumber\\
    &\quad +\frac{1}{360}A_{1,m}^3 B_{5,m} w_{m+1}\nonumber\\
    &\quad -\frac{1}{90}A_{1,m}^2 B_{5,m} w_{m+1}^2\nonumber\\
    &\quad +\frac{1}{45}A_{1,m} B_{5,m} w_{m+1}^3\nonumber\\
    &\quad +\frac{1}{60}A_{1,m}B_{5,m} w_{m+1}^3\nonumber\\
    &\quad +\frac{1}{30} A_{1,m}^2B_{5,m}w_{m+1}^2\nonumber\\
    &\quad -\frac{31}{15120}(w_{m+1}^6 A_{3,m}-w_{m+1}^8A_{1,m})\nonumber\\
    &\quad -\frac{31}{5040}(w_{m+1}^5A_{1,m}A_{3,m}-w_{m+1}^7A_{1,m}^2)\nonumber\\
    &\quad -\frac{13}{1890}(A_{1,m}^2A_{3,m}w_{m+1}^4 - A_{1,m}^3w_{m+1}^6)\nonumber\\
    &\quad -\frac{53}{15120}(A_{1,m}^3A_{3,m}w_{m+1}^3-A_{1,m}^4w_{m+1}^5)\nonumber\\
    &\quad -\frac{1}{1260}(A_{1,m}^4A_{3,m}w_{m+1}^2 - A_{1,m}^5w_{m+1}^4)\nonumber\\
    &\quad -\frac{1}{15120}(A_{1,m}^5A_{3,m}w_{m+1}-A_{1,m}^6w_{m+1}^3).
\end{align}
We obtain the polynomial equations for the tenth order product formula by imposing that $A_{1,m}=1$ and all other terms are equal to zero. 
Because $C^{(2)}_{9,m} = C^{(1)}_{9,m}+C^{(3)}_{9,m}$, one equation is eliminated, there are 15 equations to solve.

\section{Method for determining processors}\label{app:processors}
A processed formula is composed of two elements: a kernel, \(\kernel(t)\), and a processor, \(\processor(t)\). The effective order \(k\) captures up to which order in \(t\) the full product formula, including the processor, reproduces the the target dynamics, \(\processor(t) \kernel(t) \processor(t)^{-1} = e^{(X + Y)t} + \order{t^{k+1}}\).
In this work, we use processors that are constructed with the same procedure as Ref.~\cite{Blanes2006processing}. This type of processors are products of \(S_2(\tau w_i)\) arranged as \(Q^{(m)}(\tau) Q^{(m)}(-\tau)\), where \(Q^{(m)}(\tau) = S_2(\tau w_m) S_2(\tau w_{m-1}) \cdots S_2(\tau w_{0})\).
In the language of Ref.~\cite{Blanes2006processing}, \( Q^{(m)}(\pm \tau) \) is an element of the group of integrators \(\mathcal G_3\). The same reference gives a basis for the generating algebra, which we used to perform the following calculations. The basis can be found in Table 2 of Ref.~\cite{Blanes2006processing}, and we use the same naming scheme for the algebra elements.

We can obtain recursive formulas for \(Q^{(m)}(\tau) = S_2(\tau w_{m}) Q^{(m-1)}(\tau)\) via successive applications of the BCH formula (as opposed to the symmetric BCH formula in the case of the kernel). After taking the product \(Q^{(m)}(\tau) Q^{(m)}(-\tau)\), the processor simplifies  compared to \(Q^{(m)}\). In particular, the logarithm of \(\processor(\tau)\) is zero up to terms of order \(t^3\). 
We illustrate the procedure that gives an iterative expression for the processor up to order 8, omitting the calculations. First, we write the product \(S_2(\tau w_1) S_2(\tau w_0)\) in the \(E_{i,j}\)-basis. We have, for the logarithm of \(Q^{(1)}(\tau) = S_2(w_1 \tau) S_2(w_0 \tau)\),
\begin{align}
\log(Q^{(1)}(\tau)) &= 
  p_{1,1}^{(1)} Y_1 t 
 + p_{3,1}^{(1)} Y_3 t^3 
 + p_{4,1}^{(1)} E_{4,1} t^4 
 + \left( p_{5,1}^{(1)} E_{5,1} t^5 
 + p_{5,2}^{(1)} E_{5,2} \right)
 + \left( p_{6,1}^{(1)} E_{6,1}
 + p_{6,2}^{(1)} E_{6,2} \right) t^6
+ \order{t^7} \,,
\end{align}
where each coefficient is a function of \(w_0,w_1\). Using an inductive argument analogous to the derivation in the case of the symmetric BCH formula, this expression will give a recursive expression for \(Q^{(m)}(\tau) = S_2(\tau w_{m})Q^{(m-1)}(\tau)\), whose logarithm is an expansion in terms of \(p^{(m)}_{i,j}(w_0, \dots, w_m)\). The resulting expression is identical to the one above, with the superscripts \((1)\) replaced by \((m)\), and each coefficient is a polynomial of \(w_0,\dots, w_m\). In fact, it is not necessary to compute all the \(p\)-coefficients to find \(\processor^{(m)} (\tau) \). For example, only three of them (out of seven) are necessary to find the processor up to order \(6\), since 
\begin{align}
    \log(\processor^{(m)}(\tau)) =
    2 p^{(m)}_{4,1} E_{4,1} t^4 + p^{(m)}_{1,1} p^{(m)}_{4,1} E_{5,3} t^5
    + \frac{1}{3} \left( \left(p^{(m)}_{1,1}\right)^2 p^{(m)}_{4,1} E_{6,1}
    + 6 p^{(m)}_{6,2} E_{6,2} \right) t^6 + \order{t^7}
    \,.
\end{align}
Similar cancellations occur at any order. In our simulations, terms up to eighth order are required. The seventh-order term in the expansion is (omitting the dependency on \(m\))
\begin{align}
    \frac{1}{2}p_{3,1} p_{4,1} E_{7,2} +
    p_{1,1} p_{6,1} E_{7,3} +
    \left( \frac{1}{12}p_{1,1}^3 p_{4,1} + p_{1,1} p_{6,2}\right)E_{7,4}
    \,,
\end{align}
while the eighth-order term is
\begin{align}
\frac{1}{3} E_{8,4} p_{1,1}^2 p_{6,1} 
+ 2 E_{8,1} p_{8,1} 
+ 2 E_{8,2} p_{8,2} 
+ \frac{1}{2} E_{8,3} \left(p_{1,1} p_{3,1} p_{4,1} - p_{3,1} p_{5,2} + 4 p_{8,3}\right) &
\nonumber \\
+ 2 E_{8,4} p_{8,4} 
+ E_{8,5} \left(\frac{1}{60} P_{1,1}^4 p_{4,1} 
+ \frac{1}{3} p_{1,1}^2 p_{6,2} + 2 p_{8,5}\right)
\,. &
\end{align}
Note that not all the coefficients that appear in the expansion of \(Q^{(m)}(\tau)\) are necessary.

\section{Error constants for fermionic Hamiltonians with 4 orbitals}\label{sec:fermHam_d4}

In this Appendix we provide the average error constant $\omega$ for the eigenvalue error for the evolution of fermionic Hamiltonians in the case that the number of orbitals is $d=4$. We find that the $\omega$ obtained are close to those we obtain when $d=6$.
Note that our product formulae for 8th order still provide the best performance (albeit with similar performance between the worst of our three product formulae and the best from prior work).

\begin{table}[H]
\centering
\begin{tabular}{|c|c|c|c|c|c|} 
\hline
label & $M$ & processing & reference  & $\omega$   & $M\omega^{1/k}$  \\ 
\hline
PPBCM6m9 & $9$ & Y & $P_9 6$ Table 5 of \cite{Blanes2006processing} & $2.9\times 10^{-9}$ & $0.34$ \\
PPBCM6m6 & $13$ & Y & $P_{13}6$ in Table 6 of \cite{Blanes2006processing} & $1.4\times 10^{-9}$ & $0.43$ \\
BCE6m10 & $20$ & Y & $\psi_{10}^{[6]}$ Table 8 of \cite{blanes2024families} & $3.3\times 10^{-11}$ & $0.36$ \\
\hline
\end{tabular}
\caption{Comparison of constant factors $\omega$ for a selection of the lowest-error product formulae for 6th order. We generate $1000$ random Hamiltonians with $d=4$ orbitals as in \cref{eq:fermion-ham} and compute the average $\omega$.}
\label{tab:factor6-chemistry-eig4}
\end{table}

\begin{table}[H]
\centering
\begin{tabular}{|c|c|c|c|c|c|} 
\hline
label & $M$ & processing & reference  & $\omega$   & $M\omega^{1/k}$  \\ 
\hline
SS8s19 & 19 & N & Section 4.3 of \cite{Sofroniou2005integrators} &   $3.4\times 10^{-11}$                          &      $0.93 $                                  \\
PP8s13 & 13 & Y & $P_{13}8$ in Table 6 of \cite{Blanes2006processing} &   $4.2\times 10^{-10}$                           &      $0.87$                                  \\
Y8m10 & 21 & N & \cref{tab:order8-solm10} (our new result) &  $8.7\times 10^{-12}$                            &     $0.87$                                   \\
Y8m10b & 21 & N & \cref{tab:order8-solm10} (our new result) &    $1.5\times 10^{-12}$                          &    $0.68$                                \\
YP8m8 & 17 & Y & \cref{tab:process} (our new result) &   $2.3 \times 10^{-12}$                           &      $0.59$                                  \\
\hline
\end{tabular}
\caption{Comparison of constant factors $\omega$ for a selection of the lowest-error product formulae for 8th order. We generate $1000$ random Hamiltonians with $d=4$ orbitals as in \cref{eq:fermion-ham} and compute the average $\omega$.}
\label{tab:factor8-chemistry-eig4}
\end{table}

\begin{table}[H]
\centering
\begin{tabular}{|c|c|c|c|c|c|} 
\hline
label & $M$ & processing & reference  & $\omega$   & $M\omega^{1/k}$  \\ 
\hline
SS10s35 & $35$ & N & Section 4.4 of \cite{Sofroniou2005integrators} &   $3.0\times 10^{-15}$                          &      $1.24$                                  \\
PP10s23 & $23$ & Y & $P_{23}10$ in Table 6 of \cite{Blanes2006processing} &     $1.5\times 10^{-12}$                        &          1.51                              \\
Y10m17 & $35$ & N & \cref{tab:order10-solm17} &     $2.5\times 10^{-14}$                        &          1.53                              \\
\hline
\end{tabular}
\caption{Comparison of constant factors $\omega$ for a selection of the lowest-error product formulae for 10th order. We generate $1000$ random Hamiltonians with $d=4$ orbitals as in \cref{eq:fermion-ham} and compute the average $\omega$.}
\label{tab:factor10-chemistry-eig4}
\end{table}

\end{document}